%% file: main.tex
\documentclass[letterpaper,onecolumn,draftclsnofoot]{IEEEtran}
\usepackage[utf8]{inputenc} 
\usepackage[numbers,compress]{natbib}
\usepackage[cmex10]{amsmath}
\usepackage{amsfonts,mathrsfs,mathdots,mathtools,amsthm,amssymb,centernot}
\usepackage[bb=ams]{mathalfa}
\usepackage{algorithmic}
\usepackage{graphicx}
\usepackage{textcomp}
\usepackage[dvipsnames]{xcolor}
\usepackage{tikz}
\usepackage{pgfplots}
\usepackage{subcaption}
\usepackage{dsfont}
\usepackage{pifont}
\usepackage{siunitx}
\usepackage{comment}
\usepackage{pgfplotstable}
\usepackage[ruled]{algorithm2e}
\usepackage{color,soul}
\usepackage{bm}	
\usepackage[inline]{enumitem}
\setlist[enumerate,1]{label = \arabic*.,ref = \arabic*}
\usepackage{nicefrac}
\usepackage{subcaption}
\usepackage{colortbl}
\usepackage{booktabs}
\usepackage{diagbox}
\usepackage{url}
\usepackage{ifthen}
\usepackage{balance}
\usepackage{xfrac}
\usepackage{dirtytalk}

\usepackage{hyperref}  
\hypersetup{colorlinks,
            linkcolor=blue,
            citecolor=blue,
            urlcolor=blue,
            anchorcolor=blue,
            filecolor=blue,
            linktocpage,
            plainpages=false,
            breaklinks=true}

\input{macros}
\begin{document}
\title{Contraction and Statistical Inference under Privacy for Uniformly Bounded Distributions} 

\author{Leonhard~Grosse,~\IEEEmembership{Member,~IEEE,}
        Sara~Saeidian,~\IEEEmembership{Member,~IEEE,}
        Tobias~J.~Oechtering,~\IEEEmembership{Senior~Member,~IEEE}
        Mikael~Skoglund,~\IEEEmembership{Fellow,~IEEE,}
\thanks{Parts of this article have been presented at the 2026 IEEE International Symposium on Information Theory.}        
}

\maketitle

\begin{abstract}
We investigate $c$-interior \emph{pointwise maximal leakage} (PML) as a tool for contraction analyses and disclosure control. Based on the strong adversarial threat models from maximal leakage, $c$-interior PML generalizes \emph{local differential privacy} (LDP) to data-generating distributions with densities uniformly bounded away from zero by $c>0$. Viewing $c$-interior PML as an algebraic constraint on a kernel yields more flexible (and often tighter) contraction analyses than standard LDP. We provide tight bounds on the Dobrushin coefficient, and bound the contraction coefficient of the Hockeystick-divergence. We further derive strong data processing inequalities on  $f$-divergences under $c$-interior PML constraints when the input distributions to the divergence are restricted to be in the $c$-interior. These results extend beyond the regime of pure LDP to cover a larger class of kernels, including, e.g., arbitrary stochastic matrices. We apply the results to minimax theory and provide asymptotically optimal strategies under $c$-interior PML constraints for binary hypothesis testing and mean estimation. The results show that disclosure control with PML allows analysts to reason about systems in a more differentiated manner: For example, it allows us to quantify the privacy leakage of deterministic systems, and can give precise adversarial guarantees with respect to arbitrary distributional assumptions. Interestingly, a recurring theme in the disclosure analyses is that if the privacy problem is relatively regular (if the density bound $c$ is large), private inference can be possible without incurring any additional cost in terms of sample complexity.   
\end{abstract}

\begin{IEEEkeywords}
Contraction, strong data processing inequalities, minimax risk, local privacy, pointwise maximal leakage
\end{IEEEkeywords}

\setcounter{tocdepth}{2}
\tableofcontents

\newpage

\section{Introduction}
\input{sections/intro}

\section{Preliminaries}
\label{sec:background}
\input{sections/background}

\section{Dobrushin Coefficients and PML Privacy Constraints}
\label{sec:dobrushin}
\input{sections/dobrushin}
\section{Contraction of $E_\gamma$-Divergence under PML Constraints}
\label{sec:E_gamma}
\input{sections/E_gamma}

\section{Strong Data Processing Inequalities with Privacy for Uniformly Bounded Distributions}
\label{sec:sdpi}
\input{sections/sdpi}

\section{Applications}
\label{sec:minimax}
\input{sections/minimax}

\subsection{Other Applications}
\label{sec:otherapplications}
\input{sections/otherapplications}

\section{Conclusions}
\input{sections/conclusions}

\appendices
\input{sections/appendix}

\bibliographystyle{IEEEtranN}
\footnotesize
\balance
\bibliography{main}

\end{document}

%% file: macros.tex
\DeclareMathOperator*{\esssup}{ess\,sup}

\def\supp{\textnormal{supp}}

\theoremstyle{definition}
\newtheorem{definition}{Definition}
\newtheorem{lemma}{Lemma}
\newtheorem{theorem}{Theorem}

\newtheorem{corollary}{Corollary}
\newtheorem{proposition}{Proposition}
\newtheorem{example}{Example}
\newtheorem{remark}{Remark}
\newtheorem{claim}{Claim}

\newcommand{\bR}{\ensuremath{\mathbb{R}}}

\newcommand{\cX}{\ensuremath{\mathcal{X}}}
\newcommand{\cY}{\ensuremath{\mathcal{Y}}}

\newcommand{\ind}{\ensuremath{\mathbf{1}}}

\mathtoolsset{showonlyrefs}
\allowdisplaybreaks

%% file: sections/intro.tex
Performing statistical tasks subject to privacy constraints usually results in increased sample complexity \cite{duchi2013local,asoodeh2024contraction,duchi2024right,dinur2003revealing,dwork2010difficulties}. Commonly, privacy in these settings is measured by \emph{differential privacy} (DP) \cite{DPoriginalpaper} or \emph{local differential privacy} (LDP) \cite{duchi2013LDPminmaxDEF}. In the local setting a (discrete) kernel $\mathsf K$ satisfies $(\alpha,\delta)$-local differential privacy for private data $X$ and privatized data $Y$, if,
\begin{equation}
\label{eq:LDPdef}
    \mathsf K(y|x) \leq e^\alpha \mathsf K(y|x')+\delta \quad \forall x,x'\in\mathcal X,\,\forall y\in \mathcal Y.
\end{equation}
The measure has found extensive application in practical implementations in recent years (see, e.g., \cite{googleRAPPOR,appleDP,ghazi_et_al:LIPIcs.ITCS.2025.53,abowd2018us,chen2025scalableprivatepartitionselection}). However, especially in the local setting, various works show that the effect of privacy constraints on estimation tasks can be quite drastic \cite{duchi2013local,duchi2024right,asoodeh2024contraction}, e.g., reducing the effective sample complexity from $n$ to $\alpha^2 n$ when $\alpha$ is small. Further, and likely due to this drastic decrease in sample efficiency, many practical implementations resort to choosing large privacy parameters \cite{desfontainesblog20211001}, resulting in weak and difficult to interpret privacy protection for users \cite{dwork2019exposeyourepsilons}. This issue of balancing good privacy and utility seems to be fundamental in some important practical settings \cite{ertan2026fundamentallimitationsfavorableprivacyutility}. Together with some issues with, e.g., correlated data \cite{saeidian2024evaluating,saeidian2023inferential}, this motivates us to move beyond the scope of the definition of differential privacy to explore the properties of alternative (but related) privacy measures.

It has recently been observed that LDP can be derived in the framework of \emph{pointwise maximal leakage} (PML) \cite{saeidian2023pointwise}. PML is an operational privacy measure based on adversarial threat models, and measures the inference risk incurred by releasing (possibly privatized) data with respect to the wide range of adversaries considered in its formulation. It was shown in \cite[Theorem 14]{IssaMaxL} that guaranteeing LDP is equivalent to protecting data from the adversaries in the PML framework for \emph{any} possible data-generating distribution, where the worst-case adversarial gain is attained when the data is distributed according to a point density (a distribution containing zero-probability outcomes). While this is good news for robustness, it also provides an underlying reason for the conservative nature of LDP. In fact, taking into account knowledge about the data-generating distribution can often significantly improve the privacy-utility tradeoff in the PML framework by allowing for more bespoke mechanism design \cite{grosse2025privacy,10646583}.

From a more theoretical standpoint, the definition of $\alpha$-LDP in \eqref{eq:LDPdef} also exhibits a somewhat converse issue when used as a mathematical tool to describe stochastic systems: As an example, consider a kernel in the form of a stochastic matrix that contains zero elements. From \eqref{eq:LDPdef}, it becomes clear that any such kernel can only be described by $(\alpha,\delta)$-LDP with nonzero $\delta$ (that is, \emph{pure} LDP cannot hold). This is intentional: if we want to make deterministic disclosure of an input realization entirely impossible under \emph{any} input distribution, even a single zero element in the stochastic matrix is to be prohibited. However, this can lead to somewhat unintuitive behaviors in the broader perspective. Consider the two kernels,
\begin{equation}
   \mathsf K_1= \begin{bmatrix}
        \frac{1}{3} & \frac{1}{3} & \frac{1}{3} & 0 \\
        \frac{1}{3}& \frac{1}{3} &0 & \frac{1}{3} \\
        \frac{1}{3} & 0 & \frac{1}{3} & \frac{1}{3} \\
        0 & \frac{1}{3} & \frac{1}{3} & \frac{1}{3}
    \end{bmatrix},
    \quad \mathsf K_2 = \begin{bmatrix}
        1 & 0 & 0 & 0 \\
        0 & 1 & 0 & 0 \\
        0 & 0 & 1 & 0 \\
        0 & 0 & 0 & 1
    \end{bmatrix}.
\end{equation}
Since there exists $y,x,x'$ for which $\mathsf K_1(y|x) = 0$ and $\mathsf K_1(y|x')=\nicefrac{1}{3}$, $\mathsf K_1$ cannot satisfy any finite pure LDP guarantee. This  renders the kernels $\mathsf K_1$ and $\mathsf K_2$ \emph{equivalent} through the lens of pure LDP. While this is consistent with the privacy quantification of LDP that prohibits deterministic disclosure of any input, it is reasonable to argue the $\mathsf K_1$ and $\mathsf K_2$ constitute significantly different stochastic systems. As an algebraic analysis tool, LDP therefore falls short of characterizing the systems in this example.\footnote{One might assume that this is remedied by using $(\varepsilon,\delta)$-LDP (approximate LDP) instead. However, we show in Section \ref{sec:otherapplications} that even $(\varepsilon,\delta)$-LDP has significant difficulties in distinguishing systems as in the above example. More specifically, the characterization via $(\varepsilon,\delta)$-LDP for, e.g., stochastic matrices with zero entries will often collapse to a $(0,\delta)$-LDP guarantee. Further, it has been shown in, e.g., \cite{bun2019heavy,zamanlooy2024mathrm}, that $(\varepsilon,\delta)$-LDP can always be reduced to $\varepsilon$-LDP. Hence, contrary to the centralized setting, the introduction of an additional slack parameter in approximate LDP offers no advantage in terms of utility in the local model.}
Further, while $\mathsf K_2$ offers no privacy compared to simply releasing the realization of the secret $X$ directly, it is entirely reasonable to argue that $\mathsf K_1$ offers a level of privacy protection by significantly randomizing the input realizations of $X$. Still, both $\mathsf K_1$ and $\mathsf K_2$ are equivalently \say{non-private} in the framework of LDP. 

 In this work, we present the framework of PML as a flexible tool for the analysis of stochastic systems. From a privacy standpoint, we move away from the worst-case distributional assumptions in LDP, and examine the effect of privacy constraints on estimation and testing when the densities $p_X$ have a degree of regularity, that is, when there exists a positive constant $c>0$ such that $\inf_x p_X(x) \geq c$. From an analytical standpoint, we present refined contraction analyses of kernels for various divergence measures via the PML privacy measure. In a contraction analysis, we aim to answer the following question: For any two distributions, what can we say about the relation between input and output divergences when these two distributions are passed through a stochastic system? This question has been extensively investigated in the information theory literature (see \cite[Chapter 33]{Polyanskiy_Wu_2025} for a somewhat detailed introduction). In the privacy domain, recent works show that contraction analyses are highly useful tools when deriving privacy guarantees for iterative algorithms \cite{asoodeh2020privacy,asoodeh2024privacy,feldman2018privacy}, as well as for establishing fundamental bounds on statistical estimation tasks \cite{duchi2013local,duchi2024right,asoodeh2024contraction}. See Section \ref{sec:background:subsec:privacyandcontraction} on related works for more details.
 
\subsection{Our Contributions}
The goal of this work is to highlight two different aspects of PML privacy. In the first half of this work, we argue that a notion of privacy which we call \emph{$c$-interior PML} constitutes a powerful analytical tool for contraction analyses. In short, $c$-interior PML limits the leakage in the PML sense by a constant $\varepsilon> 0$ for all distributions in the $c$-interior of the probability simplex $\mathcal P_c = \{P_X\in\mathcal P(\mathcal X): P_X \geq c\}$. In the latter half, we apply these contraction results to privacy assessment and disclosure control of hypothesis testing and estimation procedures. The results show that statistical inference can be done privately at no or very low cost if a reasonable degree of regularity in form of the parameter $c$ is guaranteed. That is, if the minimum mass/density is bounded away from zero by a sufficiently large constant, relatively strong protection against the adversarial attacks considered in the PML framework is achieved by estimation procedures that are optimal even in the non-private setting. Even when $c$ is not large, or the privacy guarantee is strict, assuming a lower bound on the densities can enable improved estimation compared to the worst-case distributional assumptions in LDP. We now provide a more detailed summary of the results in this work.

Section \ref{sec:background} lays out important definitions and previous results, as well as a summary of related works most relevant to this one. We begin Section \ref{sec:dobrushin} by providing conversion rules between $c$-interior PML and pure LDP in Lemma \ref{lem:PMLimpliesLDP}. Importantly, while we show that conversions exists, we highlight that these conversions are loose. Further, they require more than just a parameter transformation; there are many kernels that do not satisfy \emph{any} finite LDP guarantee, while at the same time satisfying an infinite range of $c$-interior PML guarantees with finite $\varepsilon$ and $c$. Therefore, the presented conversions require a more careful technical setup, and often work on transformed kernels. These results allow us to transfer existing results from the LDP literature into the PML framework. 

Theorem \ref{thm:TVcontraction} constitutes the main technical result of this work: We show that if the PML of a kernel $\mathsf K$ acting on the space $\mathcal X$ is bounded from above by $\varepsilon$ for any distribution in the $c$-interior of the simplex, then the total variation distance between the output marginals $P_X\mathsf K$ and $Q_X\mathsf K$ induced by the input distributions $P_X$ and $Q_X$ satisfies,
\begin{equation}
    \text{TV}(P_X\mathsf K||Q_X\mathsf K) \leq \underbrace{\min\left\{1\,,\frac{e^\varepsilon-1}{e^\varepsilon(1-c\mu_X(\mathcal X))+1}\right\}}_{\coloneqq \eta_\text{TV}(\varepsilon)}\text{TV}(P_X||Q_X),
\end{equation}
where $\mu_X$ denotes the reference measure of the underlying space (e.g. the counting measure in the discrete setting). We also show that this bound is tight (and improves upon existing results) both when $c$-interior PML is only used as an algebraic property ($P_X$, $Q_X$ arbitrary distributions), and when $P_X$ and $Q_X$ are selected only from $\mathcal P_c$. We further extend the results and proof technique to the Hockeystick-divergence $E_\gamma$ in Theorem \ref{thm:E_gamma_contraction} of Section \ref{sec:E_gamma}. Finally, in Section \ref{sec:sdpi}, we derive strong data processing inequalities on the relative entropy and Hellinger divergence when a kernel satisfies a $c$-interior PML constraint, and input distributions are picked from the $c$-interior only. These results quantify how divergence measures change when a kernel with PML constraints is applied, and can therefore be used to reason about the effects of privacy for statistical inference, etc., or to characterize the stochastic behavior of a kernel, e.g., in the analysis of mixing times \cite{zamanlooy2024mathrm, daeijavad2026local}. 

During the exposition of the technical results, we highlight the key differences between the two different approaches for obtaining contraction bounds presented (via LDP conversion vs. directly via PML). As we will see, there are many places in which the latter provides added flexibility that results in tighter characterizations. Further, we traverse through two conceptually different approaches along another axis: For most of the analyisis of contraction coefficients, we treat $c$-interior PML as a purely algebraic property. That is, the input distributions to the divergences are free to be picked from the entire simplex. As we progress through the work, we will then use the results to obtain \emph{set-restricted} SDPIs, where now the choice of input distributions is also limited to the $c$-interior $\mathcal P_c$. There are key differences between these two approaches. As a particularly extreme example, when we pick $P_X$ and $Q_X$ arbitrarily from the simplex, we might often encounter unbounded relative entropy, i.e.,
\begin{equation} 
    D(P_X\mathsf K||Q_X\mathsf K) = \infty,
\end{equation}
even when the kernel $\mathsf K$ satisfies a non-trivial privacy constraint. On the other hand, when the choice of $P_X$ and $Q_X$ is restricted to the $c$-interior, we are able to show that any $f$-divergence (including relative entropy) between output marginals will remain bounded. Specifically, Theorem \ref{thm:binettefdivbound} shows that for some constant $C$ depending only on $f$, $\varepsilon$, and $c$, we have,
\begin{equation}
    D_f(P_X\mathsf K||Q_X\mathsf K) \leq C(f,\varepsilon,c)\,\text{TV}(P_X||Q_X),
\end{equation}
whenever $P_X,Q_X$ are picked from $\mathcal P_c$. Further, in Theorem \ref{thm:chi^2_TV^2_nr2}, we show that for relative entropy, we also have,
\begin{equation}
    D(P_X\mathsf K||Q_X\mathsf K) \leq \Xi(\varepsilon,c)\,\text{TV}^2(P_X||Q_X),
\end{equation}
where $\Xi(\varepsilon,c) = \mathcal O(\min\{e^\varepsilon,\varepsilon^2\})$. This shows that when $P_X$ and $Q_X$ are picked arbitrarily from $\mathcal P_c$, then the application of a kernel $\mathsf K$ transforms the relative entropy into a squared-variation-type distance, and hence recovers a property held by LDP on the entire simplex \cite{duchi2013local}.

In the second half of this work, we use the technical results from Part I to analyze the effect of PML privacy on two important statistical tasks: binary hypothesis testing and mean estimation. For the hypothesis testing problem, Theorem~\ref{thm:hypothesistestingH^2asymp} shows that the number of samples needed to privately distinguish between $P_X$ and $Q_X$ in $\mathcal P_c$, denoted by $n^*_{\varepsilon,c}(P_X,Q_X,\zeta)$, scales according to,
\begin{equation}
    n^*_{\varepsilon,c}(P_X,Q_X,\zeta) \asymp \frac{1}{\eta_\text{TV}^2(\varepsilon)\text{TV}^2(P_X||Q_X)}.
\end{equation}
Further, we show that in certain settings, this is achieved by a simple and efficiently implementable privatized version of the standard optimal test. Note that $\eta_\text{TV}(\varepsilon) = 1$ for many mechanisms with non-trivial adversarial guarantees. The results therefore show that we privately achieve the non-private risk in these cases.\footnote{This improves upon an observation for LDP-private hypothesis testing in \cite{pensia2023simple,asoodeh2024contraction}, see Section \ref{sec:minimax:subsec:nhypo} for details.} We further analyze the one-shot Bayesian testing setting via the Hockeystick-divergence in Proposition \ref{prop:1hypotestEgamma}.

For mean estimation of distributions supported on the $d$-dimensional unit ball of norm $p\in[1,2]$, we show in Propositions~\ref{prop:meanestimation}-\ref{prop:highd-estimator} that the minimax risk under $c$-interior PML for small enough $\varepsilon$ follows,
\begin{equation}
    \mathfrak R^{\varepsilon,c}_n(\theta(\mathcal P),||\cdot||_2^2) \asymp \frac{d}{n\,\eta^2_\text{TV}(\varepsilon)}.
\end{equation}
Under some technical conditions, we show that if $\varepsilon$ is large, then we can in fact (privately) achieve the non-private asymptotic risk,
\begin{equation}
    \mathfrak R^{\varepsilon,c}_n(\theta(\mathcal P),||\cdot||_2^2) \lesssim \frac{1}{n}.
\end{equation}
In both cases, the upper bound follows from constructing corresponding order-optimal estimators. 
 
Finally, we use the technical results from the first half to provide two applications where $c$-interior PML can be used as a tool for LDP analysis: Conversion to approximate LDP and asymptotic amplification: In the former setting, we are interested in the $(\alpha,\delta)$-LDP guarantees that can be provided by a kernel satisfying $c$-interior PML. We show that in cases where the kernel does not satisfy any finite (pure) LDP guarantee, the $(\alpha,\delta)$-LDP guarantees that can be given collapse to $(0,\delta)$-guarantees. In the latter setting, we show that as long as a kernel satisfies $\eta_\text{TV}(\varepsilon)<1$, we can use it to repeatedly post-process an $\alpha_0$-LDP kernel $\mathsf K_0$ to obtain a system satisfying $\alpha_n$-LDP, where,
\begin{equation}
    \alpha_n \lesssim \alpha\,\eta_\text{TV}(\varepsilon)^n \stackrel{n\to \infty}\longrightarrow 0.
\end{equation}
Hence, we provide an example setup where $c$-interior PML and pure LDP can be used synergistically. This should be seen as a motivating example: in certain settings, we are able to leverage the increased flexibility and broader scope of $c$-interior PML to improve the analysis of LDP systems.  

%% file: sections/background.tex
In this section, we introduce the necessary notation, assumptions and background on probability theory, contraction of measure and pointwise maximal leakage. To stay concise, the definitions and background for estimation and testing are introduced as needed in Section \ref{sec:minimax}. 
\subsection{Notation and Assumptions}
\label{ssec:notation}
Throughout this work, $X$ and $Y$ denote random variables defined on the standard Borel measurable spaces $(\mathcal X,\Sigma_\mathcal X)$ and $(\mathcal Y,\Sigma_\mathcal Y)$, respectively, with distributions $P_X$ and $P_Y$, and transition kernel $\mathsf K$. We assume that there exists some $\sigma$-finite reference measures $\mu_X$, $\mu_Y$, such that $P_X \ll \mu_X$ and $\mathsf K(\cdot |x)\ll \mu_Y$ for all $x\in\mathcal X$. We use $\mathcal  P(\mathcal X)$ to denote the set of all distributions $P_X\ll \mu_X$ on $\mathcal X$. to denote the set of all such transition kernels. The transition kernel $\mathsf K: \Sigma_\mathcal Y \times \mathcal X \to [0,1]$ assigns a probability measure on $\mathcal Y$ to any input symbol from $\mathcal X$. We use $p = \frac{dP_X}{d\mu_X}$ and $k = \frac{d \mathsf K(\cdot | x)}{d\mu_Y}$ to denote the corresponding (conditional) densities with respect to $\mu_X$ and $\mu_Y$. For any measure $\nu$, $\nu \mathsf K$ denotes the push-forward of $\nu$ under $\mathsf K$, that is, $(\nu\mathsf K)(B) = \int_\mathcal X \mathsf K(B|x)\nu(dx)$ for all $B \in \Sigma_\mathcal Y$.

We put the following additional assumptions on the kernels consider in this paper: For any $\mathsf K\in\mathcal P(\mathcal Y|\mathcal X)$ and any $B\in\Sigma_\mathcal Y$, let $\esssup_{\mu_X} \mathsf K(B|\cdot) \coloneqq \inf \left\{a \in \bR : \mu_X \big(\{x : \mathsf K(B|x) > a \} \big) = 0 \right\}$ denote the essential supremum with respect to $\mu_X$. For simplicity, we assume that $\esssup_{\mu_X} \mathsf K(B|\cdot) = \sup_{x \in \cX} \mathsf K(B|x)$ for all measurable $B$ and all kernels considered in this paper. This assumption holds whenever, 
\begin{equation*}
    \mu_X \Big(\big\{x : \mathsf K(B|x) > \sup_{x'} \mathsf K(B|x') - \zeta \big\} \Big) > 0 
\end{equation*}
for all $\zeta >0$. That is, every set of points where $\mathsf K(B|\cdot)$ gets arbitrarily close to its supremum must have positive measure under $\mu_X$. We impose this assumption to avoid lengthy discussions about null sets and supports of measure that are tangential to the main focus of the paper. Similarly, we also assume that the infimum and essential infimum of kernels are equal. By the definition of $\mathcal P_c$, we have $c\mu_X \leq P_X \ll \mu_X$ for all $P_X\in\mathcal P_c$, so we can similarly replace the essential supremum/infimum w.r.t $P_X$ with supremum and infimum over $\mathcal X$ for all $P_X$ in $\mathcal P_c$ as well. 

\subsection{Contraction Coefficients and Strong Data Processing Inequalities}
An important concept in statistics and information theory are measures of \emph{divergence} between probability distributions. Perhaps the most commonly used class of divergences is that of $f$-divergences, which is defined as follows.
\begin{definition}
    Let $f:(0,\infty)\to \mathbb R$ be a convex function with $f(1)=0$. Let $P\ll Q$ be two probability measures defined on the measurable space $(\Omega,\Sigma)$. Then the \emph{$f$-divergence between $P_X$ and $Q_X$} is defined as,
    \begin{equation}
        D_f(P||Q) \coloneqq \mathbb E_Q\left[f\left(\frac{dP}{dQ}\right)\right] = \int_{\Omega} f\left(\frac{dP}{dQ}\right)dQ.
    \end{equation}
\end{definition}
It is a well established fact \cite{Polyanskiy_Wu_2025} that any $f$-divergence satisfies the \emph{data processing inequality}, which states that any kernel $\mathsf K\in \mathcal P(\mathcal Y\mid \mathcal X)$ cannot increase the divergence between distribution. In other words, it holds that for any kernel $\mathsf K$ and any two distributions $P_X,Q_X\in\mathcal P(\mathcal X)$, 
\begin{equation}
    D_f(P_X\mathsf K||Q_X\mathsf K) \leq D_f(P_X||Q_X).
\end{equation}
Initiated by \citet{ahlswede1976spreading}, it has long been of interest to find the precise constants with which the DPI holds under various conditions on the kernel, i.e., to find the smallest number $0\leq\eta_f(\mathsf K)\leq1$ such that $D_f(P_X\mathsf K||Q_X\mathsf K) \leq \eta_f(\mathsf K) D_f(P_X||Q_X)$. For any $f$-divergence, this constant is referred to as the \emph{contraction coefficient} of the kernel under the $f$-divergence. Various lines of work develop frameworks to bound this coefficient under different assumptions about the kernels, etc., see e.g. \cite{raginsky2016strong,polyanskiy2015dissipation,polyanskiy2017strong,asoodeh2020contraction,du2017strong}. In this work, we will introduce a (slightly) more general definition of this constant that includes flexibility with respect to the set of input distributions.
\begin{definition}
    Let $\mathcal Q\subseteq \mathcal P(\mathcal X)$ be a set of probability distributions defined on a measurable space $(\mathcal X,\Sigma_\mathcal X)$. Let $\mathsf K\in\mathcal P(\mathcal Y|\mathcal X)$ be any transition kernel. The contraction coefficient of $\mathsf K$ under a $f$-divergence $D_f$ restricted to $\mathcal Q$ is defined as,
    \begin{equation}
    \label{eq:contractiondefinition}
        \eta_f^{\mathcal Q}(\mathsf K) \coloneqq \sup_{P_X,Q_X\in\mathcal Q: P_X\neq Q_X} \frac{D_f(P_X\mathsf K||Q_X\mathsf K)}{D_f(P_X||Q_X)}.
    \end{equation}
    Whenever $\mathcal Q=\mathcal P(\mathcal X)$, we recover the standard definition of the (distribution-independent) contraction coefficient, for which we omit the superscript, i.e., $\eta_f^{\mathcal P(\mathcal X)}(\mathsf K) \equiv \eta_f(\mathsf K)$.
\end{definition}
Note that in general, $\eta_f^\mathcal Q(\mathsf K)\leq \eta_f(\mathsf K)$. A few common instances of $f$-divergences and the notation for their contraction coefficients are listed below.
\begin{itemize}
    \item \emph{Total variation distance} $\text{TV}(P||Q)$: $f(t) = 0.5|t-1|$; we write $\eta_f(\mathsf K) = \eta_\text{TV}(\mathsf K)$.

    \item \emph{Hockeystick-divergence} $E_\gamma(P||Q)$: $f_\gamma(t) = (t-\gamma)_+-(1-\gamma)_+$; we write $\eta_{f_\gamma}(\mathsf K) = \eta_\gamma(\mathsf K)$.

    \item \emph{Relative entropy} $D(P||Q)$: $f(t) = t\log t$; we write $\eta_f(\mathsf K) = \eta_\text{KL}(\mathsf K)$.

    \item \emph{$\chi^2$-divergence} $\chi^2(P||Q)$: $f(t) = t^2-1$; we write $\eta_f(\mathsf K) = \eta_{\chi^2}(\mathsf K)$.

    \item \emph{Squared Hellinger divergence} $H^2(P||Q)$: $f(t) = (\sqrt{t}-1)^2$; we write $\eta_f(\mathsf K) = \eta_{H^2}(\mathsf K)$.
\end{itemize}
The contraction coefficient of the total variation distance is also known as the \emph{Dobrushin coefficient}, and can be computed by a simple two-point characterization \cite{dobrushin1956central}, that is,
\begin{equation}
\label{eq:dobrushintwopoint}
    \eta_\text{TV}(\mathsf K) = \sup_{x\neq x'} \text{TV}\left(\mathsf K(\cdot|x)||\mathsf K(\cdot|x')\right).
\end{equation}
The Dobrushin coefficient is a crucial quantity in contraction analysis, as it poses an upper bound to the contraction coefficient of almost all $f$-divergence \cite{cohen1998comparisons,del2003contraction,raginsky2016strong}. More specifically, for any $f$, we have,
\begin{equation}
    \eta_f(\mathsf K) \leq \eta_\text{TV}(\mathsf K), \quad \forall \mathsf K\in\mathcal P(\mathcal Y|\mathcal X).
\end{equation}
If $f$ is operator-convex and continuously twice-differentiability, we further have \cite{choi1994equivalence,cohen1998comparisons},
\begin{equation}
    \eta_f(\mathsf K) = \eta_{\chi^2}(\mathsf K), \quad \forall \mathsf K\in\mathcal P(\mathcal Y|\mathcal X).
\end{equation}
In general, the contraction coefficient of the $\chi^2$-divergence is, however, only a lower bound on all other contraction coefficients, that is, $\eta_{\chi^2}(\mathsf K)\leq \eta_f(\mathsf K)$. 

Contraction coefficients have found many applications in information theory and statistics. For a detailed review of their properties and an overview of applications, we refer the interested reader to \citet{Polyanskiy_Wu_2025}. Central for this work is the application of contraction coefficients (or more generally strong data processing inequalities) to minimax risk with local privacy, as pioneered by \citet{duchi2013local} and further developed by \citet{asoodeh2024contraction,asoodeh2020contraction}, the basics of which we will detail in Section \ref{sec:background:subsec:privacyandcontraction} after introducing the fundamentals of local privacy in what follows.

\subsection{Pointwise Maximal Leakage and Local Differential Privacy}
In this section, we introduce the two privacy measures most relevant to this work, pointwise maximal leakage and local differential privacy. As we will see, there is an interesting connection between the two measures, and pointwise maximal leakage can reasonably be seen as a generalization of local differential privacy. This connection is the main motivation of the results in this paper. 

\subsubsection{Local differential privacy}
Perhaps most standard, local privacy is quantified by \emph{local differential privacy} (LDP). For any kernel $\mathsf K$, LDP quantifies the worst-case likelihood ratio of output distributions caused by two differing inputs, modulo a small slack parameter that can be chosen to slightly loosen the guarantee.
\begin{definition}[\emph{Local differential privacy} (LDP) {{\cite{duchi2013LDPminmaxDEF}}}]
    For $\alpha\geq 0$, $\delta\in[0,1]$, a kernel $\mathsf K\in \mathcal P(\mathcal Y|\mathcal X)$ \emph{satisfies $(\alpha,\delta)$-LDP}, if,
    \begin{equation}
        \mathsf K(B|x) \leq e^\alpha \,\mathsf K(B|x') + \delta \quad \forall x\neq x', B\in\Sigma_\mathcal Y.
    \end{equation}
    If a kernel satisfies $(\alpha,0)$-LDP, we say it satifies \emph{pure} $\alpha$-LDP, or simply $\alpha$-LDP.
\end{definition}
The definition of LDP admits a strong operational interpretation in terms of hypothesis testing adversaries \cite{kairouz2016extremal}. Further, \citet{9517999} show that $(\varepsilon,\delta)$-LDP is in fact \emph{equivalent} to a bound on the contraction coefficient of the Hockeystick-divergence $E_\gamma$. More precisely, the authors show that a kernel $\mathsf K$ satisfies $(\alpha,\delta)$-LDP if and only if $\eta_{e^\alpha}(\mathsf K) \leq \delta$. For pure $\alpha$-LDP, this implies that applying the kernel to any two distributions results in a zero-valued Hockeystick-divergence $E_\gamma(P_X\mathsf K||Q_X\mathsf K)$ whenever $\gamma \geq e^\alpha$.

\subsubsection{Pointwise maximal leakage}
The framework of \emph{pointwise maximal leakage} (PML) has been recently proposed by \citet{saeidian2023pointwise} as a generalization of maximal leakage \cite{alvim2012measuring,IssaMaxL}. It can be seen as a measure in the framework of \emph{quantitative information flow} (QIF) \cite{alvim2020science}, where privacy leakage measures are derived from adversarial threat-model formulations. Specifically, PML quantifies privacy by comparing the maximum gain an adversary might obtain after an observation of the output of a privacy mechanism (kernel) to the average gain of a blind guess without access to the output of the mechanism. This gain-function view for a specific gain function $g$ is then maximized over all possible non-negative gain functions to obtain a robust measure of privacy leakage.
\begin{definition}[Gain function view of PML {{\cite{saeidian2023pointwisegeneral}}}]
\label{def:PMLgainfunc}
    Let $X$ be a random variable defined on the set $\mathcal X$ distributed according to $P_X$. Let $(X,Y)$ be induced by $P_X$ together with the mechanism $P_{Y|X}$. Then the \emph{pointwise maximal leakage in the gain function view from $X$ to an outcome $y$} is defined as 
    \begin{equation}
    \label{eq:PMLdefGainFunc}
        \ell_{P_{XY}}(X\to y) \coloneqq \log \, \sup_{g\in\mathfrak G} \sup_{(\mathcal W,\Sigma_\mathcal W)}\frac{\sup_{P_{W|Y}}\mathbb E[g(X,W)\mid Y=y]}{\max_{w\in\mathcal W}\mathbb E[g(X,w)]},
    \end{equation}
    where $\mathcal W$ is some arbitrary \say{guessing space}, the supremum in the numerator is over arbitrary kernels from $(\mathcal Y,\Sigma_\mathcal Y)$ into $(\mathcal W,\Sigma_\mathcal W)$ and $\mathfrak G$ is the set of all non-negative and bounded gain functions, i.e.,
    \begin{equation}
        \mathfrak G \coloneqq \left\{g\in (\Sigma_\mathcal X \otimes \Sigma_\mathcal W)_+\mid \sup_{w\in \mathcal W}\mathbb E[g(X,w)]<\infty\right\}.
    \end{equation}
\end{definition}
This definition is akin to the definition of min-capacity in the QIF framework \cite{alvim2012measuring}. By allowing for arbitrary nonnegative gain functions, this formulation includes many real-world adversarial attacks, e.g., membership inference attacks and attribute inference attacks (see \cite{saeidian2023pointwise} for examples of corresponding gain functions). If the private random variable $X$ is at most countably infinite, an equivalent formulation of this quantity can be made akin to the randomized function model of the maximal leakage framework \cite{IssaMaxL}.
\begin{definition}[Randomized function view of PML {{\cite{saeidian2023pointwise}}}]
\label{def:PMLrandfunc}
    Let $P_{XY}$ be the joint distribution of two random variables defined on the discrete set $\mathcal{X} \times \mathcal{Y}$. Suppose the Markov chain $U - X - Y - \hat{U}$ holds. Then the \emph{pointwise maximal leakage from $X$ to an outcome $y \in \mathcal{Y}$} is defined as
    \begin{equation}
    \label{eq:PMLdefRandFunc}
            \ell(X \to y) \coloneqq 
    \log \sup_{P_{U \mid X}}\frac{\sup\limits_{P_{\hat U \mid Y=y}} \mathbb{P} \left[U=\hat U \mid Y=y \right]}{\max\limits_{u\in \mathcal{U}} P_U(u)}.
    \end{equation}
\end{definition}
This formulation can be interpreted as follows: Assume an adversary is interested in the realization of $U$, a (possibly randomized) function of the private data $X$ (e.g. a persons gender or the town they live in). After observing the output $Y=y$ of a privacy mechanism, the adversary forms a guess $\hat U$ of $U$. The randomized function view of PML then measures leakage by comparing the porbability of a correct guess \emph{with} the observation $Y=y$ to the probability of correctly guessing $U=u$ blindly, that is, \emph{without} access to the output of the privacy mechanism. Maximizing this quantity over \emph{all} possible randomized functions $U$ of the secret yields pointwise maximal leakage.

As shown in \cite{saeidian2023pointwise}, both the gain function view and the randomized function view of PML are equivalent, that is, they lead to the same leakage quantification for any system. As a result, they simplify to the same expression.
\begin{theorem}[{{\cite[Theorem 1]{saeidian2023pointwise}}}]
     Let $X,Y$ be random variables distributed according to the joint distribution $P_{XY}$. Let $P_X \in \mathcal P(\mathcal X)$ be the marginal distribution of $X$. The pointwise maximal leakage according to both Definition~\ref{def:PMLgainfunc} and~\ref{def:PMLrandfunc} simplify to
    \begin{equation}
        \ell_{P_{XY}}(X\to y) = D_\infty(P_{X|Y=y}||P_X),
    \end{equation}
    where $P_{X|Y=y}$ denotes the conditional distribution of $X$ given an outcome $Y=y$, and $D_\infty$ denotes the Rényi-divergence of order infinity \cite{renyi1961entropy}.
\end{theorem}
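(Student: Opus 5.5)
Both simplifications reduce to a variational characterization of $D_\infty$, and I will prove them one after the other. Start with the randomized function view in the discrete setting. Fix $y$ with $P_Y(y)>0$. For any $P_{U|X}$ and any guessing kernel $P_{\hat U|Y=y}$ the Markov chain gives
\begin{equation}
\mathbb P[U=\hat U\mid Y=y]=\sum_u P_{\hat U|Y}(u|y)\sum_x P_{U|X}(u|x)P_{X|Y}(x|y)\le \max_u \sum_x P_{U|X}(u|x)P_{X|Y}(x|y).
\end{equation}
The second step holds because the guess can simply concentrate on the maximizing $u$. Next use $P_{X|Y}(x|y)\le e^{D_\infty(P_{X|Y=y}\|P_X)}P_X(x)$ for every $x$. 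This bounds the right-hand side by $e^{D_\infty}\max_u P_U(u)$, which gives the upper bound on the ratio.

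For achievability, the natural choice is the "shattering" construction of \cite{IssaMaxL}. Let $x^*$ attain (or nearly attain) the supremum of $P_{X|Y}(x|y)/P_X(x)$. Take $U=X$ restricted to suitable groups, or more simply a binary $U=\mathbf 1\{X=x^*\}$ suitably randomized. To make $\max_u P_U(u)$ equal $P_X(x^*)$, I would use the Issa et al.\ construction: split every $x$ into copies so that $U$ is uniform-like with $\max_u P_U(u)=P_X(x^*)$, where $U$ given $X=x$ is uniform over $\lceil P_X(x)/P_X(x^*)\rceil$-many labels. Then the numerator is at least $P_{X|Y}(x^*|y)$, and the ratio approaches $\sup_x P_{X|Y}(x|y)/P_X(x)$. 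Countable $\mathcal X$ requires a limiting argument over near-maximizers.

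For the gain function view, the upper bound follows the same pattern. For any $g$, $\mathcal W$ and kernel $P_{W|Y}$, write $\mathbb E[g(X,W)\mid Y=y]=\int\!\int g(x,w)\,P_{X|Y=y}(dx)\,P_{W|Y=y}(dw)$. Since $P_{X|Y=y}\le e^{D_\infty}P_X$ as measures, this is at most $e^{D_\infty}\sup_w\mathbb E[g(X,w)]$, by nonnegativity of $g$. For the lower bound, pick a measurable set $A$ on which $dP_{X|Y=y}/dP_X\ge e^{D_\infty}-\zeta$ with $P_X(A)>0$; the paper's esssup convention guarantees such a set exists. Then take $\mathcal W=\{w_0\}$ and $g(x,w_0)=\mathbf 1_A(x)$. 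The ratio becomes $P_{X|Y=y}(A)/P_X(A)\ge e^{D_\infty}-\zeta$, and letting $\zeta\to0$ gives equality.

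The main obstacle is the measure-theoretic definition of conditioning on $Y=y$ for non-discrete $\mathcal Y$. I would interpret $P_{X|Y=y}$ via a regular conditional distribution, which exists on standard Borel spaces, so that the statement holds for $P_Y$-a.e.\ $y$. A second delicate point is the achievability construction in the randomized function view when the supremum over $x$ is not attained; there I would work with an $\epsilon$-optimal $x^*$ and approximate.
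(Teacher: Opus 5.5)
The paper gives no proof of this statement; it imports the result from \cite{saeidian2023pointwise}, with the gain-function view from \cite{saeidian2023pointwisegeneral}, so there is no in-paper argument to compare against. Your proof is correct and follows the standard route. The converse in both views comes from the domination $P_{X|Y=y}\le e^{D_\infty}P_X$ together with the Markov structure. Achievability comes from a label-splitting randomized function in the discrete view, and from indicator gain functions $g=\mathbf 1_A$ with a singleton guessing space in the general view.

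A few points to tidy up:
\begin{itemize}
\item Drop the aside about a binary $U=\mathbf 1\{X=x^*\}$, because no binary $U$ works in general. Take $X$ uniform on $N>2$ points and $P_{X|Y=y}=\delta_{x^*}$. Every binary $U$ has $\max_u P_U(u)\ge\tfrac12$, so its ratio is at most $2<N=e^{D_\infty}$. The splitting into $\lceil P_X(x)/P_X(x^*)\rceil$ disjoint labels is genuinely needed.
\item With that splitting and an attained maximizer, the ratio equals $P_{X|Y}(x^*|y)/P_X(x^*)$ exactly. A limit is only needed for countable $\mathcal X$ when the supremum is not attained.
\item Points with $P_X(x)=0$, where the ceiling gives zero labels, should be sent to an arbitrary label. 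They carry no mass under $P_X$ or $P_{X|Y=y}$, so nothing changes.
\item A set $A$ with $P_X(A)>0$ and $dP_{X|Y=y}/dP_X\ge e^{D_\infty}-\zeta$ on $A$ exists by the definition of the essential supremum under $P_X$. The paper's convention about kernels and $\mu_X$ is not what you need there.
\item If $D_\infty=\infty$, replace the set $A$ by $\{dP_{X|Y=y}/dP_X>M\}$ and let $M\to\infty$.
\end{itemize}
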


The core difference between PML and min-capacity/maximal leakage---and the reason for why we call this measure \say{pointwise}---is that PML quantifies leakage to \emph{each outcome $y$ separately}. That is, since the input distribution $P_X$ together with the kernel $\mathsf K$ induce an output random variable $Y$, the pointwise maximal leakage of the system is also a random variable, denoted by $\ell_{P_X\mathsf K}(X\to Y)$. This enables a much more refined privacy analysis compared to the on-average measure of min-capacity; similar to (local) differential privacy, this random variable view of privacy leakage can for example be used to evaluate the probability of tail events of large leakage. Further, it enables \emph{stricter} privacy guarantees by uniform bounds. Central to this work is an observation made by \citet{IssaMaxL}: There, the authors show that in fact, guaranteeing local differential privacy is equivalent to guaranteeing $\varepsilon$-PML for \emph{all} distributions in the probability simplex on $\mathcal X$. More precisely, we have the following theorem.
\begin{theorem}[{{\cite[Theorem 14]{IssaMaxL}}}]
\label{thm:LDPisPMLwithc=0}
    Let $\mathsf K$ be kernel mapping $\mathcal X\to \mathcal Y$. Then $\mathsf K$ satisfies $\varepsilon$-LDP for some $\varepsilon\geq 0$ if and only if it satisfies,
    \begin{equation}
        \sup_{P_X\in\mathcal P(\mathcal X)} \sup_{y\in\mathcal Y} \ell_{P_X\times \mathsf K}(X\to y) \leq \varepsilon.
    \end{equation}
\end{theorem}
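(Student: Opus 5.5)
The proof reduces both sides to likelihood ratios of the kernel: rewrite the PML via the Rényi-$\infty$ characterization above, compute the supremum over all input distributions $P_X$, and show the result is exactly the worst-case ratio appearing in the LDP definition. By the preceding theorem, $\ell_{P_X\times\mathsf K}(X\to y) = D_\infty(P_{X|Y=y}\|P_X) = \log \sup_{x} \frac{dP_{X|Y=y}}{dP_X}(x)$. By Bayes' rule, this density ratio equals $\frac{k(y|x)}{\int k(y|x')P_X(dx')}$. Hence
\begin{equation}
    \ell_{P_X\times\mathsf K}(X\to y) = \log \frac{\sup_{x\in\supp(P_X)} k(y|x)}{\int k(y|x')\,P_X(dx')}.
\end{equation}
We use the convention that the supremum is over the support of $P_X$, interpreted as an essential supremum. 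For outcome sets $B$ rather than densities, the same computation goes through with $\mathsf K(B|x)$ replacing $k(y|x)$.

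For the direction LDP $\Rightarrow$ bounded PML, assume $\mathsf K(B|x)\le e^\varepsilon \mathsf K(B|x')$ for all $x,x'$ and all $B$. Taking Radon–Nikodym derivatives gives $k(y|x)\le e^\varepsilon k(y|x')$ for $\mu_Y$-a.e. $y$. Integrating over $x'$ against $P_X$ then yields $k(y|x)\le e^\varepsilon\int k(y|x')P_X(dx')$. Taking the supremum over $x$ shows the PML is at most $\varepsilon$ for every $P_X$ and (almost) every $y$.

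For the converse, fix $x,x'$ and consider the two-point distribution $P_X=\lambda\delta_x+(1-\lambda)\delta_{x'}$. The bound $\ell\le\varepsilon$ gives
\begin{equation}
    \max\{k(y|x),k(y|x')\}\le e^\varepsilon\bigl(\lambda k(y|x)+(1-\lambda)k(y|x')\bigr).
\end{equation}
Letting $\lambda\to 0$ gives $k(y|x)\le e^\varepsilon k(y|x')$, and integrating over $B$ recovers the set-wise LDP inequality. This construction needs point masses in $\mathcal P(\mathcal X)$. In the general setting where $\mathcal P(\mathcal X)$ only contains $P_X\ll\mu_X$, we instead take $P_X$ uniform on small neighbourhoods of $x$ and $x'$. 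The standing assumption that supremum and essential supremum coincide lets us approximate the pointwise ratio arbitrarily well.

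The main obstacle is this measure-theoretic limiting step in the converse: ensuring that $\mu_X$-absolutely continuous input distributions can approximate the two-point extremal distributions, and that the essential supremum over the support matches the pointwise supremum. The algebra on both sides is elementary.
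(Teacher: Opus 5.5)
The paper does not prove this statement; it imports it from the literature. Your argument is the standard one: Bayes' rule for $D_\infty$, averaging the LDP inequality against $P_X$ for the forward direction, and concentrating $P_X$ on two points for the converse. When $\mu_X$ is purely atomic (e.g.\ counting measure), point masses lie in $\mathcal P(\mathcal X)$ and the argument is complete.

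One small repair in the forward direction. Integrate the set inequality $\mathsf K(B|x)\le e^\varepsilon\mathsf K(B|x')$ against $P_X(dx')$ \emph{before} passing to densities. This gives $\mathsf K(\cdot|x)\le e^\varepsilon P_X\mathsf K$ as measures. If you take Radon--Nikodym derivatives pairwise first, the $\mu_Y$-null sets depend on $x'$, and you need a Fubini argument to integrate.

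The genuine gap is your remedy for non-atomic $\mu_X$. Uniform distributions on small neighbourhoods of the given points $x,x'$ only compare an essential supremum of $\mathsf K(B|\cdot)$ near $x$ with an average of $\mathsf K(B|\cdot)$ near $x'$. To turn this into $\mathsf K(B|x)\le e^\varepsilon\mathsf K(B|x')$ you need one of two things:
\begin{enumerate*}[label=(\roman*)]
\item continuity of $\mathsf K(B|\cdot)$, which is not assumed; or
\item a differentiation theorem, which needs metric structure that a general standard Borel space lacks. Even in $\mathbb R^d$ it only reaches Lebesgue points, so you would still need a further step to cover all pairs.
\end{enumerate*}
The standing assumption says nothing about individual points. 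It only says that the global supremum and infimum of $\mathsf K(B|\cdot)$ are essential, so use it directly.

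Fix $B$ and $\zeta>0$, and set
\[
A^+=\{x:\mathsf K(B|x)>\sup_z\mathsf K(B|z)-\zeta\},\qquad A^-=\{x:\mathsf K(B|x)<\inf_z\mathsf K(B|z)+\zeta\}.
\]
Both have positive $\mu_X$-measure by the standing assumption; pass to subsets of finite positive measure if $\mu_X$ is only $\sigma$-finite. Take $P_X=\lambda U_{A^+}+(1-\lambda)U_{A^-}$, where $U_A$ is the normalized restriction of $\mu_X$ to $A$. By Fubini, the PML bound gives $\frac{d\mathsf K(\cdot|x)}{d(P_X\mathsf K)}\le e^\varepsilon$ for $P_X$-a.e.\ $x$. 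Hence, for $P_X$-a.e.\ $x\in A^+$,
\[
\sup_z\mathsf K(B|z)-\zeta<\mathsf K(B|x)\le e^\varepsilon(P_X\mathsf K)(B)\le e^\varepsilon\bigl(\lambda+\inf_z\mathsf K(B|z)+\zeta\bigr).
\]
Letting $\lambda,\zeta\to0$ yields $\sup_x\mathsf K(B|x)\le e^\varepsilon\inf_{x'}\mathsf K(B|x')$, which is $\varepsilon$-LDP for all pairs at once. This is the same level-set device the paper uses in the proof of Lemma~\ref{lemma:pml_constraints}, whose $c=0$ case is essentially this converse.
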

This observation motivates us to investigate more context-aware privacy guarantees that restrict the set of distributions over which $\ell(X\to y)$ is supremized to be a strict subset of the probability simplex. As we will see later, such restrictions allow us to give non-trivial privacy guarantees that significantly increase utility by moving away from the overly conservative assumption of arbitraily skewed distributions. 

\subsection{Related Works} 
\label{sec:background:subsec:privacyandcontraction}
We summarize related results on the connection between privacy and contraction, as well as sample complexities of private mean estimation and hypothesis testing. For an overview of contraction results and strong data processing inequalities more generally, we refer the reader to, e.g, \cite[Chapter 33]{Polyanskiy_Wu_2025}, \cite{polyanskiy2017strong}, \cite{polyanskiy2015dissipation}, \cite{raginsky2016strong}, \cite{ahlswede1976spreading}, \cite{cohen1998comparisons}. 

\emph{Strong data processing inequalities} (SDPIs), that is, inequalities of the form,
\begin{equation}
    D_{f}(P_X\mathsf K||Q_X\mathsf K) \leq T\left(D_g(P_X||Q_X)\right),
\end{equation}
where $f$ and $g$ are two (possibly different) functions corresponding to $f$-divergences and $T$ is some (possibly nonlinear) transformation, have been proven useful for determining asymptotic risk of private estimation procedures. Initially, \citet{duchi2013local} showed that, if $\mathsf K$ satisfies $\varepsilon$-LDP, then,
\begin{equation}
    D(P_X\mathsf K||Q_X\mathsf K) \leq \min\{4,e^{2\varepsilon}\}(e^\varepsilon-1)^2 \text{TV}(P_X||Q_X)^2.
\end{equation}
The authors then use this SDPI to show how an additional LDP constraint can significantly increase the effective sample complexity of various statistical problems. Related, it was shown by \citet{kairouz2016extremal} that under the same $\varepsilon$-LDP constraint,
\begin{equation}
    \eta_\text{TV}(\mathsf K) \leq \frac{e^\varepsilon-1}{e^\varepsilon+1}. 
\end{equation}
Further, the Dobrushin coefficient has been shown to exactly characterize a version of maximal leakage where adversaries are restricted to maximize over binary randomized functions of the secret \cite{cung2024binary}. Improving on the work of Duchi, \citet{asoodeh2024contraction} show that,
\begin{equation}
    \eta_\text{KL}(\mathsf K) = \eta_{\chi^2}(\mathsf K) = \eta_{H^2}(\mathsf K) \leq \left[\frac{e^\varepsilon-1}{e^\varepsilon+1}\right]^2. 
\end{equation}
The authors then use this bound on the contraction coefficients to improve upon many of the statistical problems already examined by Duchi in \cite{duchi2013local}, as well as others. 

For mean estimation, it is shown in \cite{duchi2013local} that whenever $\alpha\in [0,1]$, then the $\alpha$-locally differentially private risk of mean estimation in $d$ dimensions scales according to $ d/(\alpha^2n)$. More recent works show that for general $\alpha\geq 0$, the risk of estimating $d$-dimensional means under $\alpha$-LDP scales according to $d/\min\{\alpha, \alpha^2,d\}$ \cite{chen2020breaking,isik2023exact,asi2022optimal}, resolving the discrepancy between the non-private risk (which scales according to $1/n$) and the risk under $\alpha$-LDP for $\alpha \to \infty$. 

For simple binary hypothesis testing, recent works \cite{pensia2023simple,asoodeh2024contraction} show that testing between binary distributions $P_X$ and $Q_X$ at a fixed error requires at least $n^*_\alpha$ samples, where,
    \begin{equation}
    \label{eq:LDPprivnhypo}
        n^*_{\alpha}(P_X,Q_X) \asymp \begin{cases}
            (\alpha^2\cdot\text{TV}^2(P_X||Q_X))^{-1}, &\text{if }\alpha\in(0,1),\\
            (e^\alpha \cdot \text{TV}^2(P_X||Q_X))^{-1}, &\text{if }\alpha\in[1,\log \frac{H^2(P_X||Q_X)}{\text{TV}^2(P_X||Q_X)}], \\
            H^{2}(P_X||Q_X)^{-1}, &\text{otherwise. }
        \end{cases}
    \end{equation}
Additional results for estimation and testing are derived in \cite{acharya2023unified,acharya2020inference,rohde2020geometrizing,barnes2020fisher,duchi2024right}. Contraction properties of kernels satisfying $(\alpha,\delta)$-LDP are investigated in \cite{nuradha2025non}. Other applications of the contraction bounds in privacy include privacy amplification \cite{feldman2018privacy,asoodeh2020privacy,grosse2025bounds}, analysis of the privacy of iterative algorithms \cite{asoodeh2024privacy} and Markov-chain mixing times \cite{zamanlooy2024mathrm}.

%% file: sections/dobrushin.tex
In this section, we will provide the first main contribution of this paper: A bound on the Dobrushin coefficient of kernels that satisfy $\varepsilon$-PML for a set of distributions with minimum mass or density bounded away from zero by some constant $c$. This is motivated by the observation in Theorem \ref{thm:LDPisPMLwithc=0}: Local differential privacy offers robust privacy by protecting agains any adversary in the threat-models of PML for \emph{any} distribution in the simples $\mathcal P(\mathcal X)$. To retain the same robustness with respect to the considered adversaries, but move away from the worst-case distributional assumptions, we propose the following relaxation of LDP, which we name \emph{$c$-interor $\varepsilon$-PML}.
\begin{definition}[$c$-interior $\varepsilon$-PML] 
\label{def:eps_c_PML}
    Let $(\mathcal X,\Sigma_\mathcal X)$ be a measurable space, and let $\mathcal P(\mathcal X)$ be the space of probability distributions on $(\mathcal X,\Sigma_\mathcal X)$ that are absolutely continuous with respect to a reference measure $\mu_X$. For $c\in[0,\nicefrac{1}{\mu_X(\mathcal X)}]$, define the \emph{$c$-interior} $\mathcal P_c(\mathcal X)$ of the simplex $\mathcal P(\mathcal X)$ as,\footnotemark 
    \begin{equation}
        \mathcal P_c(\mathcal X) \coloneqq \{P_X\in\mathcal P(\mathcal X): \frac{dP_X}{d\mu_X} =f_X \geq c\}.
    \end{equation}
    \footnotetext{Whenever the space $\mathcal X$ is clear from context, we write $\mathcal P_c(\mathcal X) \coloneqq \mathcal P_c$. Note that throughout this paper, we assume $c$ to be a fixed problem constant. We therefore omit the dependence on $c$ in some of the notation in the following to streamline notation.}
    A transition kernel $\mathsf K$ satisfies \emph{$c$-interior $\varepsilon$-pointwise maximal leakage} ($\varepsilon$-PML on $\mathcal P_c$), if,
    \begin{equation}
    \label{eq:c_setPML}
       \log  \frac{\mathsf K(B | x)}{(P_X \mathsf K)(B)} \leq \varepsilon \quad \forall x\in\mathcal X,\,B\in \Sigma_\mathcal Y\, \text{ and }\, \forall P_X \in\mathcal P_c. 
    \end{equation}
    Further, we denote the set of all kernels satisfying $\varepsilon$-PML on $\mathcal P_c$ by $\mathcal M(\varepsilon,c)$.
\end{definition}
The definition of the $c$-interior allows us to restrict our analyses to distributions that are relatively regular. Specifically, any choice of $c>0$ ensures that the set $\mathcal P_c(\mathcal X)$ only contains distributions on $\mathcal X$ with full support, and the exact value of $c$ determines the minimum probability of rare events, e.g. outliers appearing in a data sequence. The parameter $c$ can therefore be seen as determining the \say{degree of uniformity} of the set of distributions. The definition of the $c$-interior makes sense  for any $c\in[0,\nicefrac{1}{\mu_X(\mathcal X)}]$. However, it is worth pointing out that by it's definition, $\mathcal P_0(\mathcal X)= \mathcal P(\mathcal X)$, and when $c=\nicefrac{1}{\mu_X(\mathcal X)}$, then $\mathcal P_c$ only contains the uniform distribution on $\mathcal X$. In what follows, we will therefore state all results for values $0<c<\nicefrac{1}{\mu_X(\mathcal X)}$. Another important observations is that the definition of $c$-interior PML limits our analyses to \emph{finite} measure spaces.
\begin{remark}
\label{rem:spaceneedstobebounded}
    In order to make useful statements for any value of $c>0$ in Definition \ref{def:eps_c_PML}, the measurable space $(\mathcal X,\Sigma_\mathcal X)$ needs to be bounded in the sense that,
    \begin{equation}
        \mu_X(\mathcal X) = \int_\mathcal X \mu_X(dx) < \infty. 
    \end{equation}
    If this is not the case, then the set $\mathcal P_c(\mathcal X)$ is empty for any $c>0$, since no distribution on an unbounded space can have uniformly lower-bounded densities. Heuristically, the definition of the $c$-interior is therefore only useful on spaces on which we can define a uniform distribution, e.g, norm-balls with finite radii or discrete spaces. 
\end{remark}

The definition of $c$-interior $\varepsilon$-PML is a relaxation (or generalization) of $\alpha$-LDP in the sense that the specific choice of $c=0$ recovers the definition of $\alpha$-LDP.
\begin{remark}
    If a kernel $\mathsf K$ satisfies $\varepsilon$-PML on $\mathcal P_c$ for all $c>0$, then it satisfies $\varepsilon$-LDP due to Theorem \ref{thm:LDPisPMLwithc=0}. That is,
    \begin{equation}
        \mathsf K \in \mathcal M(\varepsilon,0) \iff \mathsf K\text{ satisfies } \varepsilon\text{-LDP.}
    \end{equation}
\end{remark}

To facilitate the analyses below, we provide the following lemma, which expresses inequality \eqref{eq:c_setPML} for the kernels satisfying $\varepsilon$-PML on $\mathcal P_c$ in a more useful form. The lemma is proved in Appendix~\ref{app:proof_pml_contraints}.
\begin{lemma}
\label{lemma:pml_constraints}
If the kernel $\mathsf K$ satisfies $\varepsilon$-PML on $\mathcal P_c$, then 
\begin{equation}
    \mathsf K(B| x) \leq e^\varepsilon \Big(c(\mu_X \mathsf K)(B)+ d_c \cdot \mathsf K(B| x') \Big), 
\end{equation}
for all $x,x' \in \cX$ and $B \in \Sigma_{\cY}$, where $d_c \coloneqq 1-c\mu_X(\mathcal X)$. 
\end{lemma}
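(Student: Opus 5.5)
The plan is to plug a well-chosen distribution $P_X\in\mathcal P_c$ into the defining inequality \eqref{eq:c_setPML}. Rewritten, that inequality says $\mathsf K(B|x)\leq e^\varepsilon (P_X\mathsf K)(B)$ for every $P_X\in\mathcal P_c$. The right-hand side should be made as close as possible to $c(\mu_X\mathsf K)(B)+d_c\,\mathsf K(B|x')$. The natural candidate is the mixture
\begin{equation}
    P_X = c\,\mu_X + d_c\,\delta_{x'},
\end{equation}
which puts the minimal density $c$ everywhere and places the remaining mass $d_c=1-c\mu_X(\mathcal X)$ at the point $x'$. Formally, $(P_X\mathsf K)(B)=c\int_\mathcal X \mathsf K(B|x'')\mu_X(dx'') + d_c\,\mathsf K(B|x') = c(\mu_X\mathsf K)(B)+d_c\,\mathsf K(B|x')$. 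This is exactly the claimed bound.

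The main obstacle is that $\delta_{x'}$ is generally not absolutely continuous with respect to $\mu_X$, for instance in the continuous case. In that case the mixture above is not in $\mathcal P(\mathcal X)$, and hence not in $\mathcal P_c$. In the discrete case with counting measure, $\delta_{x'}$ has density $\ind\{\cdot=x'\}$, so $P_X$ has density $c+d_c\ind\{\cdot = x'\}\geq c$ and integrates to one. The argument is then immediate.

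For the general case I would approximate the point mass. Fix $\zeta>0$ and consider the set $A_\zeta=\{x'': \mathsf K(B|x'')>\sup_{z}\mathsf K(B|z)-\zeta\}$. By the standing assumption in Section~\ref{ssec:notation}, $A_\zeta$ has positive $\mu_X$-measure. I would then set $P_X^{\zeta}=c\mu_X+d_c\,\mu_X(\cdot\cap A_\zeta)/\mu_X(A_\zeta)$. This distribution lies in $\mathcal P_c$, since its density is at least $c$ and its total mass is one. It gives $(P^\zeta_X\mathsf K)(B)\geq c(\mu_X\mathsf K)(B)+d_c(\sup_z\mathsf K(B|z)-\zeta)$.

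This is a lower bound on $P_X\mathsf K$, which points the wrong way for bounding $\mathsf K(B|x)$ from above. So the approximation has to concentrate near the specific point $x'$ in the sense of the value $\mathsf K(B|x')$, not near the supremum. I would therefore use the set $\{x'':\mathsf K(B|x'')<\mathsf K(B|x')+\zeta\}$ instead, relying on the paper's companion assumption that infima and essential infima agree, applied to the relevant level sets, to guarantee positive measure. This gives $(P_X^\zeta\mathsf K)(B)\leq c(\mu_X\mathsf K)(B)+d_c(\mathsf K(B|x')+\zeta)$.

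Combining with \eqref{eq:c_setPML} yields $\mathsf K(B|x)\leq e^\varepsilon\big(c(\mu_X\mathsf K)(B)+d_c(\mathsf K(B|x')+\zeta)\big)$. Letting $\zeta\to0$ finishes the proof. The only delicate point is justifying positive measure of these sublevel sets. If the standing assumptions do not directly cover arbitrary $x'$, I would fall back to stating the result for $\mu_X$-almost every $x'$, which is consistent with the paper's convention of identifying suprema and essential suprema.
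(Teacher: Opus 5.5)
Your argument is correct and is essentially the paper's proof: write $P_X=c\mu_X+d_cQ_X$ with $Q_X$ concentrated on a low sublevel set of $\mathsf K(B|\cdot)$, then let $\zeta\to0$. The paper concentrates near $\inf_z\mathsf K(B|z)$ rather than near $\mathsf K(B|x')$, which is equivalent since the resulting bound holds uniformly in $x'$. Your worry about positive measure is settled directly, and no a.e.\ fallback is needed: $\{x'':\mathsf K(B|x'')<\mathsf K(B|x')+\zeta\}\supseteq\{x'':\mathsf K(B|x'')<\inf_z\mathsf K(B|z)+\zeta\}$, and the latter set has positive $\mu_X$-measure exactly because the infimum of $\mathsf K(B|\cdot)$ equals its essential infimum.
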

The constant $d_c = 1-c\mu_X(\mathcal X)$ will be used frequently below, and can be interpreted as a sort of \say{dual} to the constant $c$: Whenever $c$ is small, the constant $d_c$ is large, and vice-versa. In particular, $c=0 \implies d_c=1$, and $d_c = 0 \implies c = \nicefrac{1}{\mu_X(\mathcal X)}$.

Throughout this work, we will highlight two different approaches of solving contraction problems for classes of private kernels. The first approach works by transferring results from the LDP literature to our $c$-interior PML framework: We prove a connection between $c$-interior PML and LDP shortly, which will enable this transfer. However, as we will also show below, $c$-interior PML and LDP are by no means \emph{equivalent} measures, and the two privacy definitions capture fundamentally different aspects of the kernels. This leads us to our second approach, where we highlight the application of $c$-interior PML as an \emph{analytical tool} for deriving contraction results. Due to the additional assumptions in $c$-interior PML about, e.g., the size of the underlying space, using PML constraints directly will enable us to show tighter bounds in some scenarios below. Further, even when both approaches yield equivalent results, the proof techniques with the latter approach via $c$-interior PML are often more direct, and reveal interesting structures of the problem. Finally, and perhaps most importantly, the results shown by using $c$-interior PML extend beyond the LDP regime, and enable us to make statements about kernels that are out of scope for the LDP measure, e.g., discrete kernels with zero-valued entries.  

\subsection{Relationship between $\varepsilon$-PML on $\mathcal P_c$ and LDP}
We begin this study by investigating the relationship between $c$-interior PML and LDP. Interestingly, this study reveals that the correspondence between PML and LDP happens in two distinct regimes: If $\varepsilon$ is small enough, any kernel satisfying $\varepsilon$-PML on $\mathcal P_c$ also satisfies some $\alpha$-LDP guarantee. This, however, cannot be the case for larger values of $\varepsilon$. In particular, there are many kernels that satisfy finite $\varepsilon$-PML guarantees on $\mathcal P_c$ for some $c>0$, but that do not satisfy \emph{any} finite $\alpha$-LDP guarantee (consider, e.g., our running example of discrete kernels represented by stochastic matrices with zero entries). For these kernels, we instead work on a \emph{transformed} kernel $\mathsf M$ defined from the original kernel $\mathsf K$: We construct $\mathsf M$ by adding a density floor to $\mathsf K$, which ensures that $\mathsf M$ satisfies a finite $\alpha$-LDP guarantee. By choosing the level of this floor conveniently, we can ensure that $P_X\mathsf K = \bar P_X\mathsf M$, where $\bar P_X$ is a one-to-one transformation of $P_X$ such that $P_X \in \mathcal P_c\implies \bar P_X\in\mathcal P(\mathcal X)$. This will later enables us to write, e.g.,  $D_f(P_X\mathsf K||Q_X\mathsf K)= D_f(\bar P_X\mathsf M||\bar Q_X\mathsf M)$. Since we show that $\mathsf M$ satisfies $\alpha$-LDP, we will then be able to bound the RHS of this equation by LDP techniques. The proof of the Lemma is provided in Appendix~\ref{app:proofLDPlemma}.
\begin{lemma}[$c$-interior PML to LDP]
\label{lem:PMLimpliesLDP}
    Given a kernel $\mathsf K$, define the new kernel $\mathsf M$ by,
    \begin{equation}
         \mathsf M(\cdot |x) = c(\mu_X \mathsf K )(\cdot) + d_c\mathsf K(\cdot|x) \quad \forall x\in\mathcal X.
    \end{equation}
    Note that with this definition, $P_X \mathsf K = \bar P_X \mathsf M$, where $\bar P_X \coloneqq (P_X - c\mu_X)/d_c$ for all $P_X\in\mathcal P_c$. If $\mathsf K$ satisfies $\varepsilon$-PML on $\mathcal P_c$, then $\mathsf M$ satisfies $\log(e^\varepsilon d_c+1)$-LDP. Further, if $\varepsilon<-\log(c\mu_X(\mathcal X))$, then $\mathsf K$ itself satisfies $\alpha$-LDP, where,
    \begin{equation}
        \alpha = \log \,\frac{e^\varepsilon d_c}{1-e^\varepsilon c \mu_X(\mathcal X)}.
    \end{equation}
\end{lemma}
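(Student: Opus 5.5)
The plan is to verify the three claims of the statement in order. Throughout, Lemma~\ref{lemma:pml_constraints} is the main tool: it says that for all $x,x'\in\mathcal X$ and $B\in\Sigma_\mathcal Y$,
\begin{equation}
\mathsf K(B|x)\le e^\varepsilon\big(c(\mu_X\mathsf K)(B)+d_c\mathsf K(B|x')\big).
\end{equation}

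\emph{The identity $P_X\mathsf K=\bar P_X\mathsf M$.} This is a direct computation. Note first that $\bar P_X$ is a probability measure for $P_X\in\mathcal P_c$:
\begin{itemize}
\item its density $(p-c)/d_c$ is nonnegative because $p\ge c$;
\item its total mass is $(1-c\mu_X(\mathcal X))/d_c=1$.
\end{itemize}
Then integrate $\mathsf M(B|x)$ against $\bar P_X$. The floor term contributes $c(\mu_X\mathsf K)(B)$, since $\bar P_X$ has unit mass. The second term contributes $d_c\cdot\frac{1}{d_c}\big[(P_X\mathsf K)(B)-c(\mu_X\mathsf K)(B)\big]$. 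The two $c(\mu_X\mathsf K)(B)$ terms cancel and leave $(P_X\mathsf K)(B)$. (We need $d_c>0$, which holds since $c<1/\mu_X(\mathcal X)$.)

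\emph{LDP of $\mathsf M$.} Set $a=c(\mu_X\mathsf K)(B)$, $u=\mathsf K(B|x)$ and $v=\mathsf K(B|x')$. We must show $a+d_cu\le (e^\varepsilon d_c+1)(a+d_cv)$. Lemma~\ref{lemma:pml_constraints} gives $u\le e^\varepsilon(a+d_cv)$, so
\begin{equation}
a+d_cu\le a+d_ce^\varepsilon(a+d_cv).
\end{equation}
Since $a\le a+d_cv$, the right-hand side is at most $(a+d_cv)+e^\varepsilon d_c(a+d_cv)$, which is the claimed bound. Taking logarithms gives $\log(e^\varepsilon d_c+1)$-LDP.

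\emph{LDP of $\mathsf K$ itself.} The task is to remove the $\mu_X\mathsf K$ term from Lemma~\ref{lemma:pml_constraints}. Write $(\mu_X\mathsf K)(B)=\int\mathsf K(B|x'')\mu_X(dx'')$ and bound it by $\mu_X(\mathcal X)\sup_{x''}\mathsf K(B|x'')$. Define $S=\sup_x\mathsf K(B|x)$ and take the supremum over $x$ on the left of the lemma. With $x'$ fixed, this gives
\begin{equation}
S\le e^\varepsilon c\mu_X(\mathcal X)S+e^\varepsilon d_c\mathsf K(B|x').
\end{equation}
When $\varepsilon<-\log(c\mu_X(\mathcal X))$, the coefficient $1-e^\varepsilon c\mu_X(\mathcal X)$ is positive, so we can rearrange. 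This yields, for every $x$,
\begin{equation}
\mathsf K(B|x)\le S\le\frac{e^\varepsilon d_c}{1-e^\varepsilon c\mu_X(\mathcal X)}\,\mathsf K(B|x'),
\end{equation}
which is the stated $\alpha$-LDP.

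The main obstacle is justifying this rearrangement. It requires $S<\infty$, which holds trivially since $S\le1$. It also requires that the supremum and integral bounds are legitimate. That follows from the paper's standing assumption that suprema coincide with essential suprema with respect to $\mu_X$, so no null-set issues arise when bounding the integral by $\mu_X(\mathcal X)S$.
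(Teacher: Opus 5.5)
Your proposal is correct and follows the paper's proof essentially step for step. The LDP bound for $\mathsf M$ comes from Lemma~\ref{lemma:pml_constraints} plus $c(\mu_X\mathsf K)(B)\le\mathsf M(B|x')$, and the bound for $\mathsf K$ comes from $(\mu_X\mathsf K)(B)\le\mu_X(\mathcal X)\sup_x\mathsf K(B|x)$ followed by rearranging.

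The only difference is that you read off $\mathsf K(B|x)\le\gamma^*\mathsf K(B|x')$, with $\gamma^*=e^\varepsilon d_c/(1-e^\varepsilon c\mu_X(\mathcal X))$, directly as the LDP definition. The paper instead detours through $\eta_{\gamma^*}(\mathsf K)=0$ and the equivalence of \cite{9517999}. Also, your rearrangement does not actually need the essential-supremum assumption, since $\mathsf K(B|x'')\le S$ holds pointwise and integrates directly.
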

Lemma \ref{lem:PMLimpliesLDP} shows that whenever $\varepsilon<-\log(c\mu_X(\mathcal X))$, then a kernel $\mathsf K$ that satisfies $\varepsilon$-PML on $\mathcal P_c$ also satisfies some finite $\alpha$-LDP guarantee. Whenever $\mathsf K$ satisfies such an $\varepsilon$-PML guarantee on $\mathcal P_c$, we say that the kernel is in the \emph{high-privacy regime}. The fact that such a threshold exists is no surprise: A similar behavior has been observed for $\varepsilon$-PML with respect to a fixed data-generating distribution in \cite{10646583}.

\subsection{Dobrushin Coefficients under PML constraints}
We now present the main result of this work: A bound on the Dobrushin coefficient of kernels satisfying $\varepsilon$-PML on $\mathcal P_c$ for some $c>0$. The bound follows from the two-point characterization of the Dobrushin coefficient in \eqref{eq:dobrushintwopoint}, together with the formulation of $c$-interior PML constraints in Lemma \ref{lemma:pml_constraints}. We also show that the bound is tight by constructing a kernel that achieves it for any choice of privacy parameter $\varepsilon\geq 0$, and density floor bound $c>0$.

\begin{theorem}
\label{thm:TVcontraction}
    Let $(\mathcal X,\Sigma_\mathcal X)$ be a measurable space and let $\mu_X$ be a finite dominating measure on $\mathcal X$. If $\mathsf K$ satisfies $\varepsilon$-PML on $\mathcal P_c$ for $\varepsilon>0$ and $0<c < \mu_X(\mathcal X)^{-1}$, then it holds that,
    \begin{equation}
        \eta_\text{TV}(\mathsf K) \leq \min\Bigg\{\frac{e^\varepsilon-1}{1+e^\varepsilon(1-c\cdot\mathbb \mu_X(\mathcal X))},\,1\Bigg\}\eqqcolon \eta_\text{TV}(\varepsilon).
    \end{equation}
    Further, there exists a kernel $\mathsf K_{\varepsilon,c}^\star \in \mathcal M(\varepsilon,c)$ that achieves the bound.
\end{theorem}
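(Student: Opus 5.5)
The plan is to combine the two-point characterization of the Dobrushin coefficient in \eqref{eq:dobrushintwopoint} with the constraint form of $c$-interior PML from Lemma~\ref{lemma:pml_constraints}. By \eqref{eq:dobrushintwopoint},
\begin{equation}
\eta_\text{TV}(\mathsf K)=\sup_{x,x'}\sup_{B\in\Sigma_\mathcal Y}\big(\mathsf K(B|x)-\mathsf K(B|x')\big).
\end{equation}
So it suffices to fix $x,x'$ and a measurable $B$, and write $a=\mathsf K(B|x)$, $b=\mathsf K(B|x')$ and $s=c(\mu_X\mathsf K)(B)$. The key observation is that Lemma~\ref{lemma:pml_constraints} must be used twice: once on $B$, and once on $B^c$ with the roles of $x$ and $x'$ swapped.

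On $B$ the lemma gives $a\le e^\varepsilon(s+d_c b)$. On $B^c$, using $(\mu_X\mathsf K)(B^c)=\mu_X(\mathcal X)-(\mu_X\mathsf K)(B)$, it gives
\begin{equation}
1-b\le e^\varepsilon\big(c\mu_X(\mathcal X)-s+d_c(1-a)\big).
\end{equation}
Adding the two inequalities cancels the unknown quantity $s$, and we obtain
\begin{equation}
(a-b)+1\le e^\varepsilon\big(c\mu_X(\mathcal X)+d_c\big)+e^\varepsilon d_c\,(b-a).
\end{equation}
Since $c\mu_X(\mathcal X)+d_c=1$, this rearranges to $(a-b)(1+e^\varepsilon d_c)\le e^\varepsilon-1$. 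Taking suprema over $B,x,x'$ and combining with the trivial bound $\eta_\text{TV}\le 1$ yields the claimed $\eta_\text{TV}(\varepsilon)$. The only subtle point in the upper bound is recognizing that a single application of the lemma leaves the term $s$ uncontrolled, and the complement trick removes it.

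For tightness, I would construct a symmetric binary kernel and tune it so that the PML constraint holds with equality. The simplest case is $\mathcal X=\{1,2\}$ with the counting measure, so that $\mu_X(\mathcal X)=2$ and $d_c=1-2c$, and $\mathcal Y=\{1,2\}$ with
\begin{equation}
\mathsf K=\begin{bmatrix} p & 1-p\\ 1-p & p\end{bmatrix},\qquad p\ge \tfrac12.
\end{equation}
Membership in $\mathcal M(\varepsilon,c)$ requires $\mathsf K(B|x)\le e^\varepsilon\inf_{P_X\in\mathcal P_c}(P_X\mathsf K)(B)$. The infimum is attained by placing mass $c$ on every point and the remaining mass $d_c$ on the minimizing input, so it equals $c(\mu_X\mathsf K)(B)+d_c\inf_{x'}\mathsf K(B|x')$; I would verify this characterization (it is essentially the converse direction of Lemma~\ref{lemma:pml_constraints}). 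For $B=\{1\}$ the constraint reads $p\le e^\varepsilon\big(c+d_c(1-p)\big)$, and the other sets are either symmetric or trivial. Choosing equality gives $p=e^\varepsilon(1-c)/(1+e^\varepsilon d_c)$, and then
\begin{equation}
2p-1=\frac{e^\varepsilon-1}{1+e^\varepsilon d_c}.
\end{equation}
If this value of $p$ exceeds $1$, which happens exactly when the bound exceeds $1$, I would instead take $p=1$. This choice still satisfies the constraint and attains $\eta_\text{TV}=1$.

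For a general finite measure space, I would split $\mathcal X$ into two sets of equal $\mu_X$-measure and let $\mathsf K$ depend only on which set $x$ lies in. The same computation then goes through with $\mu_X(\mathcal X)$ in place of $2$. I expect the main obstacle to be this general construction: it requires checking the infimum characterization over $\mathcal P_c$ in the measure-theoretic setting, and an equal-measure split may fail to exist when $\mu_X$ has atoms. I would handle atoms with a randomized or weighted version of the two-class kernel, or else state tightness for a suitable choice of space.
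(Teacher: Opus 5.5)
Your proposal is correct and follows the paper's proof. The upper bound is obtained exactly as you describe: Lemma~\ref{lemma:pml_constraints} is applied to $B$ with $(x,x')$ and to $B^c$ with the roles swapped, so that $c(\mu_X\mathsf K)(B)$ cancels. Your equal-measure two-class kernel is the paper's $\mathsf K^\star_{\varepsilon,c}$ with $\mu_X(\mathcal A)=\mu_X(\mathcal X)/2$, which is the paper's ``straightforward choice'' of $\mathcal A$.

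For atoms the paper uses the weighted kernel you anticipate, fractionally assigning a single atom, and your hedge is warranted. Take two atoms of $\mu_X$-mass $1$ and $3$ with $1/3<e^\varepsilon c<1$. Apply the constraint with $x=1$, the set $B$ attaining $\text{TV}(\mathsf K(\cdot|1)||\mathsf K(\cdot|2))$, and $P_X=(c,1-c)\in\mathcal P_c$. This yields $\eta_\text{TV}(\mathsf K)\le (e^\varepsilon-1)/(e^\varepsilon(1-c))<\eta_\text{TV}(\varepsilon)$ for every $\mathsf K\in\mathcal M(\varepsilon,c)$. So attainment genuinely needs a suitably balanced space, for example a non-atomic one.
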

\begin{proof}
    We have according to \citet{dobrushin1956central},
\begin{equation}
\label{eq:dobrushin}
    \eta_\text{TV}(\mathsf K) = \sup_{x\neq x'}\text{TV}\Big(\mathsf K(\cdot| x)||\mathsf K(\cdot | x')\Big) =  \sup_{x\neq x'} \sup_{B\in\Sigma_{\mathcal Y}}\Big(\mathsf K(B| x)-\mathsf K(B| x')\Big). 
\end{equation}

Fix any $B\in\Sigma_\mathcal Y$. We have $\mathsf K(B| x) = 1-\mathsf K(B^c| x)$. Applying Lemma \ref{lemma:pml_constraints} to both $B$ and $B^c$ yields the following two constraints on $\mathsf K$:
\begin{equation}
\label{eq:Zpmlconstraintz0}
    \mathsf K(B| x) \leq e^\varepsilon c (\mu_X\mathsf K)(B) + e^\varepsilon d_c \mathsf K(B| x'),
\end{equation}
and,
\begin{align}
    \label{eq:Zpmlconstraintz1}
    1-\mathsf K(B| x') &\leq e^\varepsilon c \int_\mathcal X (1-\mathsf K(B| x))\mu_X(dx) + e^\varepsilon d_c (1-\mathsf K(B|x)) \\[.5em] &= e^\varepsilon c (\mu_X(\mathcal X)- (\mu_X\mathsf K)(B))+ e^\varepsilon d_c (1-\mathsf K(B| x)).
\end{align}
Summing up \eqref{eq:Zpmlconstraintz0} and \eqref{eq:Zpmlconstraintz1} yields,
\begin{equation}
    \mathsf K(B|x)-\mathsf K(B|x') \leq \frac{e^\varepsilon-1}{1+e^\varepsilon d_c}.
\end{equation}
Since this holds for \emph{any} measurable set $B$, we obtain the bound,
\begin{equation}
    \sup_{x,x'} \sup_{B\in\Sigma_\mathcal Y} \Big\{\mathsf K(B| x) - \mathsf K(B|x')\Big\} \leq \frac{e^\varepsilon-1}{1+e^\varepsilon d_c}.  
\end{equation}
By the definition of $\eta_\text{TV}(\mathsf K)$, we directly obtain,
\begin{equation}
    \eta_\text{TV}(\mathsf K) \leq \frac{e^\varepsilon-1}{1+e^\varepsilon d_c}.
\end{equation}
Finally, if $e^\varepsilon -1 > 1+e^\varepsilon d_c$, then the bound is vacuous, and we can instead take the bound implied by the standard DPI, that is, $\eta_\text{TV}(\mathsf K)\leq 1$. This proves the upper bound. 

The following kernel constructions achieve the bound: Let $\varepsilon^* \coloneqq -\log(\nicefrac{c\mu_X(\mathcal X)}{2}) $If $\varepsilon<\varepsilon^*$, pick any set $\mathcal A \in \Sigma_\mathcal X$ such that $\max\{\mu_X(\mathcal A), \mu_X(\mathcal A^c)\}\leq (e^\varepsilon c)^{-1}$.\footnote{A straightforward choice of $\mathcal A$ here and below is a measurable set such that $\mu_X(\mathcal A) = \mu_X(\mathcal A^c)=\frac{1}{2}\mu_X(\mathcal X)$, if such a set exists in $\Sigma_\mathcal X$. If it does not, this can be resolved by a single bit of common randomness, see Appendix \ref{app:mechanismrandomization}.} For a binary output space $\mathcal Y =\{0,1\}$, define the kernel $\mathsf K^\star$ by,
    \begin{equation}
        \mathsf K_{\varepsilon,c}^\star(0|x) = \frac{1}{1+e^\varepsilon d_c}\begin{cases}
            e^\varepsilon(1-c\mu_X(\mathcal A)), &\text{if }x\in \mathcal A,\\ 
            1-e^\varepsilon c\mu_X(\mathcal A), &\text{if }x\in\mathcal A^c. 
        \end{cases}\,, \qquad \mathsf K^\star_{\varepsilon,c}(1|x) = 1- \mathsf K_{\varepsilon,c}^\star(0|x).
    \end{equation}
    If $\varepsilon\geq \varepsilon^*$, pick $\mathcal A\in\Sigma_\mathcal X$ such that $\min\{\mu_X(\mathcal A),\mu_X(\mathcal A^c)\}\geq (e^\varepsilon c)^{-1}$ and define,
    \begin{equation}
        \mathsf K_{\varepsilon,c}^\star(0|x) = \mathbf 1\{x\in \mathcal A\}, \quad \mathsf K^\star_{\varepsilon,c}(1|x) = \mathbf 1\{x\in\mathcal A^c\}.
    \end{equation}
    It can be checked that both of these constructions satisfy $\varepsilon$-PML on $\mathcal P_c$ in the respective ranges of $\varepsilon$, see Appendix \ref{app:mechanismrandomization}. Further, from the two-point characterization of $\eta_\text{TV}$ in \eqref{eq:dobrushin}, we to see that,
    \begin{equation}
        \eta_\text{TV}(\mathsf K_{\varepsilon,c}^\star) = \frac{1}{1+e^\varepsilon d_c}(e^\varepsilon(1-c\mu_X(\mathcal A)-1+e^\varepsilon c \mu_X(\mathcal A))= \frac{e^\varepsilon-1}{1+e^\varepsilon d_c},
    \end{equation} 
    if $\varepsilon <\varepsilon^*$ and $\eta_\text{TV}(\mathsf K^\star_{\varepsilon,c}) = 1$ if $\varepsilon \geq \varepsilon^*$.
\end{proof}
If $\mathsf K\in\mathcal M(\varepsilon,c)$ with $\varepsilon\geq \varepsilon^* \coloneqq -\log(c\mu_X(\mathcal X)/2)$, we say that it is in the \emph{low-privacy regime}. (Recall the definition of the \emph{high-privacy regime} $\varepsilon<-\log(c\mu_X(\mathcal X))$ in Lemma \ref{lem:PMLimpliesLDP}.) In the low-privacy regime, we have $\eta_\text{TV}(\varepsilon)=1$. It is important to point out that the bound in Theorem \ref{thm:TVcontraction} is on the \emph{input-distribution independent} contraction coefficient, while the privacy guarantee is made only for distributions in the $c$-interior. That is, we have used $c$-interior PML only as an algebraic constraint. However, the following Lemma shows that for the contraction of total variation distance specifically, this distinction is irrelevant. The lemma is proved in Appendix \ref{app:QdontchangeTVcontrproof}. 

\begin{lemma}
    \label{lem:QcdontchangeTVcontr}
        For any $0< c <\mu_X(\mathcal X)^{-1}$ and any $\varepsilon> 0$, we have
        \begin{equation}
             \sup_{\mathsf K \in\mathcal M(\varepsilon,c)} \eta^{\mathcal P_c}_\text{TV}(\mathsf K) = \eta_\text{TV}(\varepsilon). 
        \end{equation}
\end{lemma}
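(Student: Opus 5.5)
The upper bound is immediate. Since $\mathcal P_c\subseteq\mathcal P(\mathcal X)$, the supremum in \eqref{eq:contractiondefinition} over $\mathcal P_c$ is at most the unrestricted one. Hence $\eta^{\mathcal P_c}_\text{TV}(\mathsf K)\le \eta_\text{TV}(\mathsf K)\le \eta_\text{TV}(\varepsilon)$ for every $\mathsf K\in\mathcal M(\varepsilon,c)$ by Theorem~\ref{thm:TVcontraction}. The remaining work is the matching lower bound: we must exhibit, for each pair $(\varepsilon,c)$, a kernel in $\mathcal M(\varepsilon,c)$ and a pair $P_X\ne Q_X$ in $\mathcal P_c$ whose TV ratio equals (or approaches) $\eta_\text{TV}(\varepsilon)$.

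For the lower bound I will use the extremal kernels $\mathsf K^\star_{\varepsilon,c}$ from the proof of Theorem~\ref{thm:TVcontraction}, together with their set $\mathcal A$. These kernels are binary-output and constant on $\mathcal A$ and on $\mathcal A^c$. So for any $P_X,Q_X$,
\begin{equation}
(P_X\mathsf K^\star)(0)-(Q_X\mathsf K^\star)(0)=\big(P_X(\mathcal A)-Q_X(\mathcal A)\big)\big(\mathsf K^\star(0|a)-\mathsf K^\star(0|a^c)\big),
\end{equation}
where $a\in\mathcal A$ and $a^c\in\mathcal A^c$ are representative points. This gives $\text{TV}(P_X\mathsf K^\star\|Q_X\mathsf K^\star)=|P_X(\mathcal A)-Q_X(\mathcal A)|\,\eta_\text{TV}(\mathsf K^\star)$.

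It therefore suffices to pick $P_X,Q_X\in\mathcal P_c$ with $\text{TV}(P_X\|Q_X)=|P_X(\mathcal A)-Q_X(\mathcal A)|$. This holds whenever $dP_X/dQ_X$ is constant on $\mathcal A$ and constant on $\mathcal A^c$. Concretely, I take
\begin{equation}
p=c+\frac{d_c}{\mu_X(\mathcal A)}\mathbf 1_{\mathcal A},\qquad q=c+\frac{d_c}{\mu_X(\mathcal A^c)}\mathbf 1_{\mathcal A^c}.
\end{equation}
Both are densities bounded below by $c$, and they differ since $d_c>0$. Their likelihood ratio is piecewise constant on the partition $\{\mathcal A,\mathcal A^c\}$, so the optimal TV set is $\mathcal A$. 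Indeed, any mixture of $\mu_X$-uniform densities on $\mathcal A$ and on $\mathcal A^c$ would work here.

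Combining these steps gives ratio exactly $\eta_\text{TV}(\mathsf K^\star)$. That equals $(e^\varepsilon-1)/(1+e^\varepsilon d_c)$ when $\varepsilon<\varepsilon^*$ and equals $1$ when $\varepsilon\ge\varepsilon^*$, which is $\eta_\text{TV}(\varepsilon)$ in both regimes.

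The main obstacle is that $\eta_\text{TV}(\varepsilon)$ switches to $1$ at $e^\varepsilon-1=1+e^\varepsilon d_c$, i.e. at $e^\varepsilon=2/(c\mu_X(\mathcal X))=e^{\varepsilon^*}$. I need to confirm that this is the same threshold $\varepsilon^*$ used in the constructions, so that the two cases match without a gap. I also need a suitable $\mathcal A$ to exist in $\Sigma_{\mathcal X}$ in each regime. The paper already handles this via a half-measure split or the common-randomness device of Appendix~\ref{app:mechanismrandomization}; with a randomized construction I would check that the densities above remain valid after the augmentation. If an exact split is unavailable, a limiting sequence of sets would give the supremum rather than a maximum, which suffices for the stated equality.
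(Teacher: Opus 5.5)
Your proof is correct, but it takes a different route from the paper.

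\textbf{The paper's route.} It proves a stronger, kernel-wise identity. Every element of $\mathcal P_c$ can be written as $c\mu_X + d_c V_X$ with $V_X\in\mathcal P(\mathcal X)$. Under this reparametrization, both $P_X-Q_X$ and $P_X\mathsf K-Q_X\mathsf K$ become $d_c$ times the corresponding differences for $V_X,W_X$. Hence $\eta^{\mathcal P_c}_\text{TV}(\mathsf K)=\eta_\text{TV}(\mathsf K)$ for \emph{every} kernel, and the lemma follows at once from both halves of Theorem~\ref{thm:TVcontraction} (the upper bound and the achieving $\mathsf K^\star_{\varepsilon,c}$).

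\textbf{Your route.} You use only the trivial inequality for the upper bound. For the lower bound, you certify the value for the single extremal kernel by an explicit pair. That pair, $c\mu_X+d_c\,\mathrm{Unif}(\mathcal A)$ and $c\mu_X+d_c\,\mathrm{Unif}(\mathcal A^c)$, is itself an instance of the paper's reparametrization. The comparison is as follows:
\begin{itemize}
\item The paper's argument is shorter and does not depend on the particular form of $\mathsf K^\star_{\varepsilon,c}$.
\item Yours is more hands-on: it exhibits an attaining pair inside $\mathcal P_c$. However, it relies on the two-level structure of the extremal kernel.
\item Your threshold check is fine: $e^\varepsilon-1=1+e^\varepsilon d_c$ iff $e^\varepsilon c\mu_X(\mathcal X)=2$, i.e.\ $\varepsilon=\varepsilon^*$.
\end{itemize}

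\textbf{The atomic case.} This is the one soft spot. Take, e.g., counting measure on an odd alphabet with $\varepsilon$ close to $\varepsilon^*$. Then no admissible deterministic $\mathcal A$ exists, and there is no ``limiting sequence of sets'' either, so that fallback does not work. What does work is the randomized kernel of Appendix~\ref{app:mechanismrandomization}, with your densities adapted to it:
\begin{itemize}
\item Put the $d_c$-mass uniformly on the top level set $\mathcal T$ for $P_X$.
\item Put it uniformly on the bottom level set $\mathcal T^c\setminus\{x^*\}$ for $Q_X$.
\end{itemize}
Then $\text{TV}(P_X\|Q_X)=d_c$, and the output difference is $d_c$ times the gap between the top and bottom kernel values. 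The ratio is therefore again $\frac{e^\varepsilon-1}{1+e^\varepsilon d_c}$, or $1$ in the low-privacy regime.
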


Since $\varepsilon$-PML on $\mathcal P_0$, that is, $\varepsilon$-PML on $\mathcal P(\mathcal X)$ is equivalent to $\varepsilon$-LDP, Theorem \ref{thm:TVcontraction} yields a bound for kernels satisfying $\varepsilon$-LDP as a special case. In fact, as the following remark shows, it recovers the known result for LDP, and generalizes the results in \cite{kairouz2016extremal} to arbitrary probability spaces. Further, we highlight that the result in Theorem \ref{thm:TVcontraction} is in fact sharper than what can be shown via the correspondence between PML and LDP in Lemma \ref{lem:PMLimpliesLDP}.
\begin{remark}
\label{rem:ourtechniquebetterTV}
    If $c\to0$, that is, if a kernel $\mathsf K$ guarantees $\alpha$-LDP, Theorem \ref{thm:TVcontraction} shows that,
    \begin{equation}
    \label{eq:etaTVLDP}
        \eta_\text{TV}(\mathsf K) \leq \frac{e^\alpha-1}{e^\alpha+1}. 
    \end{equation}
    This is exactly the result shown in \cite[Corollary 11]{kairouz2016extremal} for discrete spaces. Using the correspondence in Lemma \ref{lem:PMLimpliesLDP}, \eqref{eq:etaTVLDP} also yields that if $\mathsf K$ satisfies $\varepsilon$-PML on $\mathcal P_c$, then, by realizing that $\eta_\text{TV}(\mathsf M) = d_c\eta_\text{TV}(\mathsf K)$, we find,
    \begin{equation}
        \eta_\text{TV}(\mathsf K) \leq \begin{cases}
            \frac{e^\varepsilon-1}{e^\varepsilon(1-2c\mu_X(\mathcal X))+1} , &\text{if }0\leq \varepsilon <-\log(c\mu_X(\mathcal X)),\\
            \frac{e^\varepsilon }{e^\varepsilon d_c+2}, &\text{otherwise.}
        \end{cases} 
    \end{equation}
    Clearly, both of these bounds are looser than what we derived in Theorem \ref{thm:TVcontraction} via $c$-interior PML contraints.
\end{remark}
We conclude this section by providing two examples of the general bound by specializing it to two canonical examples of probability spaces: finite (discrete) spaces and $p$-norm balls in $d$-dimensional Euclidean space. Note that, as detailed in Remark \ref{rem:spaceneedstobebounded}, for the latter, the definition of \say{densities uniformly bounded away from zero} requires the balls to have finite radius. Otherwise, the set $\mathcal P_c$ is empty for any $c>0$, as such densities cannot exist. 
\begin{example}[Discrete alphabets]
    Let $\mathcal X = [N]$ for some $N\in\mathbb N$, and let $\mu_X$ denote the counting measure on $\mathcal X$. For any kernel $\mathsf K$ that satisfies $\varepsilon$-PML on $\mathcal P_c$ (i.e., $\varepsilon$-PML w.r.t. all $P_X: \min_{i\in[N]} P_X(i)\geq c$) for some $c< |\mathcal X|^{-1}$,
    \begin{equation}
        \eta_\text{TV}(\mathsf K) \leq \min\left\{1,\,\frac{e^\varepsilon-1}{1+e^\varepsilon(1-c\cdot N)}\right\}.
    \end{equation}
\end{example}
\begin{example}[$d$-dimensional Euclidean space]
    For some $d\in\mathbb N$, $p\in[1,\infty)$, let $\mathcal X = \mathcal B_d^p \subset \mathbb R^d$ be the $d$-dimensional (closed) unit ball in the $p$-norm, that is, let,
    \begin{equation}
        \mathcal B_d^p \coloneqq \left\{x \in \mathbb R^d: ||x||_p = \bigg(\sum_{i=1}^d |x_i|^p\bigg)^{\nicefrac{1}{p}} \leq 1\right\}.
    \end{equation}
    Let $\mu$ denote the Lebesgue measure restricted to $\mathcal B_d^p$. A kernel $\mathsf K$ satisfies $\varepsilon$-PML on $\mathcal P_c$, if,
    \begin{equation}
        \log \left[k(y|x)\Big(\int_{\mathcal B^p_d}k(y|x')q(x')\mu_X\Big)^{-1}\right] \leq \varepsilon \quad \forall x \in\mathcal X, \,\mu_Y\text{-a.e.}, 
    \end{equation}
    for any $q \in \{q'=dP_X/d\mu_X: P_X\in\mathcal P_c,\; q'(x) \geq c \quad \forall x\in\mathcal X\}$. Under these conditions,
    \begin{equation}
        \eta_\text{TV}(\mathsf K) \leq \min\left\{1,\,\frac{e^\varepsilon-1}{1+e^\varepsilon(1-c\cdot \mathbb V(\mathcal B_d^p))}\right\},
    \end{equation}
    where $\mathbb V(\mathcal B_d^p) = \int_{\mathcal B_d^p}1\mu_X$ denotes the volume of the ball $\mathcal B_d^p$.
\end{example}

%% file: sections/E_gamma.tex
The result in Theorem \ref{thm:TVcontraction} extends to the $E_\gamma$-divergence, also known as \emph{Hockeystick-divergence}. First proposed by \citet{cohen1998comparisons}, the $E_\gamma$-divergence has found many applications in contemporary information theory, for example for channel resolvability \cite{liu2016e_} and channel coding converses \cite{polyanskiy2010arimoto,polyanskiy2010channel,sharma2012strong}. It further has a direct operational interpretation in terms of the hypothesis testing trade-off function via the convex conjugate \cite{dong2022gaussian,elkayam2016variational}. Because of the close connection between the $E_\gamma$-divergence and differential privacy in many of its variants \cite{barthe2013beyond,9517999,asoodeh2020contraction}, the measure has also become central in newer developments in the privacy literature, see, e.g., \cite{dong2022gaussian,balle2019privacy,gomez2025gaussian,dvijotham2020framework}.
\begin{definition}
    For two probability measures $P \ll Q$ and $\gamma \geq 0$, the \emph{Hockeystick-divergence $E_\gamma(P||Q)$} is the $f$-divergence corresponding to $f(t) = (t-\gamma)_+ - (1-\gamma)_+$, where $(x)_+ = \max\{0,x\}$. It can be equivalently expressed as, 
    \begin{equation}
    \label{eq:Egammadef}
        E_\gamma (P||Q) = \sup_{B}\Big[ \,P(B)-\gamma Q(B)\Big] - (1-\gamma)_+,
    \end{equation}
    where the supremum is over all measurable sets $B\in\Sigma_\mathcal Y$.
\end{definition}
Clearly, we have $E_1(P||Q) = \text{TV}(P||Q)$. Hence, the $E_\gamma$-divergence generalizes the total variation distance. As observed in \cite{liu2016e_}, perhaps the most simple operational interpretation of the $E_\gamma$-divergence is the following: In a Bayesian binary hypothesis testing setting, let $\pi_P$ and $\pi_Q$ be the prior probabilities of distributions $P$ and $Q$. Correctly distinguishing between $P$ and $Q$ is possible with probability,
\begin{equation}
    \pi_Q + \pi_P E_{\frac{\pi_Q}{\pi_P}}(P||Q).
\end{equation}
We will examine this application more closely in Section \ref{sec:minimax}.

A particularly valuable property of the $E_\gamma$-divergence is that it can be interpreted as an \say{elementary divergence} for $f$-divergences. In particular, for any $P$, $Q$, any $f$-divergence with twice-differentiable $f$ can be written as \cite{cohen1998comparisons,7552457},
\begin{equation}
\label{eq:fdivRepWithE_gamma}
    D_f(P||Q) = \int_0^\infty E_\gamma(P||Q)f''(\gamma)d\gamma = \int_1^\infty (E_\gamma(P||Q)f''(\gamma)+\gamma^{-3}E_\gamma(Q||P)f''(\gamma^{-1}))d\gamma.
\end{equation}
This property makes the $E_\gamma$-divergence particularily useful for deriving bounds on $f$-divergences, see \cite{7552457,hirche2024quantum,grosse2025bounds,polyanskiy2015dissipation}.
In this section, we investigate the contraction behavior of the $E_\gamma$-divergence when a transition kernel satisfies $\varepsilon$-PML on $\mathcal P_c$. 

\subsection{Contraction of $E_\gamma$-divergence with $\varepsilon$-PML as an Algebraic Property}
We begin by treating the requirement that $\mathsf K$ satisfy $\varepsilon$-PML on $\mathcal P_c$ as a purely algebraic property of the kernel. That is, we bound the \emph{unrestricted} contraction coefficient $\eta_\gamma(\mathsf K)$, where the optimization in \eqref{eq:contractiondefinition} is over input distributions from the entire simplex $\mathcal P(\mathcal X)$. Note that unlike $\eta_\text{TV}(\mathsf K)$, in the case of $E_\gamma$-divergence we often have $\eta_\gamma^{\mathcal P_c}(\mathsf K)<\eta_\gamma(\mathsf K)$ for $\gamma > 1$ and $c>0$. However, the treatment of $\varepsilon$-PML as an algebraic property only is of independent interest, as it can be used to quantify contraction for \emph{arbitrary} kernels, as they appear, e.g., when dealing with Markov-chain mixing. We will devise strategies to bound the set-restricted coefficient $\eta_\gamma^{\mathcal P_c}(\mathsf K)$ in the following sections. Since $\varepsilon$-PML on $\mathcal P_0$ is equivalent to $\varepsilon$-LDP, and the Hockeystick-divergence at $\gamma=1$ is equivalent to the total variation distance, the following statement can be seen as a generalization of both \cite[Theorem 1]{10206578} and Theorem \ref{thm:TVcontraction} above. 

\begin{theorem}
\label{thm:E_gamma_contraction}
    Let $(\mathcal X,\Sigma_\mathcal X)$ be a measurable space and let $\mu_X$ be a finite dominating measure on $\mathcal X$. For some $\varepsilon> 0$ and some $0<c < \mu_X(\mathcal X)^{-1}$, define
     \begin{equation}
        \gamma^* = \frac{e^\varepsilon d_c}{1-e^\varepsilon c \mu_X(\mathcal X)},
    \end{equation}
    If $\mathsf K$ is some transition kernel satisfying $\varepsilon$-PML on $\mathcal P_c$ , then for all $\gamma \geq 1$,
    \begin{equation}
        \eta_\gamma(\mathsf K)\leq \Psi_\gamma(e^\varepsilon,c) = \begin{cases}
            \left(\frac{\gamma^*-\gamma}{\gamma^*-1}\right)_+\eta_\text{TV}(\varepsilon), &\text{if }0\leq \varepsilon< -\log(c\mu_X(\mathcal X)), \\
            \eta_\text{TV}(\varepsilon), &\text{otherwise.}
        \end{cases}
    \end{equation}
    where $d_c \coloneqq 1-c\mu_X(\mathcal X)$ and $\eta_\text{TV}(\varepsilon)$ is defined as in Theorem \ref{thm:TVcontraction}.
\end{theorem}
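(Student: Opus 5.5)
The plan is to reduce the problem to a two-point statement about the rows of $\mathsf K$ and then combine two facts already established: the total variation bound of Theorem~\ref{thm:TVcontraction} and the LDP guarantee of Lemma~\ref{lem:PMLimpliesLDP}. The reduction uses the known two-point characterization of the $E_\gamma$ contraction coefficient for $\gamma\geq 1$ (due to Asoodeh et al.\ in the LDP/$E_\gamma$ contraction literature, the analogue of Dobrushin's formula \eqref{eq:dobrushintwopoint}):
\begin{equation}
    \eta_\gamma(\mathsf K) = \sup_{x\neq x'} E_\gamma\big(\mathsf K(\cdot|x)\,||\,\mathsf K(\cdot|x')\big) = \sup_{x\neq x'}\sup_{B\in\Sigma_\mathcal Y}\big[\mathsf K(B|x)-\gamma\mathsf K(B|x')\big],
\end{equation}
where the term $(1-\gamma)_+$ vanishes because $\gamma\geq 1$. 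It therefore suffices to bound $\mathsf K(B|x)-\gamma\mathsf K(B|x')$ uniformly over $x,x'\in\mathcal X$ and $B\in\Sigma_\mathcal Y$.

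\emph{Low-privacy case.} Suppose $\varepsilon\geq -\log(c\mu_X(\mathcal X))$. Since $\gamma\geq1$ and $\mathsf K(B|x')\geq 0$,
\begin{equation}
    \mathsf K(B|x)-\gamma\mathsf K(B|x') \leq \mathsf K(B|x)-\mathsf K(B|x').
\end{equation}
The proof of Theorem~\ref{thm:TVcontraction} bounds the right-hand side by $\eta_\text{TV}(\varepsilon)$, which gives the second case of $\Psi_\gamma$.

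\emph{High-privacy case.} Suppose $\varepsilon< -\log(c\mu_X(\mathcal X))$. Lemma~\ref{lem:PMLimpliesLDP} shows that $\mathsf K$ itself satisfies $\alpha$-LDP with $e^\alpha=\gamma^*$, so $\mathsf K(B|x)\leq\gamma^*\mathsf K(B|x')$ for all $x,x',B$.

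\begin{itemize}
    \item If $\gamma\geq\gamma^*$, then $\mathsf K(B|x)-\gamma\mathsf K(B|x')\leq 0$, which matches the positive part $(\cdot)_+$ in $\Psi_\gamma$.
    \item If $1\leq\gamma<\gamma^*$, write $\gamma$ as a convex combination of $1$ and $\gamma^*$. Set $t=(\gamma-1)/(\gamma^*-1)\in[0,1)$, so that $(1-t)\cdot 1+t\gamma^*=\gamma$. Then
\begin{equation}
    \mathsf K(B|x)-\gamma\mathsf K(B|x') = (1-t)\big[\mathsf K(B|x)-\mathsf K(B|x')\big] + t\big[\mathsf K(B|x)-\gamma^*\mathsf K(B|x')\big].
\end{equation}
    The second bracket is nonpositive by the LDP guarantee. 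The first bracket is at most $\eta_\text{TV}(\varepsilon)$ by Theorem~\ref{thm:TVcontraction}. Since $1-t=(\gamma^*-\gamma)/(\gamma^*-1)$, this yields exactly the first case of $\Psi_\gamma(e^\varepsilon,c)$.
\end{itemize}
Taking suprema over $B$ and over $x\neq x'$ completes the argument.

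As a consistency check, letting $c\to0$ gives $\gamma^*=e^\varepsilon$ and $\eta_\text{TV}(\varepsilon)=(e^\varepsilon-1)/(e^\varepsilon+1)$. The bound then becomes $(e^\varepsilon-\gamma)_+/(e^\varepsilon+1)$, which is the LDP result of \cite{10206578}.

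The main obstacle is the two-point characterization of $\eta_\gamma$ on general standard Borel spaces rather than finite alphabets. If it is not available in citable form, I would prove the needed inequality $E_\gamma(P_X\mathsf K||Q_X\mathsf K)\leq \sup_{x,x'}E_\gamma(\mathsf K_x||\mathsf K_{x'})\,E_\gamma(P_X||Q_X)$ directly. The idea is the usual coupling argument: decompose $P_X-\gamma Q_X$ via its Hahn decomposition and use $\gamma\geq 1$ to match mass. This is the step that requires care.
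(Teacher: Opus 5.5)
Your proof is correct, but it reaches the bound by a different route than the paper.

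\textbf{The paper's route.} It sets up a linear program in $p=\sup_x\mathsf K(B|x)$, $q=\inf_x\mathsf K(B|x)$ and $I=(\mu_X\mathsf K)(B)$. The constraints are the two inequalities of Lemma~\ref{lemma:pml_constraints} (applied to $B$ and $B^c$), together with $\mu_X(\mathcal X)q\le I\le\mu_X(\mathcal X)p$ and the box constraints. It then exhibits an explicit dual feasible point whose value is $\frac{e^\varepsilon d_c-\gamma(1-e^\varepsilon c\mu_X(\mathcal X))}{1+e^\varepsilon d_c}$ for $\gamma<\gamma^*$, and $0$ for $\gamma\ge\gamma^*$.

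\textbf{Your route.} You use that $\mathsf K(B|x)-\gamma\mathsf K(B|x')$ is affine in $\gamma$. You then interpolate between the $\gamma=1$ bound from Theorem~\ref{thm:TVcontraction} and the zero bound at $\gamma=\gamma^*$ from Lemma~\ref{lem:PMLimpliesLDP}. This implicitly needs $\gamma^*>1$, which holds because $\gamma^*-1=(e^\varepsilon-1)/(1-e^\varepsilon c\mu_X(\mathcal X))>0$ for $\varepsilon>0$.

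\textbf{How they relate.} The two arguments produce the same certificate. The paper's dual variables $\lambda_1^\star,\lambda_2^\star,\lambda_9^\star$ are affine in $\gamma$. They equal $(1-t)$ times the total-variation certificate (weight $1/(1+e^\varepsilon d_c)$ on each of the two PML constraints) plus $t$ times the LDP certificate (the PML constraint on $B$ combined with $I\le\mu_X(\mathcal X)p$), where $t=(\gamma-1)/(\gamma^*-1)$ is exactly your interpolation weight.

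\textbf{What each buys.} Your version is shorter and reuses results already proved rather than re-deriving them. It also explains the shape of the bound: $\gamma\mapsto\sup_{x,x',B}[\mathsf K(B|x)-\gamma\mathsf K(B|x')]$ is convex, so it lies below its chord. The LP version is self-contained and systematic. It makes explicit exactly which constraints the bound relies on, and one could check optimality relative to those constraints by examining the primal.

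\textbf{Your stated obstacle.} The paper simply cites \cite[Theorem 2]{asoodeh2020contraction} for the two-point characterization. In any case only the direction $\eta_\gamma(\mathsf K)\le\sup_{x,x'}E_\gamma(\mathsf K(\cdot|x)\|\mathsf K(\cdot|x'))$ is needed, and your Jordan-decomposition sketch does establish it. Let $e=E_\gamma(P_X\|Q_X)\le 1$. The positive part of $P_X-\gamma Q_X$ has mass $e$, and the negative part has mass $e+\gamma-1\ge\gamma e$. Normalizing both parts and using $\mathsf K(B|x')\ge 0$ gives the bound.
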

We prove Theorem \ref{thm:E_gamma_contraction} by an extension of the technique used to prove Theorem \ref{thm:TVcontraction}. First, note that $\eta_\gamma(\mathsf K)$ admits a similar two-point characterization as the Dobrushin coefficient in \eqref{eq:dobrushintwopoint}: As shown in \cite[Theorem 2]{asoodeh2020contraction}, we have,
\begin{equation}
    \eta_\gamma(\mathsf K) = \sup_{x\neq x'} E_\gamma\big(\mathsf K(\cdot |x)||\mathsf K(\cdot|x')\big) = \sup_{x\neq x'} \sup_{B\in\Sigma_\mathcal Y} \left\{\mathsf K(B|x) - \gamma \mathsf K(B|x')\right\}. 
\end{equation}
We therefore bound the difference
\begin{equation}
    \mathsf K(B|x)-\gamma \mathsf K(B|x') \quad \forall B\in\Sigma_\mathcal Y, \, \forall x,x'\in\mathcal X.
\end{equation}
While in the proof of Theorem \ref{thm:TVcontraction}, we obtain a bound on this quantity for $\gamma=1$ by simple algebraic manipulation, this simple algebra does not suffice for $\gamma >1$. We therefore find an upper bound by formulating a corresponding linear program, and constructing an explicit set of feasible dual variables. By standard Lagrangian duality \cite[Section~5.1.3]{boyd2004convex}, this leads us to the presented upper bound. The full proof is presented in Appendix \ref{app:Egammaproof}.

\begin{remark}
\label{rem:Egammarewriting}
    Whenever $\varepsilon$ is in the high-privacy regime, the contraction coefficient is decreasing in $\gamma$. For larger values of $\varepsilon$, the presented bound is the trivial bound via $\eta_\gamma(\mathsf K) \leq \eta_\text{TV}(\mathsf K)$. The result in Theorem \ref{thm:E_gamma_contraction} further recovers the result in \cite[Theorem 1]{10206578} for local differential privacy as $c\to 0$, where it is shown that, if $\mathsf K$ satisfies $\alpha$-LDP, then,
    \begin{equation}
    \label{eq:LDPHScontraction}
        \eta_\gamma(\mathsf K) \leq  \bigg(\frac{e^{\alpha}-\gamma}{e^{\alpha}+1}\bigg)_+.
    \end{equation}
    Analogous to the observation about Theorem \ref{thm:TVcontraction} in Remark \ref{rem:ourtechniquebetterTV}, the presented technique yields a sharper bound than plugging Lemma \ref{lem:PMLimpliesLDP} into \cite[Theorem 1]{10206578}. (However, the point $\gamma^*$ for which $\eta_{\gamma^*}(\mathsf K) =0$ remains identical with both techniques). It is worth pointing out that the behavior outside of the high-privacy regime is significantly different from \eqref{eq:LDPHScontraction}. In fact, whenever $\varepsilon>-\log(c|\mathcal X|)$ in the discrete case, we can construct kernels satisfying $\varepsilon$-PML on $\mathcal P_c$ that contain zero-valued entries in each row, e.g., the kernel $\mathsf K_1$ defined in the introduction, which satisfies $-\log(3c)$-PML on $\mathcal P_c$ and for any $\gamma \geq 1$, $\eta_\gamma(\mathsf K) = \nicefrac{1}{3}$. That is, the contraction is constant in $\gamma$. This shows that contrary to the high-privacy regime $\varepsilon<-\log(c\mu_X(\mathcal X))$, for larger values of $\varepsilon$ there can be no $\gamma^*$ for which $\eta_{\gamma^*}(\mathsf K) = 0$ for all kernels satisfying $\varepsilon$-PML on $\mathcal P_c$. However, as we will see below, when a restriction is made from $\eta_\gamma$ to $\eta_\gamma^{\mathcal P_c}$, this \say{zero-hitting} behavior is recovered for $\varepsilon$-PML guarantees on $\mathcal P_c$ with any $\varepsilon\geq 0$.
\end{remark}

\subsection{Contraction of $E_\gamma$-divergence on $\mathcal P_c$}
Theorem \ref{thm:E_gamma_contraction} provides a bound on the contraction coefficient given that only the channel satisfies an $\varepsilon$-PML on $\mathcal P_c$ constraint, while the input distributions when measuring divergence can be chosen arbitrarily from the simplex. That is, the privacy constraint is treated only as an algebraic property of the kernel. In general, we may be interested in a bound on the quantity $\eta_\gamma^{\mathcal P_c}$, quantifying contraction when the set of input distributions is also restricted to $\mathcal P_c$. The following lemma allows us to obtain a refinement beyond the trivial bound $\eta_\gamma^{\mathcal P_c}(\mathsf K) \leq \eta_\gamma(\mathsf K)$. Specifically, it will allow us to find a characteristic range of parameters $\gamma$ outside of which the $E_\gamma$-divergence between any two input distribution from $\mathcal P_c$ is zero.
\begin{lemma}
    \label{lem:gammamingammamaxbound}
    For $\varepsilon > 0$ and $0<c<\mu_X(\mathcal X)^{-1}$, let,
\begin{equation}
    \Gamma_{\max}(\varepsilon,c) \coloneqq \sup_{P_X,Q_X \in \mathcal P_c} \sup_{\mathsf K\in\mathcal M(\varepsilon,c)}\sup_{B\in\Sigma_\mathcal Y}\frac{(P_X\mathsf K)(B)}{(Q_X\mathsf K)(B)},
\end{equation}
\begin{equation}
    \Gamma_{\min}(\varepsilon,c) \coloneqq \inf_{P_X,Q_X \in \mathcal P_c} \inf_{\mathsf K\in\mathcal M(\varepsilon,c)}\inf_{B\in\Sigma_\mathcal Y}\frac{(P_X\mathsf K)(B)}{(Q_X\mathsf K)(B)}.
\end{equation}
We have $\Gamma_{\max}(\varepsilon,c)\leq e^\varepsilon d_c+1$ and $\Gamma_{\min}(\varepsilon,c)\geq (e^\varepsilon d_c + 1)^{-1}$.
\end{lemma}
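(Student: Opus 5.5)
The plan is to pass to the transformed kernel $\mathsf M$ of Lemma \ref{lem:PMLimpliesLDP}, which turns the ratio of output marginals under $\mathsf K$ into a ratio of output marginals of an LDP kernel. Fix $\mathsf K\in\mathcal M(\varepsilon,c)$, $P_X,Q_X\in\mathcal P_c$ and $B\in\Sigma_\mathcal Y$ with $(Q_X\mathsf K)(B)>0$. Lemma \ref{lem:PMLimpliesLDP} gives $P_X\mathsf K=\bar P_X\mathsf M$ and $Q_X\mathsf K=\bar Q_X\mathsf M$, where $\bar P_X=(P_X-c\mu_X)/d_c$ and $\bar Q_X=(Q_X-c\mu_X)/d_c$. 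Both are genuine probability measures, since $P_X\geq c\mu_X$ and the total mass is $(1-c\mu_X(\mathcal X))/d_c=1$. The same lemma shows that $\mathsf M$ satisfies $\alpha$-LDP with $e^\alpha=e^\varepsilon d_c+1$.

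By pure LDP we have $\mathsf M(B|x)\leq e^\alpha \mathsf M(B|x')$ for all $x,x'$. Integrating over $x\sim\bar P_X$ and $x'\sim\bar Q_X$ gives
\begin{equation}
(\bar P_X\mathsf M)(B)=\int \mathsf M(B|x)\bar P_X(dx)\leq e^\alpha\inf_{x'}\mathsf M(B|x')\leq e^\alpha(\bar Q_X\mathsf M)(B).
\end{equation}
Hence $(P_X\mathsf K)(B)\leq (e^\varepsilon d_c+1)(Q_X\mathsf K)(B)$. Taking suprema over $B$, $\mathsf K$, $P_X$ and $Q_X$ yields $\Gamma_{\max}(\varepsilon,c)\leq e^\varepsilon d_c+1$. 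Swapping the roles of $P_X$ and $Q_X$ in the same inequality gives $(P_X\mathsf K)(B)/(Q_X\mathsf K)(B)\geq (e^\varepsilon d_c+1)^{-1}$, and hence the bound on $\Gamma_{\min}$. If $(Q_X\mathsf K)(B)=0$, the same inequality forces $(P_X\mathsf K)(B)=0$, so these sets are excluded by convention ($0/0$) and do not affect the suprema.

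Alternatively, one can argue directly from Lemma \ref{lemma:pml_constraints}. Integrate $\mathsf K(B|x)\leq e^\varepsilon(c(\mu_X\mathsf K)(B)+d_c\mathsf K(B|x'))$ against $P_X$ in $x$ and against $Q_X$ in $x'$, and use $c(\mu_X\mathsf K)(B)\leq (Q_X\mathsf K)(B)$, which holds because $Q_X\geq c\mu_X$. This gives $(P_X\mathsf K)(B)\leq e^\varepsilon(1+d_c)(Q_X\mathsf K)(B)$, which is weaker than the claimed bound. The $\mathsf M$ route is what produces the sharper constant $e^\varepsilon d_c+1$, because it extracts the common floor $c\mu_X\mathsf K$ exactly before the LDP ratio is applied.

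The main point requiring care is therefore verifying the LDP guarantee for $\mathsf M$, but that is supplied by Lemma \ref{lem:PMLimpliesLDP}. Its content is that $\mathsf M(B|x)=c(\mu_X\mathsf K)(B)+d_c\mathsf K(B|x)$. Since Lemma \ref{lemma:pml_constraints} bounds $\mathsf K(B|x)\leq e^\varepsilon\mathsf M(B|x')$, we get $\mathsf M(B|x)\leq c(\mu_X\mathsf K)(B)+d_ce^\varepsilon\mathsf M(B|x')\leq(1+e^\varepsilon d_c)\mathsf M(B|x')$, using $c(\mu_X\mathsf K)(B)\leq\mathsf M(B|x')$. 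I would restate this one-line check in the proof to keep the argument self-contained.
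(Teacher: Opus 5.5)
Your proof is correct and is essentially the paper's argument. The paper bounds the ratio by $\sup_{P_X\in\mathcal P_c}(P_X\mathsf K)(B)/\inf_{Q_X\in\mathcal P_c}(Q_X\mathsf K)(B)$ and identifies these extremes as $c(\mu_X\mathsf K)(B)+d_c\sup_x\mathsf K(B|x)$ and $c(\mu_X\mathsf K)(B)+d_c\inf_x\mathsf K(B|x)$, i.e.\ as $\sup_x\mathsf M(B|x)$ and $\inf_x\mathsf M(B|x)$; it then applies Lemma~\ref{lemma:pml_constraints} inline, which is exactly the $\log(e^\varepsilon d_c+1)$-LDP property of $\mathsf M$ that you import from Lemma~\ref{lem:PMLimpliesLDP} (and your remarks on the $(Q_X\mathsf K)(B)=0$ case and on the weaker constant $e^\varepsilon(1+d_c)$ from direct integration are accurate).
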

Lemma \ref{lem:gammamingammamaxbound} is proved in Appendix \ref{app:lemgammamingammamaxbound} by a simple application of Lemma \ref{lemma:pml_constraints}. The quantities $\Gamma_{\min}$ and $\Gamma_{\max}$ are sometimes referred to \textit{relative information extrema} \cite{binette2019note} or max-divergence due to their close relation to the Rényi-divergence term $\exp(D_\infty(P_X||Q_X))$. Both quantities play an important role in the derivation of divergence inequalities (cf. $\beta_1$ and $\beta_2$ in \cite{7552457}). We will see an application to divergence bounds in Section \ref{sec:sdpi}. Here, the bound in Lemma \ref{lem:gammamingammamaxbound} together with Theorem \ref{thm:E_gamma_contraction} allows us to show the following corollary.
\begin{corollary}
\label{corr:restrictedEgammacontraction}
    Let $\varepsilon>0$ and $0< c<\mu_X(\mathcal X)^{-1}$. If a kernel $\mathsf K$ satisfies $\varepsilon$-PML on $\mathcal P_c$, then for any $\gamma \geq 1$,
    \begin{equation}
        \eta_\gamma^{\mathcal P_c}(\mathsf K) \leq \mathbf 1\{\gamma <e^\varepsilon d_c +1\}\eta_\gamma(\mathsf K).
    \end{equation}
\end{corollary}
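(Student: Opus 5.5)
Two cases. If $\gamma < e^\varepsilon d_c+1$, the indicator equals one, and the claim is the trivial inequality $\eta_\gamma^{\mathcal P_c}(\mathsf K)\leq \eta_\gamma(\mathsf K)$. This holds because the supremum in \eqref{eq:contractiondefinition} restricted to $\mathcal P_c$ ranges over a subset of the pairs in the unrestricted supremum. The only point to check is that every admissible pair $P_X\neq Q_X$ in $\mathcal P_c$ has $E_\gamma(P_X\|Q_X)>0$; otherwise the ratio is read with the convention that such pairs contribute nothing.

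The substantive case is $\gamma\geq e^\varepsilon d_c+1$, where we must show $\eta_\gamma^{\mathcal P_c}(\mathsf K)=0$. It suffices to show that the numerator $E_\gamma(P_X\mathsf K\|Q_X\mathsf K)$ vanishes for all $P_X,Q_X\in\mathcal P_c$. Since $\gamma\geq 1$, we have $(1-\gamma)_+=0$, and by \eqref{eq:Egammadef}
\begin{equation}
    E_\gamma(P_X\mathsf K\|Q_X\mathsf K)=\sup_{B\in\Sigma_\mathcal Y}\big[(P_X\mathsf K)(B)-\gamma (Q_X\mathsf K)(B)\big].
\end{equation}
Lemma \ref{lem:gammamingammamaxbound} gives $(P_X\mathsf K)(B)\leq \Gamma_{\max}(\varepsilon,c)\,(Q_X\mathsf K)(B)\leq (e^\varepsilon d_c+1)(Q_X\mathsf K)(B)\leq \gamma (Q_X\mathsf K)(B)$ for every $B$ with $(Q_X\mathsf K)(B)>0$. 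Hence every term in the supremum is nonpositive. If $(Q_X\mathsf K)(B)=0$, the same bound applied in its unnormalized form, namely Lemma \ref{lemma:pml_constraints} integrated against $P_X$ and compared with $Q_X\geq c\mu_X$, forces $(P_X\mathsf K)(B)=0$. The empty set achieves the value $0$, so $E_\gamma=0$ and every ratio in the supremum is $0$.

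Putting the two cases together gives the indicator form. The main subtlety is purely about conventions in the definition of $\eta_\gamma^{\mathcal P_c}$. For pairs in the first case with $E_\gamma(P_X\|Q_X)=0$, the ratio is $0/0$. For pairs in the second case, the numerator vanishes, so the ratio is $0$ whenever the denominator is positive; if the denominator is also zero, we again take the $0/0$ convention. I would handle all such pairs explicitly by excluding them from the supremum, or equivalently by treating $0/0$ as $0$.
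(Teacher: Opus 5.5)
Your proof is correct and follows the paper's argument. For $\gamma\geq e^\varepsilon d_c+1$, the bound $\Gamma_{\max}(\varepsilon,c)\leq e^\varepsilon d_c+1$ from Lemma~\ref{lem:gammamingammamaxbound} makes every term $(P_X\mathsf K)(B)-\gamma(Q_X\mathsf K)(B)$ nonpositive, so $E_\gamma(P_X\mathsf K\|Q_X\mathsf K)=0$; for smaller $\gamma$ the claim is just the inclusion $\mathcal P_c\subseteq\mathcal P(\mathcal X)$. Your write-up is slightly more careful than the paper's, which argues only for $\gamma>e^\varepsilon d_c+1$ and divides by $(Q_X\mathsf K)(B)$ without comment; you also cover the boundary value $\gamma=e^\varepsilon d_c+1$ required by the indicator (via the non-strict bound and $B=\emptyset$), the null sets of $Q_X\mathsf K$, and the $0/0$ convention.
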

\begin{proof}
    Due to the definition of the $E_\gamma$-divergence in \eqref{eq:Egammadef}, we have that, 
    \begin{equation}
         \frac{(P_X\mathsf K)(B)}{(Q_X\mathsf K)(B)} <\gamma \text{ for all } B\in\Sigma_\mathcal Y \implies E_\gamma(P_X\mathsf K||Q_X\mathsf K) = 0. 
    \end{equation}
    The bound on $\Gamma_{\max}(\varepsilon,c)$ in Lemma \ref{lem:gammamingammamaxbound} shows that whenever $\mathsf K$ satisfies $\varepsilon$-PML on $\mathcal P_c$, if $\gamma>e^\varepsilon d_c+1$, then,
    \begin{equation}
        \frac{(P_X\mathsf K)(B)}{(Q_X\mathsf K)(B)} \leq e^\varepsilon d_c+1 < \gamma.
    \end{equation}
    Hence the claim follows.
\end{proof}

The results in Theorem \ref{thm:E_gamma_contraction} and Corollary \ref{corr:restrictedEgammacontraction} together with the integral representation of $f$-divergences in \eqref{eq:fdivRepWithE_gamma} allows us to formulate strong data processing inequality for arbitrary $f$-divergences with twice-differentiable $f$. For the interested reader, we present this application of Theorem \ref{thm:E_gamma_contraction} in Section \ref{subsec:otherapp:nonlinearSPDI}.

%% file: sections/sdpi.tex
We now turn to bounds more specific to the setting of privacy guarantees for distributions in $\mathcal P_c$. We have shown in the preceding section that the restriction from $\mathcal P(\mathcal X)$ to $\mathcal P_c(\mathcal X)$ does not affect the contraction coefficient of the total variation distance. For other divergences, however, the restriction \emph{does} matter, and may often increases contraction in a non-trivial manner, as we have seen in Corollary \ref{corr:restrictedEgammacontraction}. For this reason, in this section, we focus on deriving SDPIs that hold \emph{only} for distributions in $\mathcal P_c$. This will later enable us to derive improved bounds on the minimax risk of estimation problems, given that the estimators satisfy $\varepsilon$-PML on $\mathcal P_c$ for some non-zero $c$.

The results we present are twofold: First, we use Binett's refinement of reverse Pinsker's inequality \cite{binette2019note} to derive a bound on arbitrary $f$-divergences that scales \emph{linearily} in the total variation distance between the input distributions. Second, we obtain SDPIs on the relative entropy in terms of the \emph{squared} total variation distance of the input distributions. This result will serve as a central tool for proving minimax bounds in Section~\ref{sec:minimax}.
\subsection{SDPIs on general $f$-divergences in Terms of Linear Total Variation Distance}
\begin{theorem} 
\label{thm:binettefdivbound}
    Let $f$ be a convex function such that $f(1)=0$. If $\mathsf K$ satisfies $\varepsilon$-PML on $\mathcal P_c$ for some $\varepsilon> 0$ and $0 < c <\mu_X(\mathcal X)^{-1}$, then for any $P_X,Q_X \in\mathcal P_c$, we have,
    \begin{equation}
        D_f(P_X\mathsf K||Q_X\mathsf K) \leq \frac{\eta_\text{TV}(\varepsilon)}{e^\varepsilon d_c}\bigg[f(e^\varepsilon d_c+1)+(e^\varepsilon d_c+1)f\big((e^\varepsilon d_c+1)^{-1}\big)\bigg]\text{TV}(P_X||Q_X).
    \end{equation}
\end{theorem}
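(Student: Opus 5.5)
The plan is to combine Binette's refinement of the reverse Pinsker inequality \cite{binette2019note} with the likelihood-ratio bounds of Lemma \ref{lem:gammamingammamaxbound} and the Dobrushin bound of Theorem \ref{thm:TVcontraction}. Binette's result states the following. Let $P\ll Q$ have relative information extrema $m=\inf_B P(B)/Q(B)$ and $M=\sup_B P(B)/Q(B)$, with $m<1<M$. Then
\begin{equation}
    D_f(P||Q) \leq \left(\frac{f(m)}{1-m}+\frac{f(M)}{M-1}\right)\text{TV}(P||Q).
\end{equation}
This comes from convexity of $f$ on $[m,M]$: one bounds $f(t)$ by the chord through $(m,f(m))$, $(1,0)$ and $(M,f(M))$, and integrates against $Q$ using $\int (t-1)_+\,dQ=\text{TV}$.

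I will apply this with $P=P_X\mathsf K$ and $Q=Q_X\mathsf K$ for $P_X,Q_X\in\mathcal P_c$. Write $\beta\coloneqq e^\varepsilon d_c+1$. Lemma \ref{lem:gammamingammamaxbound} gives $M\leq \beta$ and $m\geq \beta^{-1}$, and in particular $P\ll Q$.

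Next I need that the coefficient $\frac{f(m)}{1-m}+\frac{f(M)}{M-1}$ is monotone in the extrema. By convexity and $f(1)=0$, the slope $\frac{f(M)}{M-1}$ is nondecreasing in $M$, and $\frac{f(m)}{1-m}$ is nonincreasing in $m$ for $m<1$. Hence I may replace $(m,M)$ by $(\beta^{-1},\beta)$. The edge case $m=1$ or $M=1$ means $P=Q$, where both sides vanish. Substituting, $1-\beta^{-1}=e^\varepsilon d_c/\beta$ and $\beta-1=e^\varepsilon d_c$, so the coefficient equals
\begin{equation}
    \frac{1}{e^\varepsilon d_c}\Big[f(\beta)+\beta f(\beta^{-1})\Big].
\end{equation}
This is exactly the bracket in the statement.

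To finish, I bound the output total variation by the input one, $\text{TV}(P_X\mathsf K||Q_X\mathsf K)\leq \eta_\text{TV}(\mathsf K)\,\text{TV}(P_X||Q_X)\leq \eta_\text{TV}(\varepsilon)\,\text{TV}(P_X||Q_X)$, using Theorem \ref{thm:TVcontraction}. Since the coefficient is nonnegative, this gives the claim.

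The main obstacle is stating Binette's inequality correctly in this general measure-theoretic setting. The extrema are defined over sets $B$, whereas the chord argument needs pointwise bounds on the density $dP/dQ$. I would handle this by noting that the essential range of $dP/dQ$ lies in $[m,M]$, since the set where the density exceeds $M$ would otherwise be a set $B$ violating the bound. Alternatively I would redo the chord argument directly, which takes only a few lines.
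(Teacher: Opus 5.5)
Your proposal is correct and follows the paper's proof essentially step for step. Both apply Binette's reverse-Pinsker bound to $P_X\mathsf K$ and $Q_X\mathsf K$, then use monotonicity of the coefficient in the relative information extrema to substitute the bounds $e^\varepsilon d_c+1$ and $(e^\varepsilon d_c+1)^{-1}$ from Lemma~\ref{lem:gammamingammamaxbound}, and finally invoke Theorem~\ref{thm:TVcontraction} to pass to the input total variation. Your additional remarks fill in details the paper leaves implicit: the chord argument behind Binette's inequality, the equivalence of set-wise and density-wise extrema, the $P=Q$ edge case, and the nonnegativity of the coefficient.
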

\begin{proof}
    It is shown in \cite[Theorem 1]{binette2019note} that, 
    \begin{equation}
        D_f(P||Q) \leq \left(\frac{f(m)}{1-m}+\frac{f(M)}{M-1)}\right)\text{TV}(P||Q),
    \end{equation}
    where $m=\text{ess}\,\inf \nicefrac PQ$ and $M = \text{ess}\,\sup \nicefrac{P}{Q}$. Applying this inequality to $P_X\mathsf K$ and $Q_X\mathsf K$ we obtain, 
    \begin{equation}
    \label{eq:binettestatement}
        D_f(P_X\mathsf K||Q_X \mathsf K) \leq \text{TV}(P_X\mathsf K||Q_X\mathsf K)\Bigg[\frac{f\big(\Gamma_{\max}(\varepsilon,c)\big)}{\Gamma_{\max}(\varepsilon,c)-1} + \frac{f\big(\Gamma_{\min}(\varepsilon,c)\big)}{1-\Gamma_{\min}(\varepsilon,c)}\Bigg],
    \end{equation}
    with $\Gamma_{\max}$ and $\Gamma_{\min}$ defined as in Lemma \ref{lem:gammamingammamaxbound}. Since the multiplicative term on the RHS of \eqref{eq:binettestatement} is increasing in $\Gamma_{\max}$ and decreasing in $\Gamma_{\min}$, we can plug the bound from Lemma \ref{lem:gammamingammamaxbound} into \eqref{eq:binettestatement} to obtain,
    \begin{equation}
        \frac{f(\Gamma_{\max}(\varepsilon,c))}{\Gamma_{\max}(\varepsilon,c)-1} + \frac{f\big(\Gamma_{\min}(\varepsilon,c)\big)}{1-\Gamma_{\min}(\varepsilon,c)} \leq \frac{f(e^\varepsilon d_c+1)}{e^\varepsilon d_c} + (e^\varepsilon d_c+1)\frac{f(1/(e^\varepsilon d_c+1))}{e^\varepsilon d_c}.
    \end{equation}.
    Finally, due to Theorem \ref{thm:TVcontraction}, if $\mathsf K$ satisfies $\varepsilon$-PML on $\mathcal P_c$, then $\text{TV}(P_X\mathsf K||Q_X\mathsf K) \leq \eta_\text{TV}(\varepsilon)\text{TV}(P_X||Q_X)$.
\end{proof}

The result in Theorem \ref{thm:binettefdivbound} holds true for any $f$-divergence. To illustrate the bound, we next specialize it to the two tensorizing divergences, that is, to relative entropy and squared Hellinger divergence \cite{cruz2025tensorization}.
\begin{corollary}[Relative Entropy]
    \label{corr:relativeentropy}
    For any $0< c < \mu_X(\mathcal X)^{-1}$, $\varepsilon> 0$, assume the kernel $\mathsf K$ satisfies $\varepsilon$-PML on $\mathcal P_c$. Then we have for any $P_X,Q_X \in \mathcal P_c$,
    \begin{align}
       &D(P_X\mathsf K||Q_X\mathsf K) \leq \min\bigg\{\frac{e^\varepsilon-1}{e^\varepsilon d_c+1},\, 1\bigg\}\log\Big(e^\varepsilon d_c+1\Big)\text{TV}(P_X||Q_X).
    \end{align}
\end{corollary}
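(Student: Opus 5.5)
The plan is to read Corollary~\ref{corr:relativeentropy} directly off Theorem~\ref{thm:binettefdivbound}, so the only work is to evaluate the bracketed constant there for relative entropy. Take $f(t)=t\log t$, which is convex on $(0,\infty)$ and satisfies $f(1)=0$, so Theorem~\ref{thm:binettefdivbound} applies with $D_f=D$. By hypothesis $\mathsf K\in\mathcal M(\varepsilon,c)$ and $P_X,Q_X\in\mathcal P_c$, so
\begin{equation}
    D(P_X\mathsf K||Q_X\mathsf K)\leq \frac{\eta_\text{TV}(\varepsilon)}{e^\varepsilon d_c}\Big[f(a)+a f(a^{-1})\Big]\text{TV}(P_X||Q_X), \qquad a\coloneqq e^\varepsilon d_c+1 .
\end{equation}

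Next I would compute the bracket. Since $f(a)=a\log a$ and $a f(a^{-1}) = a\cdot a^{-1}\log(a^{-1}) = -\log a$, we get
\begin{equation}
    f(a)+a f(a^{-1}) = (a-1)\log a = e^\varepsilon d_c\log\big(e^\varepsilon d_c+1\big).
\end{equation}
The factor $e^\varepsilon d_c$ cancels the prefactor $1/(e^\varepsilon d_c)$. Here $e^\varepsilon d_c>0$, because $c<\mu_X(\mathcal X)^{-1}$ gives $d_c>0$.

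Finally, I would substitute the definition of $\eta_\text{TV}(\varepsilon)$ from Theorem~\ref{thm:TVcontraction}, namely $\eta_\text{TV}(\varepsilon)=\min\{(e^\varepsilon-1)/(1+e^\varepsilon d_c),1\}$. This gives exactly the claimed bound
\begin{equation}
    D(P_X\mathsf K||Q_X\mathsf K)\leq \min\Big\{\tfrac{e^\varepsilon-1}{e^\varepsilon d_c+1},1\Big\}\log\big(e^\varepsilon d_c+1\big)\,\text{TV}(P_X||Q_X).
\end{equation}

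There is no real obstacle. The one point to check is that the monotonicity argument inside the proof of Theorem~\ref{thm:binettefdivbound}, which replaces $\Gamma_{\max}$ and $\Gamma_{\min}$ by the bounds of Lemma~\ref{lem:gammamingammamaxbound}, is valid for $t\log t$. It is: the maps $M\mapsto f(M)/(M-1)$ and $m\mapsto f(m)/(1-m)$ are slopes of chords of a convex function through the point $(1,0)$. The first is therefore nondecreasing in $M$, and the second is nonincreasing in $m$, as that proof requires.
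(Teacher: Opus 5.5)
Your proposal is correct and follows the route the paper intends. The paper states Corollary~\ref{corr:relativeentropy} without a separate proof, as the specialization of Theorem~\ref{thm:binettefdivbound} to $f(t)=t\log t$; there the bracket collapses to $(a-1)\log a$ with $a=e^\varepsilon d_c+1$ and cancels the prefactor $1/(e^\varepsilon d_c)$. Your chord-slope argument for the monotonicity in $\Gamma_{\max}$ and $\Gamma_{\min}$ is a useful addition, since the paper's proof of Theorem~\ref{thm:binettefdivbound} only asserts it.
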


\begin{corollary}[Squared Hellinger Divergence]
\label{corr:hellinger}
    For any $0< c <\mu_X(\mathcal X)^{-1}$, $\varepsilon>0$, assume the kernel $\mathsf K$ satisfies $\varepsilon$-PML on $\mathcal P_c$. Then we have for any $P_X,Q_X \in\mathcal P_c$,
    \begin{equation}
        H^2(P_X\mathsf K||Q_X\mathsf K) \leq \min\bigg\{\frac{e^\varepsilon-1}{e^\varepsilon d_c+1},\, 1\bigg\} \frac{2(\sqrt{e^\varepsilon d_c+1}-1)^2}{e^\varepsilon d_c} \text{TV}(P_X||Q_X).
    \end{equation}
\end{corollary}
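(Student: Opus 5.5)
This is a direct specialization of Theorem \ref{thm:binettefdivbound} to $f(t)=(\sqrt t-1)^2$. Write $\beta \coloneqq e^\varepsilon d_c+1$ for brevity. Theorem \ref{thm:binettefdivbound} is stated for an arbitrary convex $f$ with $f(1)=0$, and $f(t)=(\sqrt t-1)^2$ qualifies. Hence for all $P_X,Q_X\in\mathcal P_c$ we get
\begin{equation}
    H^2(P_X\mathsf K||Q_X\mathsf K) \leq \frac{\eta_\text{TV}(\varepsilon)}{\beta-1}\Big[f(\beta)+\beta f(\beta^{-1})\Big]\text{TV}(P_X||Q_X).
\end{equation}
Here we used $e^\varepsilon d_c=\beta-1$.

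The remaining work is to evaluate the bracket. We have $f(\beta)=(\sqrt\beta-1)^2$. For the second term,
\begin{equation}
    \beta f(\beta^{-1}) = \beta\big(\beta^{-1/2}-1\big)^2 = \big(1-\sqrt\beta\big)^2 .
\end{equation}
This is the perspective-symmetry of the Hellinger generator, $t f(1/t)=f(t)$. So the bracket equals $2(\sqrt\beta-1)^2$, and the prefactor becomes $2(\sqrt{e^\varepsilon d_c+1}-1)^2/(e^\varepsilon d_c)$.

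Finally, substitute $\eta_\text{TV}(\varepsilon)=\min\{(e^\varepsilon-1)/(1+e^\varepsilon d_c),1\}$ from Theorem \ref{thm:TVcontraction}. This gives exactly the claimed inequality.

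I do not expect any step to be a real obstacle. The only points to check are that $f$ meets the hypotheses of Binette's inequality used in Theorem \ref{thm:binettefdivbound}, and that the monotonicity argument there (replacing $\Gamma_{\max},\Gamma_{\min}$ by $\beta,\beta^{-1}$ via Lemma \ref{lem:gammamingammamaxbound}) applies. Both are already covered by Theorem \ref{thm:binettefdivbound} for general convex $f$. The degenerate case $\beta=1$ cannot occur, since $\varepsilon>0$ and $c<\mu_X(\mathcal X)^{-1}$ give $d_c>0$.
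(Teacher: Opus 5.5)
Your proposal is correct and follows the paper's route: the corollary is the specialization of Theorem~\ref{thm:binettefdivbound} to $f(t)=(\sqrt t-1)^2$. There the identity $\beta f(\beta^{-1})=f(\beta)$ collapses the bracket to $2(\sqrt{e^\varepsilon d_c+1}-1)^2$, and $\eta_\text{TV}(\varepsilon)$ from Theorem~\ref{thm:TVcontraction} supplies the minimum prefactor.
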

\begin{example}
\label{ex:RR}
Let $N=10$ and consider the mechanism $\mathsf K_3 \in \mathcal M(\log\nicefrac{10}{3},0.05)$ defined by
\begin{equation}
    \mathsf K_3 = \begin{bmatrix}
        \nicefrac{15}{16} & \dots & \nicefrac{15}{16} &\nicefrac{1}{16} & \dots & \nicefrac{1}{16} \\
        \nicefrac{1}{16} & \dots & \nicefrac{1}{16} &\nicefrac{15}{16} & \dots & \nicefrac{15}{16}
    \end{bmatrix}^\top,
\end{equation}
further, for $N=5$, consider $\mathsf K_4 \in \mathcal M(\log\nicefrac{10}{3},0.1)$ given by,
\begin{equation}
    \mathsf K_4 = \begin{bmatrix}
        \nicefrac{1}{3} & \nicefrac{1}{3} & \nicefrac{1}{3} & 0 & 0 \\
        0 & \nicefrac{1}{3} & \nicefrac{1}{3} & \nicefrac{1}{3} & 0 \\
        0 & 0 & \nicefrac{1}{3} & \nicefrac{1}{3} & \nicefrac{1}{3} \\
        \nicefrac{1}{3} & 0 & 0 & \nicefrac{1}{3} & \nicefrac{1}{3} \\
        \nicefrac{1}{3} & \nicefrac{1}{3} & 0 & 0 & \nicefrac{1}{3}
    \end{bmatrix}.
\end{equation}
Note that $\mathsf K_3$ also satisfies $\log(15)$-LDP. In Figure \ref{fig:corrbounds}(a), we compare our presented upper bound in Corollary \ref{corr:relativeentropy} to the bound presented by \citet{duchi2013local}, the bound presented by \citet[Theorem 5]{10206578} and the bound obtained by the contraction coefficient of relative entropy presented by \citet{asoodeh2024contraction} with an additional application of reverse Pinsker's inequality in \cite[Theorem 28]{7552457} (with $Q_{\min}=c$) to obtain a bound between $D(P_X\mathsf K||Q_X\mathsf K)$ and $\text{TV}(P_X||Q_X)$. Since $\mathsf K_4$ does not satisfy any finite LDP guarantee, such a comparison is not possible in this case. The bound in Corollary \ref{corr:hellinger} is shown in Figure \ref{fig:corrbounds}(b) for $\mathsf K_4$.
\end{example}

\begin{figure}
    \centering
    \begin{subfigure}[b]{.49\linewidth}
    \centering
    \label{fig:corrbounds:subfig:corr1}
        \includegraphics[scale=0.65]{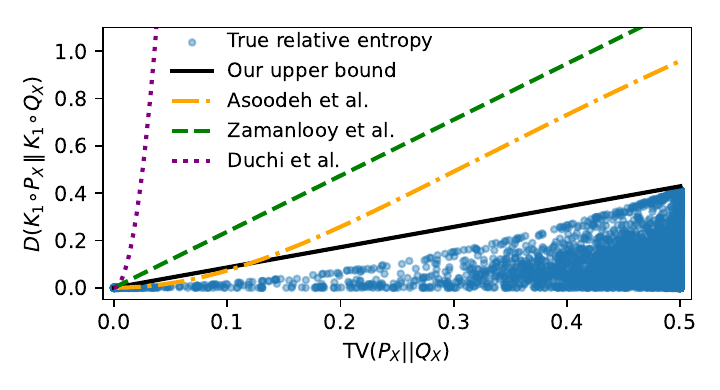}
        \caption{$\mathsf K_3$, $N=10$, $c=0.05$.}
    \end{subfigure}
    \begin{subfigure}[b]{.49\linewidth}
    \centering
    \label{fig:corrbounds:subfig:corr2}
        \includegraphics[scale=0.65]{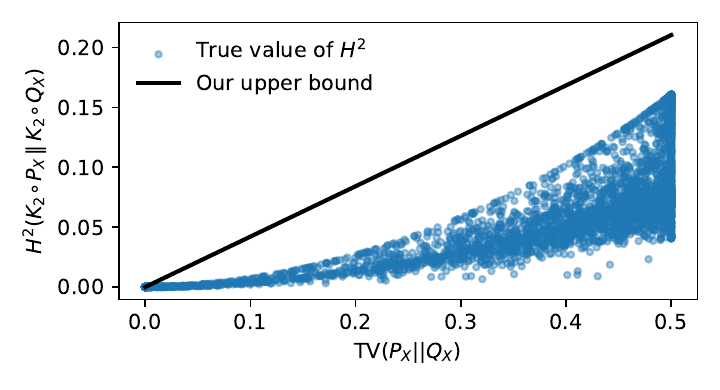}
        \caption{$\mathsf K_4$, $N=5$, $c=0.1$.}
    \end{subfigure}
    \caption{Numerical evaluation of the bounds presented in Corollaries \ref{corr:relativeentropy} and \ref{corr:hellinger} for randomly generated distributions in the set $\mathcal P_c$. $\mathsf K_3$ and $\mathsf K_4$ are defined in Example \ref{ex:RR}. For $\mathsf K_3$, the presented bound is the tightest \emph{linear} bound possible, and is only outperformed by the bound in \cite{asoodeh2024contraction} for samll values of $\varepsilon$. For $\mathsf K_4$, LDP-based bounds fail due to the zero-valued entries in its stochastic matrix.}
    \label{fig:corrbounds}
\end{figure}

\subsection{SDPIs on Tensorizing Divergences in Terms of Squared Total Variation Distance}
\label{sec:tensorizing}
In order to obtain bounds on sample complexities, it is necessary to examine the behavior of the divergences between kernel-outputs given $n$ i.i.d. inputs. That is, for deriving structural minimax bounds, we are interested in the divergences between measures of the form $(P_X\mathsf K)^{\otimes n}$. In this context, we make use of the \emph{tensorization} property of relative entropy, which states that,
\begin{equation}
    D\left((P_X\mathsf K)^{\otimes n}||(Q_X\mathsf K)^{\otimes n}\right) = n D(P_X \mathsf K ||Q_X\mathsf K).
\end{equation}
In this section, we present two SDPIs between relative entropy and the squared total variation distance. Both of the inequalities are proved via a bound on the $\chi^2$-distance, for which we have $D(P||Q)\leq \chi^2(P||Q)$. At first glance, it might seem like the choice of \emph{squared} total variation distance is arbitrary, and that we might as well have simply used the SDPIs in (linear) terms of total variation presented the preceeding section. However, the following relation conceptually shows why SDPIs in terms of squared total variation are particularly useful for for deriving sample complexity bounds: Let $\eta$ be a constant such that $D(P_X\mathsf K||Q_X\mathsf K)\leq \eta \text{TV}^2(P_X||Q_X)$. Applying Pinsker's inequality to the $n$-fold product distribution of outputs, we see that,
\begin{equation}
    \text{TV}((P_X\mathsf K)^{\otimes n}||(Q_X\mathsf K)^{\otimes n}) \leq \sqrt{\frac{n}{2}D(P_X\mathsf K||Q_X\mathsf K)} \leq \sqrt{\frac{\eta \cdot n}{2}}\,\text{TV}(P_X||Q_X).
\end{equation}
That is, an SDPI on relative entropy in terms of \emph{squared} total variation allows us to bound the total variation distance between $n$-fold product output distributions in terms of the total variation of the single-letter input distributions.

\begin{theorem}
\label{thm:chi^2_TV^2_nr2}
    Let $0<c<\mu_X(\mathcal X)^{-1}$ and $\varepsilon> 0$. If a kernel $\mathsf K$ satisfies $\varepsilon$-PML on $\mathcal P_c$, then for all $P_X,Q_X\in\mathcal P_c$,
    \begin{equation}
        D(P_X\mathsf K||Q_X \mathsf K) \leq \,\min\{4\,\Xi(\varepsilon,c)\text{TV}^2(P_X||Q_X),\,\tilde\Xi(\varepsilon,c)\text{TV}(P_X||Q_X)\}.
    \end{equation}
    where we define, $\Xi(\varepsilon,c)\coloneqq\min\{A,B\}$, $\tilde \Xi(\varepsilon,c)\coloneqq\min\{d_cA,B\}$, with,\footnote{To cover the case $1-e^\varepsilon c\mu_X(\mathcal X)=0$, we define $1/0 \equiv \infty$ here, meaning that in this case, the minimum function always chooses $A$.}
    \begin{equation}
        A \coloneqq\frac{e^{2\varepsilon}}{e^\varepsilon d_c +1}, \,\quad B\coloneqq  \frac{(e^\varepsilon-1)^2}{e^\varepsilon d_c(1-e^\varepsilon c\mu_X(\mathcal X))_+}.
    \end{equation}

\end{theorem}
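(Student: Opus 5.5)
The plan is to bound the relative entropy by the $\chi^2$-divergence, $D(P_X\mathsf K\|Q_X\mathsf K)\le \chi^2(P_X\mathsf K\|Q_X\mathsf K)$, and then bound $\chi^2$ in two ways: once quadratically and once linearly in $\text{TV}(P_X\|Q_X)$. In each case I will produce the two constants $A$ and $B$ separately and take minima. Write $p_{\mathsf K}(y)=\int k(y|x)p(x)\mu_X(dx)$ and $q_{\mathsf K}$ similarly. Since $\int(p-q)\,d\mu_X=0$, for any function $g(y)$ we have
\begin{equation}
p_{\mathsf K}(y)-q_{\mathsf K}(y)=\int_{\mathcal X}\big(k(y|x)-g(y)\big)(p-q)(x)\,\mu_X(dx).
\end{equation}
Choosing $g(y)=\tfrac12(\sup_x k(y|x)+\inf_x k(y|x))$ gives
\begin{equation}
|p_{\mathsf K}(y)-q_{\mathsf K}(y)|\le \big(\sup_x k(y|x)-\inf_x k(y|x)\big)\,\text{TV}(P_X\|Q_X).
\end{equation}
Consequently $\chi^2\le \text{TV}^2(P_X\|Q_X)\int (\sup k-\inf k)^2/q_{\mathsf K}\,d\mu_Y$, and the whole task reduces to bounding the ratio $(\sup_x k-\inf_x k)^2/q_{\mathsf K}$ pointwise by a multiple of a density that integrates to a known constant. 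The factor $4$ in the statement leaves room for a cruder choice of $g$, for example $g=\inf_x k$, if the midpoint choice is inconvenient.

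For the constant $B$, which is only active in the high-privacy regime $e^\varepsilon c\mu_X(\mathcal X)<1$, I will use Lemma \ref{lem:PMLimpliesLDP} in density form. The first ingredient is that $Q_X\in\mathcal P_c$ implies the floor $q_{\mathsf K}\ge c(\mu_X\mathsf K)+d_c\inf_x k$. The second is the PML upper bound $\sup_x k\le e^\varepsilon\big(c(\mu_X\mathsf K)+d_c\inf_x k\big)$ from Lemma \ref{lemma:pml_constraints}. Eliminating $c(\mu_X\mathsf K)$ between these, together with $\sup_x k\le e^\varepsilon q_{\mathsf K}$, should give
\begin{equation}
\sup_x k-\inf_x k\;\lesssim\;\frac{e^\varepsilon-1}{1-e^\varepsilon c\mu_X(\mathcal X)}\,q_{\mathsf K}.
\end{equation}
This is the Duchi-type argument with $e^\alpha$ from Lemma \ref{lem:PMLimpliesLDP}. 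Squaring and integrating against $q_{\mathsf K}$ then gives a constant of order $(e^\varepsilon-1)^2/(1-e^\varepsilon c\mu_X(\mathcal X))^2$. To reach the sharper denominator $e^\varepsilon d_c(1-e^\varepsilon c\mu_X(\mathcal X))$, I will keep one factor as the LDP ratio and bound the other factor via the $\mathsf M$-kernel: $\eta_{\text{TV}}(\mathsf M)=d_c\eta_{\text{TV}}(\mathsf K)$, and the $\mathsf M$ densities are at least $q_{\mathsf K}/(e^\varepsilon d_c+1)$.

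For the constant $A$, which is valid in every regime, I will not use LDP. Instead I bound $(\sup k-\inf k)^2/q_{\mathsf K}\le \sup k\cdot(\sup k)/q_{\mathsf K}\le e^\varepsilon\sup_x k$ using PML at $Q_X$. The remaining step is to control $\int\sup_x k\,d\mu_Y$ through $\sup_x k\le e^\varepsilon(c\mu_X\mathsf K+d_c\inf k)$ and $q_{\mathsf K}\ge c\mu_X\mathsf K+d_c\inf k$. I expect the target $e^{2\varepsilon}/(e^\varepsilon d_c+1)$ to come out by comparing $q_{\mathsf K}$ with the output of the worst-case distribution in $\mathcal P_c$, in the same way $\Gamma_{\max}\le e^\varepsilon d_c+1$ arose in Lemma \ref{lem:gammamingammamaxbound}.

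For the linear-in-$\text{TV}$ part I will write $\chi^2=\int(r-1)^2\,dQ_X\mathsf K$ with $r=dP_X\mathsf K/dQ_X\mathsf K$. This gives $\chi^2\le \sup|r-1|\cdot 2\,\text{TV}(P_X\mathsf K\|Q_X\mathsf K)$. I will bound $\sup|r-1|$ either by $\Gamma_{\max}-1\le e^\varepsilon d_c$ (Lemma \ref{lem:gammamingammamaxbound}) or by the LDP ratio, and bound the output total variation by Theorem \ref{thm:TVcontraction}. Matching these products to $d_cA$ and $B$ is bookkeeping; it may require using the pointwise bound on $|p_{\mathsf K}-q_{\mathsf K}|$ once rather than twice. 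The main obstacle is obtaining the exact constants $A$ and $B$ rather than looser ones of the same order. In particular, the integral $\int\sup_x k\,d\mu_Y$ in the $A$ bound is not obviously finite, and I expect to have to replace $\sup_x k$ by a quantity dominated by $q_{\mathsf K}$, or by $\mu_X\mathsf K$ (whose integral is $\mu_X(\mathcal X)$), using the two PML inequalities above.
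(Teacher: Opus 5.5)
Your overall architecture matches the paper's: $D\le\chi^2$, one quadratic and one linear bound in $\text{TV}$, and $B$ obtained from the $\log\gamma^*$-LDP of $\mathsf K$ itself. The core step of the quadratic part, however, does not work. The reduction $\chi^2(P_X\mathsf K\|Q_X\mathsf K)\le\text{TV}^2(P_X\|Q_X)\int(\sup_x k-\inf_x k)^2/q_{\mathsf K}\,d\mu_Y$ is the Duchi--Jordan--Wainwright step. The integral on the right can genuinely exceed $4\,\Xi(\varepsilon,c)$, so no later bookkeeping can recover the stated constant.

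Here is a concrete instance. Take $\mathcal X=\mathcal Y=[N]$ with counting measure, $N$-ary randomized response $k(y|x)=(e^\alpha\mathbf 1\{y=x\}+\mathbf 1\{y\neq x\})/(e^\alpha+N-1)$, and $Q_X$ uniform.
\begin{itemize}
\item Since $\sum_x k(y|x)=1$, the exact $c$-interior condition $\sup_x k(y|x)\le e^\varepsilon\big(c+d_c\inf_x k(y|x)\big)$ shows $\mathsf K\in\mathcal M(\varepsilon,c)$ for $e^\varepsilon=e^\alpha/(1+c(e^\alpha-1))$.
\item Hold $cN$ fixed and let $N\to\infty$. Your integral equals $N^2(e^\alpha-1)^2/(e^\alpha+N-1)^2\to(e^\alpha-1)^2$, while $e^\varepsilon\to e^\alpha$.
\item For $e^\alpha=10$ and $cN=0.01$ this is about $81$, whereas $4\,\Xi(\varepsilon,c)\approx36$. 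In general it is of order $e^{2\varepsilon}$, while $4A\le 4e^\varepsilon/d_c$.
\end{itemize}
The same example rules out your proposed sharpening of $B$. There $\int(\sup_x k-\inf_x k)\,d\mu_Y\to e^\alpha-1$ while $\eta_{\text{TV}}(\mathsf K)\to0$, so that factor is not controlled by any TV-contraction quantity of $\mathsf K$ or $\mathsf M$. Your worry about finiteness is unfounded: $\int\sup_x k\,d\mu_Y\le e^\varepsilon$ by Lemma~\ref{lemma:pml_constraints} and $\int\inf_x k\,d\mu_Y\le 1$. But this only yields $e^{2\varepsilon}$.

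The missing idea is to avoid taking $\sup/\inf$ over $x$ pointwise in $y$.
\begin{itemize}
\item Write $P_X-Q_X=\text{TV}(P_X\|Q_X)\,(\pi_+-\pi_-)$ with probability measures $\pi_\pm$, and set $R_\pm\coloneqq d(\pi_\pm\mathsf K)/d(Q_X\mathsf K)$. Then $\chi^2(P_X\mathsf K\|Q_X\mathsf K)=\text{TV}^2\,\mathbb E_{Q_X\mathsf K}[(R_+-R_-)^2]\le 2\,\text{TV}^2(\mathrm{Var}\,R_++\mathrm{Var}\,R_-)$.
\item A mean-one variable with values in $[m,M]$ has variance at most $(M-1)(1-m)$.
\item PML at $Q_X\in\mathcal P_c$ gives $R_\pm\in[0,e^\varepsilon]$, hence $\chi^2\le 4(e^\varepsilon-1)\text{TV}^2\le 4A\,\text{TV}^2$.
\item In the high-privacy regime, the $\log\gamma^*$-LDP of $\mathsf K$ gives $R_\pm\in[1/\gamma^*,\gamma^*]$, hence the constant $4(\gamma^*-1)^2/\gamma^*=4B$.
\end{itemize}
This is the mechanism behind \cite[Theorem 2]{asoodeh2024contraction}, which the paper simply cites. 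Applied to $\mathsf M$ from Lemma~\ref{lem:PMLimpliesLDP} with $e^\alpha=e^\varepsilon d_c+1$ and $\text{TV}(\bar P_X\|\bar Q_X)=\text{TV}(P_X\|Q_X)/d_c$, it gives $4A\,\text{TV}^2$ and $d_cA\,\text{TV}$. Applied to $\mathsf K$ with $e^\alpha=\gamma^*$, it gives the $B$ terms.

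Your linear part is closer, but as written it also misses the constants. The inequality $\chi^2\le\sup|r-1|\cdot 2\,\text{TV}(P_X\mathsf K\|Q_X\mathsf K)$ yields $2e^\varepsilon d_c\,\eta_{\text{TV}}(\varepsilon)$, respectively $2(\gamma^*-1)\,\eta_{\text{TV}}(\varepsilon)$. These exceed $d_cA$ and $B$ by the factors $2(1-e^{-\varepsilon})$ and $2e^\varepsilon d_c/(e^\varepsilon d_c+1)$, both larger than $1$ for moderate $\varepsilon$. Split over $\{r\ge1\}$ and $\{r<1\}$ instead, which gives $\chi^2\le(\sup r-\inf r)\,\text{TV}(P_X\mathsf K\|Q_X\mathsf K)$. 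Then $\big((e^\varepsilon d_c+1)-(e^\varepsilon d_c+1)^{-1}\big)\eta_{\text{TV}}(\varepsilon)\le d_cA$ and $(\gamma^*-1/\gamma^*)\,\eta_{\text{TV}}(\varepsilon)\le B$ do hold. So that part is repairable via Lemma~\ref{lem:gammamingammamaxbound} and Theorem~\ref{thm:TVcontraction}.
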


\begin{proof}
    It is shown in \cite[Theorem 2]{asoodeh2024contraction}, that if $\mathsf M \in\mathcal M(\alpha,0)$, then,
    \begin{equation}
        \chi^2(P_X \mathsf M||Q_X \mathsf M) \leq \frac{(e^{\alpha}-1)^2}{e^{\alpha}}\min\{4\text{TV}^2(P_X||Q_X),\,\text{TV}(P_X||Q_X)\}, \quad \forall P_X,Q_X \in \mathcal P(\mathcal X).
    \end{equation}
    Using the implication from $\varepsilon$-PML on $\mathcal P_c$ to $(1+e^\varepsilon d_c)$-LDP in Lemma \ref{lem:PMLimpliesLDP}, we can therefore write,
    \begin{equation}
        \chi^2(P_X\mathsf K||Q_X\mathsf K) = \chi^2(\bar P_X\mathsf M||\bar Q_X\mathsf M) \leq \frac{4e^{2\varepsilon} d_c^2}{e^\varepsilon d_c +1}\text{TV}^2(\bar P_X||\bar Q_X). 
    \end{equation}
    What remains is to take note of the fact that,
    \begin{equation}
        \text{TV}(\bar P_X||\bar Q_X) = \text{TV}\left(\frac{P_X - c\mu_X}{d_c}\bigg|\bigg|\frac{Q_X-c\mu_X}{d_c}\right) = \frac{1}{d_c}\text{TV}(P_X||Q_X).
    \end{equation}
    Hence we have, 
    \begin{equation}
        \chi^2(P_X\mathsf K||Q_X\mathsf K) \leq \min\left\{\frac{4e^{2\varepsilon}}{e^\varepsilon d_c +1} \text{TV}^2(P_X||Q_X),\,\frac{e^{2\varepsilon }d_c}{e^\varepsilon d_c+1}\text{TV}(P_X||Q_X)\right\} \quad \forall P_X,Q_X \in \mathcal P_c.
    \end{equation}
    The second part of the minium is shown by the second part of Lemma \ref{lem:PMLimpliesLDP}: If $\varepsilon <-\log(c\mu_X(\mathcal X))$, then $\mathsf K$ itself satisfies $\alpha$-LDP with parameter $\alpha = (e^\varepsilon d_c)/(e^\varepsilon d_c(1-e^\varepsilon c \mu_X(\mathcal X)))$. Plugging this into \cite[Theorem 2]{asoodeh2024contraction} yields,
    \begin{equation}
        \chi^2(P_X\mathsf K||Q_X\mathsf K) \leq \frac{(e^\varepsilon-1)^2}{e^\varepsilon d_c (1-e^\varepsilon c \mu_X(\mathcal X))}\text{TV}^2(P_X||Q_X), \quad \varepsilon<-\log(c\mu_X(\mathcal X)).
    \end{equation}
     The desired result follows, since $D(P||Q)\leq \chi^2(P||Q)$ \cite{Polyanskiy_Wu_2025}.
\end{proof}

%% file: sections/minimax.tex
In this section, we will apply the technical results derived above to two statistical problems: Binary hypothesis testing and mean estimation. Even though both of these problems are somewhat canonical textbook problems, they remain central in modern data processing systems. For example, the gradient aggregation phase in federated learning essentially boils down to a mean estimation problem \cite{alistarh2017qsgd}, and many modern applications like watermark-testing in LLMs \cite{kirchenbauer2023watermark} or best-arm identification in multi-armed bandits \cite{garivier2016optimal} are closely connected to hypothesis testing. 

We will examine the binary hypothesis testing problem with $n$ samples and the Bayesian case with one sample under PML constraints first. After, we will analyze the mean estimation problem with PML privacy. We will first present the simple one-dimensional case as a warm-up. Then, we devote a large part of this section to the treatment of the mean estimation problem in general $d$-dimensions. The results show that using $c$-interior PML in estimation task can be beneficial in the sense that it often decreases the sample complexity of problems compared to LDP. This means that the incorporation of distributional assumptions into the privacy guarantees in $c$-interior PML can indeed lead to better estimation procedures. 
\subsection{Binary Hypothesis Testing}
\label{sec:minimax:subsec:nhypo}
Given $n$ i.i.d. samples $X^n= \{X_i\}_{i\in[n]}$, the goal of binary hypothesis testing is to distinguish between two hypothesis $H_0 = \{X^n\text{ was generated by }P_X^{\otimes n}\}$ and $H_1 =\{X^n\text{ was generated by }Q_X^{\otimes n}\}$. To do so privately, each $X_i$ is passed through a kernel $\mathsf K$ satisfying $\varepsilon$-PML on $\mathcal P_c$ to produce privatized samples $Y^n \sim (R_X\mathsf K)^{\otimes n}$, where $R_X \in \{P_X,Q_X\}$. Below, we investigate the effect of this privatization on the sample complexity assuming that $P_X,Q_X \in \mathcal P_c$ for some $c\geq 0$. 

Let $\phi_n:\mathcal X^n \to \{0,1\}$ be a (possibly randomized) test function, where we interpret,
\begin{equation}
    \phi_n(X^n) = \begin{cases}
        0, \text{ accept }H_0, \\
        1, \text{ accept }H_1.
    \end{cases}
\end{equation} 
Let $\alpha_n(\phi_n)$ and $\beta_n(\phi_n)$ denote the type-I and type-II error probabilities associated with $\phi_n$, respectively, that is,
\begin{equation}
    \alpha_n(\phi_n) = \mathbb P_{X^n\sim P_X^{\otimes n}}[\phi_n(X^n)=1], 
\end{equation}
\begin{equation}
    \beta_n(\phi_n) = \mathbb P_{X^n\sim Q_X^{\otimes n}}[\phi_n(X^n)=0].
\end{equation}
The total minimax error probability of test $\phi_n$ is given by the probability of error for equal priors \cite{Polyanskiy_Wu_2025}, that is,
\begin{equation}
    P_e^{(n)} = \frac{1}{2}\Big(\alpha_n(\phi_n) + \beta_n(\phi_n)\Big).
\end{equation}
With these quantities at hand, we define the sample complexity of distinguishing between two distributions $P_X$ and $Q_X$ with error at most $\zeta\in(0,1)$ as,
\begin{equation}
    n^*(P_X,Q_X,\zeta) = \inf\left\{n\in\mathbb N: \inf_{\phi_n}P_e^{(n)} \leq \zeta \right\}
\end{equation}
In what follows, we are interested in the sample complexity of the \emph{privatized} problem defined by,
\begin{equation}
    n^*_{\varepsilon,c}(P_X,Q_X,\zeta) \coloneqq \inf_{\mathsf K\in\mathcal M(\varepsilon,c)}\; n^*\left((P_X \mathsf K)^{\otimes n},(Q_X\mathsf K)^{\otimes n},\zeta\right).
\end{equation}
The main result of this section is summarized in the following theorem. Recall that we define, $\varepsilon^* \coloneqq -\log (\nicefrac{c\mu_X(\mathcal X)}{2})$ as the threshold to the low-privacy regime. By this definition, for any $\varepsilon$ in the low-privacy regime ($\varepsilon\geq \varepsilon^*$), we have $\eta_\text{TV}(\varepsilon)=1$. The remainder of this section will be devoted to proving and discussing the following theorem.

\begin{theorem}
\label{thm:hypothesistestingH^2asymp}
      Let $0< c< \mu_X(\mathcal X)^{-1}$, and $\zeta \in(0,\nicefrac{1}{2})$ be arbitrary but fixed. For any $P_X,Q_X \in\mathcal P_c$ and any $\varepsilon > 0$,
    \begin{equation}
          n^*_{\varepsilon,c}\left(P_X,Q_X,\zeta\right) \asymp \frac{1}{ \eta_\text{TV}^2(\varepsilon)\text{TV}^2(P_X||Q_X)} \stackrel{\mathcal X\text{ discrete}}{\asymp} \frac{1}{ \eta_\text{TV}^2(\varepsilon)H^2(P_X||Q_X)},
    \end{equation}
    where the implicit constants depend only on $c$ and $\zeta$.
\end{theorem}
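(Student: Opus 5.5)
The plan is to prove the two directions of the characterization separately: a converse (lower bound on $n^*_{\varepsilon,c}$) obtained from the strong data processing inequalities of Section~\ref{sec:sdpi}, and an achievability (upper bound) obtained from an explicit binary privatization followed by a counting test. The discrete equivalence with $H^2$ is then handled at the end.

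\emph{Converse.} Fix any $\mathsf K\in\mathcal M(\varepsilon,c)$ and any test $\phi_n$. Le Cam's two-point bound gives
\begin{equation}
\inf_{\phi_n}P_e^{(n)}\ \ge\ \tfrac12\Big(1-\text{TV}\big((P_X\mathsf K)^{\otimes n}\|(Q_X\mathsf K)^{\otimes n}\big)\Big).
\end{equation}
Pinsker's inequality and tensorization, as displayed in Section~\ref{sec:tensorizing}, bound the total variation term by $\sqrt{\tfrac n2 D(P_X\mathsf K\|Q_X\mathsf K)}$. Theorem~\ref{thm:chi^2_TV^2_nr2} then gives $D(P_X\mathsf K\|Q_X\mathsf K)\le 4\Xi(\varepsilon,c)\,\text{TV}^2(P_X\|Q_X)$. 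Achieving error $\le\zeta<\tfrac12$ therefore forces
\begin{equation}
n\ \gtrsim_\zeta\ \frac{1}{\Xi(\varepsilon,c)\,\text{TV}^2(P_X\|Q_X)}.
\end{equation}
It remains to show $\Xi(\varepsilon,c)\lesssim_c\eta_\text{TV}^2(\varepsilon)$. In the high-privacy regime I will compare $B$ with $\eta_\text{TV}^2(\varepsilon)$. For small $\varepsilon$ both behave like $(e^\varepsilon-1)^2$ up to factors depending on $d_c$ and on $1-e^\varepsilon c\mu_X(\mathcal X)$. For $\varepsilon$ bounded away from $0$ but still below the regime boundary, I will switch to $A$ and use $\eta_\text{TV}(\varepsilon)\gtrsim_c 1$ there.

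\emph{Main obstacle.} The low-privacy regime $\varepsilon\ge\varepsilon^*$ is the step I expect to be hardest. There $\eta_\text{TV}(\varepsilon)=1$, but $A$ grows like $e^\varepsilon$, so Theorem~\ref{thm:chi^2_TV^2_nr2} alone does not give an $\varepsilon$-free constant. I would first fall back on the non-private bound. By the data processing inequality,
\begin{equation}
D(P_X\mathsf K\|Q_X\mathsf K)\le\chi^2(P_X\|Q_X)\le \tfrac1c\int(p-q)^2\,d\mu_X .
\end{equation}
In the discrete case this is $\lesssim_c \text{TV}^2$, since $|p-q|\le 1$ pointwise together with $\sum|p-q|^2\le(\sum|p-q|)^2$. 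In the continuous case I would combine it with $\Gamma_{\max},\Gamma_{\min}$ from Lemma~\ref{lem:gammamingammamaxbound}. Should this fail to be $\varepsilon$-free, I would accept a constant depending on $\varepsilon$ through $\varepsilon^*$, which is itself a function of $c$.

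\emph{Achievability.} Let $\mathcal A=\{x:p(x)>q(x)\}$ be the Scheffé set, so that $P_X(\mathcal A)-Q_X(\mathcal A)=\text{TV}(P_X\|Q_X)$. Consider the binary kernel
\begin{equation}
\mathsf K(0|x)=a\,\ind\{x\in\mathcal A\}+b\,\ind\{x\in\mathcal A^c\},
\end{equation}
modeled on $\mathsf K^\star_{\varepsilon,c}$ from Theorem~\ref{thm:TVcontraction}, with $a\ge b$ chosen as large apart as Lemma~\ref{lemma:pml_constraints} allows given the value $\mu_X(\mathcal A)$. I then need to show that $a-b\gtrsim_c\eta_\text{TV}(\varepsilon)$ uniformly in $\mu_X(\mathcal A)$; this is a one-parameter optimization of the constraints for $B=\{0\}$ and $B=\{1\}$. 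If $\mu_X(\mathcal A)$ is too unbalanced, I would use the common-randomness device of Appendix~\ref{app:mechanismrandomization}. The output bits are Bernoulli with means differing by $(a-b)\,\text{TV}(P_X\|Q_X)$. Thresholding the empirical mean at the midpoint and applying Hoeffding's inequality yields error $\le\zeta$ once
\begin{equation}
n\gtrsim \frac{\log(1/\zeta)}{(a-b)^2\,\text{TV}^2(P_X\|Q_X)},
\end{equation}
which matches the converse.

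\emph{Discrete equivalence.} Finally, for discrete $\mathcal X$ and $P_X,Q_X\in\mathcal P_c$, I will show $H^2(P_X\|Q_X)\asymp_c\text{TV}^2(P_X\|Q_X)$. The lower bound $H^2\gtrsim\text{TV}^2$ is the standard Le Cam inequality. For the upper bound, $H^2\le\chi^2\le\tfrac1c\sum(p-q)^2\le\tfrac1c\,\text{TV}^2$ up to a constant.
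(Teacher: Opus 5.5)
Your converse for $\varepsilon<\varepsilon^*$ is the paper's route: Pinsker, tensorization, Theorem~\ref{thm:chi^2_TV^2_nr2}, and $\Xi(\varepsilon,c)\lesssim_c\eta_\text{TV}^2(\varepsilon)$, as in Lemma~\ref{lem:etaTVsq=Xi}. So are your binary-kernel achievability and your discrete comparison of $H^2$ with $\text{TV}^2$. In the discrete low-privacy case, your bound $D(P_X\mathsf K\|Q_X\mathsf K)\le\chi^2(P_X\|Q_X)\le\frac4c\text{TV}^2(P_X\|Q_X)$ is a more direct form of the paper's appeal to the non-private complexity $\asymp 1/H^2$.

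\textbf{The gap.} The gap is the one you flagged: the continuous low-privacy regime. Your fallback does not close it. For $\varepsilon\ge\varepsilon^*$ the parameter ranges over all of $[\varepsilon^*,\infty)$, so ``a constant depending on $\varepsilon$ through $\varepsilon^*$'' is not a function of $c$. Every available bound, via $A$ or via $\Gamma_{\max}\le e^\varepsilon d_c+1$, degrades like $e^{-\varepsilon}$. The best you get is $n^*_{\varepsilon,c}\ge(1-2\zeta)^2d_c e^{-\varepsilon}/(2\,\text{TV}^2(P_X\|Q_X))$.

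\textbf{The claim itself fails there.} No argument can do better, because the asserted bound $n^*_{\varepsilon,c}\gtrsim_{c,\zeta}\text{TV}^{-2}$ is false in this regime. Take $\mathcal X=[-b,b]$ with Lebesgue $\mu_X$. Let $I_1,I_2$ be disjoint intervals of length $w\le b/2$, and set
\[
P_X=c\mu_X+d_c\big[(1-t)U_X+tU_{I_1}\big],\qquad Q_X=c\mu_X+d_c\big[(1-t)U_X+tU_{I_2}\big].
\]
Then $P_X,Q_X\in\mathcal P_c$ and $\text{TV}(P_X\|Q_X)=d_ct$.

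Consider the deterministic kernel reporting whether $x\in I_1$, $x\in I_2$, or neither. Each output cell has $\mu_X$-mass at least $w$, hence probability at least $cw$ under every element of $\mathcal P_c$. So this kernel lies in $\mathcal M(\varepsilon,c)$ for $\varepsilon=\log(1/(cw))\ge\varepsilon^*$. It is also sufficient for $\{P_X,Q_X\}$, since $dP_X/dQ_X$ is constant on the cells. Hence $n^*_{\varepsilon,c}(P_X,Q_X,\zeta)\lesssim\log(1/\zeta)/H^2(P_X\|Q_X)$.

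Now let $w\le d_ct/\big(3(c+d_c/(2b))\big)$. On each $I_j$ the larger density is then at least four times the smaller. This gives $H^2(P_X\|Q_X)\ge d_ct/2=\text{TV}(P_X\|Q_X)/2$. As $t\to0$ you get $n^*_{\varepsilon,c}=O(\text{TV}^{-1})$, whereas the theorem, with $\eta_\text{TV}(\varepsilon)=1$, asserts $\Omega(\text{TV}^{-2})$.

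The paper's proof has the same hole. It lower-bounds by the non-private complexity $\asymp1/H^2$. The comparison $H^2\le\text{TV}^2/c$ in Appendix~\ref{app:lem:TV^2_H^2} uses $\|\cdot\|_2\le\|\cdot\|_1$, which holds for counting measure but not for densities. Your plan therefore proves exactly what is true: the first $\asymp$ for $\varepsilon<\varepsilon^*$ on any finite measure space, and for all $\varepsilon>0$ on discrete $\mathcal X$. The statement needs that restriction, or $\varepsilon$-dependent constants.

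\textbf{Achievability differs from the paper, and your version works.} Your one-parameter optimization over the Scheffé set $\mathcal A$ goes through without any device. Write $m=\mu_X(\mathcal A)$.
\begin{itemize}
\item If $\max\{m,\mu_X(\mathcal A^c)\}\le(e^\varepsilon c)^{-1}$, which is automatic when $\varepsilon<-\log(c\mu_X(\mathcal X))$, then $\mathsf K^\star_{\varepsilon,c}$ built on $\mathcal A$ gives $a-b=\eta_\text{TV}(\varepsilon)$.
\item If both masses exceed $(e^\varepsilon c)^{-1}$, then $a=1$, $b=0$ is admissible.
\item If, say, $m<(e^\varepsilon c)^{-1}\le\mu_X(\mathcal A^c)$, then $a=1$, $b=(1-e^\varepsilon cm)/(e^\varepsilon(1-cm))$ is admissible, with $a-b\ge1-e^{-\varepsilon}\ge d_c$.
\end{itemize}
So $a-b\ge d_c\,\eta_\text{TV}(\varepsilon)$ for every Scheffé set. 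The common-randomness device would not help here anyway: it only splits an atom to reach mass $\mu_X(\mathcal X)/2$, and that changes the set. The paper instead shrinks the Scheffé set fractionally (Lemma~\ref{lem:TVseparatinglemma}) and loses a uniform factor $\tfrac12$. Your route loses $d_c$, which is worse only when $c\mu_X(\mathcal X)$ is near $1$. Using Hoeffding in place of the Hellinger-based test bound is immaterial.
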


The theorem shows that the introduction of $c$-interior PML privacy into a hypothesis testing problem turns out to result in a degradation of the problem's sample complexity by a factor of $\eta_\text{TV}^{-2}(\varepsilon)$. In particular, in the low-privacy regime $\varepsilon>\varepsilon^*$, we see that the $\varepsilon$-PML on $\mathcal P_c$ comes at no cost to the sample complexity. We will compare to the non-private case (and the LDP case) in more detail below. The following non-asymptotic bound constitute the core of the proof of Theorem \ref{thm:hypothesistestingH^2asymp}. 

\begin{proposition}
\label{prop:nhypothesistesting}
    Let $\varepsilon > 0$ and $0< c< \mu_X(\mathcal X)^{-1}$. For any $P_X,Q_X \in\mathcal P_c$ and any $\zeta \in(0,1)$,
    \begin{equation}
      \frac{(1-2\zeta)^2}{2\,\Xi(\varepsilon,c)\,\text{TV}^2(P_X||Q_X)}\leq n^*_{\varepsilon,c}\left(P_X,Q_X,\zeta\right) \leq \frac{8\log(1/\zeta)}{\eta_\text{TV}^2(\varepsilon)\text{TV}^2(P_X||Q_X)}.
    \end{equation}
\end{proposition}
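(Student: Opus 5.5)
We prove the two inequalities separately. The lower bound is a standard Le Cam argument combined with the tensorized SDPI of Theorem~\ref{thm:chi^2_TV^2_nr2}. The upper bound comes from an explicit test built on the extremal kernel $\mathsf K^\star_{\varepsilon,c}$ of Theorem~\ref{thm:TVcontraction}, applied to a binary partition adapted to $P_X,Q_X$.

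\emph{Lower bound.} Fix any $\mathsf K\in\mathcal M(\varepsilon,c)$ and any test $\phi_n$ acting on $Y^n$. By Le Cam's two-point bound,
\[
\inf_{\phi_n}P_e^{(n)}=\tfrac12\bigl(1-\text{TV}((P_X\mathsf K)^{\otimes n}||(Q_X\mathsf K)^{\otimes n})\bigr).
\]
Hence $P_e^{(n)}\le\zeta$ forces $\text{TV}((P_X\mathsf K)^{\otimes n}||(Q_X\mathsf K)^{\otimes n})\ge 1-2\zeta$. Next apply Pinsker's inequality and tensorization of relative entropy, exactly as in the display preceding Theorem~\ref{thm:chi^2_TV^2_nr2}:
\[
\text{TV}((P_X\mathsf K)^{\otimes n}||(Q_X\mathsf K)^{\otimes n})\le\sqrt{\tfrac n2 D(P_X\mathsf K||Q_X\mathsf K)}.
\]
Theorem~\ref{thm:chi^2_TV^2_nr2} gives $D(P_X\mathsf K||Q_X\mathsf K)\le 4\,\Xi(\varepsilon,c)\text{TV}^2(P_X||Q_X)$ for $P_X,Q_X\in\mathcal P_c$. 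Combining,
\[
(1-2\zeta)^2\le 2n\,\Xi(\varepsilon,c)\,\text{TV}^2(P_X||Q_X).
\]
This yields the claimed bound, uniformly over $\mathsf K$, so it passes to the infimum over $\mathcal M(\varepsilon,c)$. For $\zeta\ge 1/2$ the bound is trivial.

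\emph{Upper bound.} Let $\mathcal A^\star=\{x:p(x)\ge q(x)\}$, so that $P_X(\mathcal A^\star)-Q_X(\mathcal A^\star)=\text{TV}(P_X||Q_X)$. We want a binary kernel that privately reports membership in a set with contrast $\eta_\text{TV}(\varepsilon)$, e.g.
\[
\mathsf K(0|x)=a\mathbf 1\{x\in\mathcal A\}+b\mathbf 1\{x\notin\mathcal A\},\qquad a-b=\eta_\text{TV}(\varepsilon).
\]
Then $(P_X\mathsf K)(0)-(Q_X\mathsf K)(0)=\eta_\text{TV}(\varepsilon)(P_X(\mathcal A)-Q_X(\mathcal A))$, and the outputs are Bernoulli with means separated by $\Delta=\eta_\text{TV}(\varepsilon)\text{TV}(P_X||Q_X)$. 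The test thresholds the empirical frequency of zeros at the midpoint. Hoeffding's inequality bounds each error by $\exp(-n\Delta^2/2)$, and this is at most $\zeta$ once $n\ge 2\log(1/\zeta)/\Delta^2$. The stated constant $8$ leaves slack of a factor $4$, which will absorb the losses below.

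\emph{Main obstacle.} The step I expect to be hardest is reconciling the privacy constraint with the choice of set. The construction in Theorem~\ref{thm:TVcontraction} requires the set to satisfy a measure condition such as $\max\{\mu_X(\mathcal A),\mu_X(\mathcal A^c)\}\le(e^\varepsilon c)^{-1}$, but $\mathcal A^\star$ is dictated by $P_X,Q_X$. I would first verify directly, via Lemma~\ref{lemma:pml_constraints} or Definition~\ref{def:eps_c_PML}, which pairs $(a,b)$ are feasible for an arbitrary set $\mathcal A$.

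If the contrast $\eta_\text{TV}(\varepsilon)$ is not attainable for a lopsided $\mathcal A^\star$, I would instead split $\mathcal A^\star$ or its complement into pieces of appropriate measure and use common randomness as in Appendix~\ref{app:mechanismrandomization}. The aim is to keep a contrast at least $\eta_\text{TV}(\varepsilon)/2$ on a set capturing at least half of $\text{TV}(P_X||Q_X)$. Then $\Delta\ge\eta_\text{TV}(\varepsilon)\text{TV}(P_X||Q_X)/2$, which gives $n\ge 8\log(1/\zeta)/(\eta_\text{TV}^2(\varepsilon)\text{TV}^2(P_X||Q_X))$, consistent with the factor $8$ in the statement.
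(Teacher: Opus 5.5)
\paragraph{Verdict.} There is a genuine gap in the upper bound. Your lower bound is the paper's argument verbatim (Le Cam, Pinsker, tensorization, Theorem~\ref{thm:chi^2_TV^2_nr2}). Your Hoeffding midpoint test is a legitimate substitute for the paper's appeal to the Hellinger-based sample complexity bound, since both give $n\ge 2\log(1/\zeta)/\Delta^2$ when the output means are separated by $\Delta$.

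\paragraph{The gap.} It sits in the step you flag as the main obstacle. The whole upper bound rests on exhibiting a kernel in $\mathcal M(\varepsilon,c)$ with $\Delta\ge\frac12\eta_{\text{TV}}(\varepsilon)\text{TV}(P_X\|Q_X)$, and you only state this as an aim. The aim as formulated is also too weak. A contrast of $\eta_{\text{TV}}(\varepsilon)/2$ on a set capturing only half of $\text{TV}(P_X\|Q_X)$ gives $\Delta\ge\frac14\eta_{\text{TV}}(\varepsilon)\text{TV}(P_X\|Q_X)$, hence the constant $32$, not $8$; you may lose the factor $\frac12$ only once. You also leave $\varepsilon>\varepsilon^\star$ untreated. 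There $(e^\varepsilon c)^{-1}<\mu_X(\mathcal X)/2$, so no set satisfies the mass condition of the randomized construction in Theorem~\ref{thm:TVcontraction}. The deterministic construction instead needs $\min\{\mu_X(\mathcal A),\mu_X(\mathcal A^c)\}\ge(e^\varepsilon c)^{-1}$, which $\mathcal A^\star$ may violate.

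\paragraph{How the paper closes it.} It uses Lemma~\ref{lem:TVseparatinglemma} together with the substitution $\bar\varepsilon=\min\{\varepsilon,\varepsilon^\star\}$, which leaves $\eta_{\text{TV}}$ unchanged. For $\varepsilon\le\varepsilon^\star$ one has $M=(e^\varepsilon c)^{-1}\ge\mu_X(\mathcal X)/2$, so at most one of $S=\{p\ge q\}$ and $S^c$ has measure above $M$. On that side one takes a $[0,1]$-valued $a$ of mass exactly $M$, concentrated on the top level set of $|p-q|$. This keeps at least a fraction $M/\mu_X(\mathcal X)\ge\frac12$ of the TV. The kernel $\mathsf K^\star_{\bar\varepsilon,c}$ built from such a fractional $a$ still has the full contrast $\eta_{\text{TV}}(\varepsilon)$.

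\paragraph{A simpler route via your own feasibility check.} Carrying it out finishes the job with no splitting. For $\mathsf K(0|x)=b+\Delta\mathbf 1\{x\in\mathcal A\}$ and $m=\mu_X(\mathcal A)$, the PML constraints on $\mathcal P_c$ are exactly
$\Delta(1-e^\varepsilon cm)\le(e^\varepsilon-1)b$ and $\Delta e^\varepsilon\bigl(1-c(\mu_X(\mathcal X)-m)\bigr)\le(e^\varepsilon-1)(1-b)$.
If $m>(e^\varepsilon c)^{-1}$, the choice $b=0$, $\Delta=1-e^{-\varepsilon}$ is feasible, and $1-e^{-\varepsilon}\ge(1-c\mu_X(\mathcal X)/2)\,\eta_{\text{TV}}(\varepsilon)$. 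For $\varepsilon\le\varepsilon^\star$ this holds because $(1-e^{-\varepsilon})/\eta_{\text{TV}}(\varepsilon)=e^{-\varepsilon}+d_c\ge 1-c\mu_X(\mathcal X)/2$. For $\varepsilon>\varepsilon^\star$ it holds because $e^{-\varepsilon}<c\mu_X(\mathcal X)/2$. The case $\mu_X(\mathcal A^c)>(e^\varepsilon c)^{-1}$ is symmetric, and the balanced case is Theorem~\ref{thm:TVcontraction}. Hence $\mathcal A=\mathcal A^\star$ itself gives $\Delta>\frac12\eta_{\text{TV}}(\varepsilon)\text{TV}(P_X\|Q_X)$ for every $\varepsilon>0$, which is somewhat simpler than the paper's lemma. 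One of these computations must actually be carried out; without it the upper bound is unproved.
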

\begin{proof}
    We begin by showing the lower bound. It is known that the error probability of binary hypothesis testing is proportional to $1-\text{TV}(P||Q)$ \cite{Polyanskiy_Wu_2025}, specifically, 
    \begin{align}
        P_e^{(n)} &= \frac{1}{2}\left[1-\text{TV}((P_X\mathsf K)^{\otimes n}||(Q_X\mathsf K)^{\otimes n})\right].
\end{align}
To show the lower bound, we write
\begin{align}
        \frac12-\frac{1}{2}\text{TV}((P_X\mathsf K)^{\otimes n}||(Q_X\mathsf K)^{\otimes n})&\stackrel{(i)}{\geq} \frac{1}{2}\left[1-\sqrt{\frac{1}{2}D((P_X\mathsf K)^{\otimes n}||(Q_X\mathsf K)^{\otimes n}})\right] \\[.5em]
        &\stackrel{(ii)}{\geq} \frac{1}{2} \left[1-\sqrt{\frac{n}{2}D(P_X\mathsf K||Q_X\mathsf K)}\right] \\[.5em]
        &\stackrel{\text{Thm. \ref{thm:chi^2_TV^2_nr2}}}{\geq} \frac{1}{2}\left[1-\sqrt{2n\,\Xi(\varepsilon,c)}\,\text{TV}(P_X||Q_X)\right], 
    \end{align}
    where $(i)$ follows from Pinsker's inequality, and $(ii)$ follows from the tensorization property of relative entropy. To achieve $P_e^{(n)} \leq \zeta$ in this term, we must choose $n$ large enough such that,
    \begin{equation}
        \frac{1}{2}\left[1-\sqrt{2n\,\Xi(\varepsilon,c)}\,\text{TV}(P_X||Q_X)\right] \leq \zeta,
    \end{equation}
    or, equivalently, we must choose some $n$ for which, 
    \begin{equation}
        n \geq \frac{(1-2\zeta)^2}{2\,\Xi(\varepsilon,c)\,\text{TV}^2(P_X||Q_X)}.
    \end{equation}

To show the upper bound, we proceed in a similar fashion as \cite[Lemma 2]{asoodeh2024contraction}. Throughout, define $\varepsilon^\star \coloneqq -\log\big(c\,\mu_X(\mathcal X)/2\big)$ and $\bar\varepsilon\coloneqq\min\{\varepsilon,\varepsilon^\star\}$, so that $\eta_\text{TV}(\varepsilon)=1$ for all $\varepsilon\geq\varepsilon^\star$. In particular,
$\eta_\text{TV}(\bar\varepsilon)=\eta_\text{TV}(\varepsilon)$ for every $\varepsilon>0$. We will need the following lemma, proved in Appendix~\ref{app:proofTVseparatinglemma}.
\begin{lemma}
    \label{lem:TVseparatinglemma}
    Let $(\mathcal X,\Sigma_\mathcal X,\mu_X)$ be a finite measure space, let $0<\varepsilon\leq\varepsilon^\star$ and
    $0<c<\mu_X(\mathcal X)^{-1}$. Then for all $P_X,Q_X\ll\mu_X$, there is a measurable function
    $a:\mathcal X\to[0,1]$ such that with $m_a\coloneqq\int a\,d\mu_X$,
    \begin{equation}
        \max\{m_a,\ \mu_X(\mathcal X)-m_a\}\leq(e^\varepsilon c)^{-1},
        \quad\text{and}\quad
        \left|\int a\,dP_X-\int a\, dQ_X\right|\geq\frac12\,\text{TV}(P_X||Q_X).
    \end{equation}
\end{lemma}
Heuristically, this lemma states that as long as $\varepsilon\leq\varepsilon^\star$, we can always find a test that separates $P_X$ and $Q_X$ \say{well} (within a factor $\nicefrac12$ of the optimal likelihood-ratio test attaining $\mathrm{TV}(P_X\|Q_X)$), while satisfying the mass condition required for the construction of the optimal mechanism in Theorem~\ref{thm:TVcontraction}.\footnote{If $\mu_X$ is non-atomic, $a$ may be taken to be the indicator of a set $\mathcal A$ with $\max\{\mu_X(\mathcal A),\mu_X(\mathcal A^c)\}\leq(e^\varepsilon c)^{-1}$.}

Let $\mathsf K^\star_{\bar\varepsilon,c}$ be the kernel construction in Theorem \ref{thm:TVcontraction} for $\bar\varepsilon$, built from the function $a$ obtained by applying Lemma \ref{lem:TVseparatinglemma}
with $\bar\varepsilon$, and let $Y^n\sim(R_X\mathsf K^\star_{\bar\varepsilon,c})^{\otimes n}$
for $R_X\in\{P_X,Q_X\}$. Since $\bar\varepsilon\leq\varepsilon$, this mechanism is satisifies $\varepsilon$-PML on $\mathcal P_c$. It achieves,
\begin{equation}
\label{eq:approxoptimalTVseparation}
    \text{TV}\big(P_X\mathsf K^\star_{\bar\varepsilon,c}||Q_X\mathsf K^\star_{\bar\varepsilon,c}\big)
    = \eta_\text{TV}(\bar\varepsilon)\left|\int a\,dP_X-\int a\, dQ_X\right|
    \geq \frac{\eta_\text{TV}(\varepsilon)}{2}\,\text{TV}(P_X||Q_X),
\end{equation}
where the equality follows from Theorem \ref{thm:TVcontraction} and the inequality follows
from Lemma \ref{lem:TVseparatinglemma} together with $\eta_\text{TV}(\bar\varepsilon)=\eta_\text{TV}(\varepsilon)$. Note that the case $\varepsilon>\varepsilon^\star$, in which no admissible test exists at level $\varepsilon$
itself, is handled by the definition of $\bar\varepsilon$.

According to \cite[Theorem 2]{canonne2022short}, we can distinguish $H_0$ and $H_1$ up to error $\zeta$ whenever
    \begin{equation}
    \label{eq:hypoproof_YdomainUpperbound}
        n \geq \frac{2\log(1/\zeta)}{H^2\big(P_X\mathsf K^\star_{\bar\varepsilon,c}||Q_X\mathsf K^\star_{\bar\varepsilon,c}\big)}.
    \end{equation}
By the standard inequality $\text{TV}(P||Q)\leq H(P||Q)$, and \eqref{eq:approxoptimalTVseparation},
   \begin{equation}
       H^2(P_X\mathsf K^\star_{\bar\varepsilon,c}||Q_X\mathsf K^\star_{\bar\varepsilon,c})
       \geq \text{TV}^2(P_X\mathsf K^\star_{\bar\varepsilon,c}||Q_X\mathsf K^\star_{\bar\varepsilon,c})
       \geq \frac{\eta_\text{TV}^2(\varepsilon)}{4}\,\text{TV}^2(P_X||Q_X).
   \end{equation}
Combining this with \eqref{eq:hypoproof_YdomainUpperbound} shows that error $\zeta$ is achievable as soon as, 
   \begin{equation}
       n \geq \frac{8\log(1/\zeta)}{\eta_\text{TV}^2(\varepsilon)\,\text{TV}^2(P_X||Q_X)},
   \end{equation}
which finishes the proof.
\end{proof}

That the test statistic constructed in Proposition \ref{prop:nhypothesistesting} via $\mathsf K^\star_{\varepsilon,c}$ from Theorem \ref{thm:TVcontraction} is order-optimal is shown by the following Lemma, which we prove in Appendix \ref{app:etaTVsq=Xi}. This allows us to finish the proof of Theorem \ref{thm:hypothesistestingH^2asymp}.

\begin{lemma}
    \label{lem:etaTVsq=Xi}
    Let $0<c<\mu_X(\mathcal X)^{-1}$ be a fixed constant and define $\varepsilon^*\coloneqq \log2-\log(c\mu_X(\mathcal X))$. Then there exists $a(c),A(c)\in\mathbb R$ such that for any $\varepsilon\in[0,\varepsilon^*)$,
    \begin{equation}
        a(c) \eta_\text{TV}^2(\varepsilon) \leq \Xi(\varepsilon,c)\leq A(c) \eta_\text{TV}^2(\varepsilon).
    \end{equation}
\end{lemma}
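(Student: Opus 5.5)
The plan is to compare the explicit closed forms directly. Set $u\coloneqq e^\varepsilon$ and $s\coloneqq c\mu_X(\mathcal X)\in(0,1)$, so that $d_c=1-s$. The range $\varepsilon\in[0,\varepsilon^*)$ becomes $u\in[1,2/s)$. On this range we first check that $u-1<1+u d_c$, which is equivalent to $u(2-d_c)<2$, i.e.\ $u<2/(1+s)$. This needs care, because $\varepsilon^*$ as written allows $u$ up to $2/s$, which exceeds $2/(1+s)$. On the part where the minimum in $\eta_\text{TV}(\varepsilon)$ equals $1$, we will use that $\eta_\text{TV}$ is bounded between positive constants, so both sides are simply comparable to constants there. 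Thus we split into two regimes: (a) $u\in[1,u_0]$ for a fixed $u_0\in(1,\min\{2/(1+s),1/s\})$, say $u_0=\tfrac12\big(1+\min\{2/(1+s),1/s\}\big)$; and (b) $u\in[u_0,2/s)$.

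Regime (a) is where both quantities vanish as $\varepsilon\to0$, and the task is matching the quadratic order. Here $\eta_\text{TV}^2(\varepsilon)=(u-1)^2/(1+ud_c)^2$, where the denominator lies in $[(2-s)^2,(1+u_0 d_c)^2]$. Also $1-us\geq 1-u_0s>0$, so $B=(u-1)^2/\big(ud_c(1-us)\big)$ is finite and its prefactor of $(u-1)^2$ lies between constants depending only on $s$. Hence $\Xi=\min\{A,B\}\leq B\leq A_1(c)\,\eta_\text{TV}^2$. For the lower bound, $A=u^2/(ud_c+1)\geq 1/(u_0d_c+1)$ is bounded below by a positive constant. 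Meanwhile $B\geq a_1(c)(u-1)^2\geq a_1(c)\eta_\text{TV}^2$. Since $(u-1)^2\leq(u_0-1)^2$, we get $\min\{A,B\}\geq \min\{a_1,\;1/((u_0d_c+1)(u_0-1)^2)\}\,(u-1)^2$, which is again comparable to $\eta_\text{TV}^2$.

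Regime (b) is where everything is of constant order. Here $\eta_\text{TV}(\varepsilon)\in[\eta_\text{TV}(u_0),1]$ with $\eta_\text{TV}(u_0)>0$, by monotonicity of $(u-1)/(1+ud_c)$ in $u$, which follows from its derivative $(1+d_c)/(1+ud_c)^2>0$. For $\Xi$, note $\Xi\leq A\leq (2/s)^2$ is bounded above. For the lower bound, $A\geq 1/(2d_c/s+1)>0$. For $B$, whenever it is finite we have $ud_c(1-us)_+\leq (2/s)d_c$ and $(u-1)^2\geq(u_0-1)^2$, so $B$ is bounded below; when $us\geq1$ we have $B=\infty$ and the minimum picks $A$. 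Thus $\Xi$ is bounded between positive constants depending only on $s$, and so is $\eta_\text{TV}^2$, which gives the two-sided comparison in this regime.

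Finally, we take $a(c)$ and $A(c)$ as the minimum and maximum, respectively, of the constants from both regimes. The main obstacle is regime (a): the order must match as $\varepsilon\to0$, and the argument hinges on $B$ carrying the factor $(u-1)^2$ while $A$ stays bounded away from zero. The remaining bookkeeping, keeping $1-us$ away from $0$ by the choice of $u_0$, is routine. All constants depend only on $s=c\mu_X(\mathcal X)$, as the statement requires.
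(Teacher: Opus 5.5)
Your proof is correct, but it takes a different route from the paper. The paper works with the ratio $R(\varepsilon)=\Xi(\varepsilon,c)/\eta_\text{TV}^2(\varepsilon)$. It asserts that $R$ is positive and continuous on $(0,\varepsilon^*)$ and computes $\lim_{\varepsilon\to 0^+}R(\varepsilon)=(1+d_c)^2/d_c^2$, using that $\Xi=B$ for small $\varepsilon$. It then extends $R$ continuously to the compact interval $[0,\varepsilon^*]$ and takes $a(c),A(c)$ as the minimum and maximum of $R$.

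Your regime (a) is the quantitative counterpart of that limit computation: $B$ and $\eta_\text{TV}^2$ share the $(e^\varepsilon-1)^2$ order, while $A$ stays bounded away from zero. Your regime (b) replaces the compactness step with explicit two-sided bounds on both quantities. Your route gives explicit constants depending only on $s=c\mu_X(\mathcal X)$. It also avoids two points the paper leaves implicit: continuity of $\Xi$ across $e^\varepsilon s=1$, where $B$ jumps to $+\infty$ and the minimum must already equal $A$ nearby, and existence of the left limit of $R$ at $\varepsilon^*$. The paper's route is shorter.

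There is one algebra slip, and it does not affect validity. The condition $u-1<1+ud_c$ is equivalent to $u(1-d_c)=us<2$, i.e.\ $u<2/s$, not $u<2/(1+s)$. So $\varepsilon^*$ is exactly the point where $(e^\varepsilon-1)/(1+e^\varepsilon d_c)$ reaches $1$, and the minimum in $\eta_\text{TV}$ never equals $1$ on $[0,\varepsilon^*)$. Your caveat about that case is therefore unnecessary. The argument still goes through, because regime (a) uses the closed form only for $u\le u_0<2/(1+s)<2/s$, and regime (b) uses only $\eta_\text{TV}\in[\eta_\text{TV}(u_0),1]$.
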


\begin{proof}[Proof of Theorem \ref{thm:hypothesistestingH^2asymp}]
   Lemma \ref{lem:etaTVsq=Xi} shows that $\eta^2_\text{TV}(\varepsilon) \asymp \Xi(\varepsilon,c)$ as long as $0< \varepsilon<\varepsilon^*$. Further, since $\varepsilon\geq\varepsilon^*$ implies $\eta_\text{TV}(\varepsilon)=1$, and the private sample complexity by set inclusion is necessarily lower-bounded by the non-private sample complexity \cite{canonne2022short}, we have asymptotic equivalence in the remaining region. The equivalence between $\text{TV}^2$ and $H^2$ on discrete spaces is proved in Appendix \ref{app:lem:TV^2_H^2}. 
\end{proof}

The upper bound in Proposition \ref{prop:nhypothesistesting} can be strengthened by a factor of $4$ under additional assumptions. With these assumptions, we can go beyond order-optimality, and show that the test-statistic constructed above reduces to a version of the standard likelihood-ratio test (see Remark \ref{rem:privVSnonprivhypo}).
\begin{corollary}
\label{cor:optimalupperPQsep}
    For $\varepsilon>0$ and $0<c<\mu_X(\mathcal X)^{-1}$, define,
    \begin{equation}
    \mathcal {PQ}_c^{\varepsilon}(\mathcal X) \coloneqq \left\{(P,Q)\mid P,Q \in\mathcal P_c(\mathcal X): \max\{\mu_X(S_{P,Q}),\,\mu_X(S_{P,Q}^c)\}\leq (e^\varepsilon c)^{-1}\right\},
\end{equation}
where $S_{P,Q}\in \Sigma_\mathcal X$ is a set such that $P(A)-Q(A)=\text{TV}(P||Q)$. We have for each pair $(P_X,Q_X)\in\mathcal{PQ}^\varepsilon_c$,
\begin{equation}
    n_{\varepsilon,c}^*(P_X,Q_X,\zeta) \leq \frac{2\log(1/\zeta)}{\eta_\text{TV}^2(\varepsilon)\text{TV}^2(P_X||Q_X)}.
\end{equation}
\end{corollary}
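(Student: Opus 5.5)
The plan is to rerun the upper-bound argument of Proposition~\ref{prop:nhypothesistesting}, but to build the optimal kernel $\mathsf K^\star_{\varepsilon,c}$ of Theorem~\ref{thm:TVcontraction} on the set $\mathcal A = S_{P,Q}$ itself, rather than on the function $a$ from Lemma~\ref{lem:TVseparatinglemma}. Lemma~\ref{lem:TVseparatinglemma} costs a factor $\nicefrac12$ in total variation, and hence a factor $4$ in the sample complexity. Avoiding it is exactly the improvement from $8\log(1/\zeta)$ to $2\log(1/\zeta)$.

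\textbf{Case $\varepsilon<\varepsilon^*$.} Here the membership $(P_X,Q_X)\in\mathcal{PQ}^\varepsilon_c$ guarantees that $\mathcal A=S_{P,Q}$ satisfies $\max\{\mu_X(\mathcal A),\mu_X(\mathcal A^c)\}\le (e^\varepsilon c)^{-1}$. This is precisely the admissibility condition for the binary kernel $\mathsf K^\star_{\varepsilon,c}$, so $\mathsf K^\star_{\varepsilon,c}\in\mathcal M(\varepsilon,c)$ by the verification in Appendix~\ref{app:mechanismrandomization}. For any $R_X$, the output law is determined by $(R_X\mathsf K^\star)(0)=\frac{1-e^\varepsilon c\mu_X(\mathcal A)+(e^\varepsilon-1)R_X(\mathcal A)}{1+e^\varepsilon d_c}$. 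Hence
\[
\text{TV}(P_X\mathsf K^\star\|Q_X\mathsf K^\star)=\frac{e^\varepsilon-1}{1+e^\varepsilon d_c}\,|P_X(\mathcal A)-Q_X(\mathcal A)|=\eta_\text{TV}(\varepsilon)\,\text{TV}(P_X\|Q_X),
\]
so the Dobrushin bound is attained with equality on this pair.

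\textbf{Case $\varepsilon\ge\varepsilon^*$.} Here I would use the deterministic kernel $\mathbf 1\{x\in\mathcal A\}$. Its admissibility requires $\min\{\mu_X(\mathcal A),\mu_X(\mathcal A^c)\}\ge (e^\varepsilon c)^{-1}$, which is a different condition from the one the class provides. I therefore expect this case to be the main obstacle. The way around it is to note that $\varepsilon$-PML on $\mathcal P_c$ is monotone in $\varepsilon$. I would then either check the PML constraint of the indicator kernel directly via Lemma~\ref{lemma:pml_constraints} (for the set $\{0\}$ this reads $1\le e^\varepsilon c\,\mu_X(\mathcal A)$), or, if that fails, fall back on the randomized construction at a level $\bar\varepsilon\le\varepsilon$ for which $\mathcal A$ is admissible. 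In either case the output TV is again $\eta_\text{TV}(\varepsilon)\,\text{TV}(P_X\|Q_X)$ with $\eta_\text{TV}=1$.

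\textbf{Concluding step.} With this kernel the privatized samples are i.i.d.\ Bernoulli. I would invoke \cite[Theorem 2]{canonne2022short} exactly as in the proposition, which gives error $\zeta$ once $n\ge 2\log(1/\zeta)/H^2(P_X\mathsf K^\star\|Q_X\mathsf K^\star)$. Combining this with $H^2\ge \text{TV}^2$ and the displayed equality yields $n\ge 2\log(1/\zeta)/(\eta^2_\text{TV}(\varepsilon)\text{TV}^2(P_X\|Q_X))$, and taking the infimum over $\mathcal M(\varepsilon,c)$ gives the claim. Finally, I would remark that the resulting test thresholds the empirical frequency of $\{Y=0\}$, i.e.\ counts samples in $S_{P,Q}$, which is a privatized likelihood-ratio (Scheffé) test.
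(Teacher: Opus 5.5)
Your proposal is correct and follows the paper's proof. You build $\mathsf K^\star_{\varepsilon,c}$ on $\mathcal A=S_{P,Q}$, so the output total variation equals $\eta_{\text{TV}}(\varepsilon)\,\text{TV}(P_X\|Q_X)$ exactly. This removes the factor $\nicefrac12$ from Lemma~\ref{lem:TVseparatinglemma}, and hence the factor $4$. Your explicit formula for $(R_X\mathsf K^\star)(0)$ and the appeal to \cite[Theorem 2]{canonne2022short} with $H^2\ge\text{TV}^2$ spell out what the paper's one-line argument leaves implicit.

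The one thing to fix is your second case, which is not an obstacle at all. Since $\mu_X(S_{P,Q})+\mu_X(S_{P,Q}^c)=\mu_X(\mathcal X)$, the defining condition of $\mathcal{PQ}^\varepsilon_c$ forces $\mu_X(\mathcal X)/2\le(e^\varepsilon c)^{-1}$, i.e.\ $\varepsilon\le\varepsilon^*$. So the class is empty for $\varepsilon>\varepsilon^*$ and the corollary is vacuous there.

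At $\varepsilon=\varepsilon^*$ the condition forces $\mu_X(S_{P,Q})=\mu_X(S_{P,Q}^c)=(e^\varepsilon c)^{-1}$. The randomized kernel then degenerates to the indicator of $S_{P,Q}$, and your first-case computation already gives $\frac{e^\varepsilon-1}{1+e^\varepsilon d_c}=1=\eta_{\text{TV}}(\varepsilon^*)$. Your first case, run on $\varepsilon\le\varepsilon^*$, is therefore the whole proof.

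You should also drop the proposed fallback. Randomizing at some $\bar\varepsilon<\varepsilon^*$ would only give $\eta_{\text{TV}}(\bar\varepsilon)\,\text{TV}(P_X\|Q_X)$ with $\eta_{\text{TV}}(\bar\varepsilon)<1$, not the factor $1$ you claim. It is harmless only because it is never needed.
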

\begin{proof}
 By definition, whenever $P_X, Q_X$ are from $\mathcal {PQ}^\varepsilon_c(\mathcal X)$, we can pick $\mathcal A = S_{P,Q}$ in the optimal kernel constructed in the proof of Proposition \ref{prop:nhypothesistesting}. Hence $\text{TV}^2(P_X\mathsf K_{\varepsilon,c}^\star||Q_X\mathsf K_{\varepsilon,c}^\star) = \eta^2_\text{TV}(\varepsilon)\text{TV}^2(P_X||Q_X)$, saving a factor of $4$. 
 \end{proof}

\begin{remark}
\label{rem:privVSnonprivhypo}
Theorem \ref{thm:hypothesistestingH^2asymp} allows us to quantify the \say{cost of privacy} in binary hypothesis testing with $\varepsilon$-PML on $\mathcal P_c$. In the non-private case, it is shown in \cite{canonne2022short,bar2002complexity} that, 
\begin{equation}
    n^*(P_X,Q_X,\zeta) \asymp \frac{1}{H^2(P_X||Q_X)}.
\end{equation}
Comparing this result with Theorem \ref{thm:hypothesistestingH^2asymp}, we see that the decrease in effective sample size introduced by privatization with a kernel satisfying $\varepsilon$-PML on $\mathcal P_c$ scales with $\eta_\text{TV}^{-2}(\varepsilon)$ on discrete spaces. Whenever $\eta_\text{TV}(\varepsilon)=1$, the sample complexity of the private problem is the same as that of the non-private problem up to a factor of at most $4$. This happens whenever $\varepsilon \geq \varepsilon^*$. That is, in the low-privacy regime, privacy in hypothesis testing comes \emph{for free} in terms of sample complexity. Further, Corollary \ref{cor:optimalupperPQsep} implies that if the distributions are picked from $\mathcal{PQ}_c^{\varepsilon}$, the standard likelihood-ratio test together with a binary mechanism according to Theorem \ref{thm:TVcontraction} is order-optimal.
\end{remark}

Theorem \ref{thm:hypothesistestingH^2asymp} partially recovers the result in \cite{pensia2023simple,asoodeh2024contraction} in \eqref{eq:LDPprivnhypo} as $c\to 0$. Further, the restriction from $\mathcal P(\mathcal X)$ to $\mathcal P_c$ for some $c>0$ leads to the observation that $\text{TV}^2(P_X||Q_X)\geq cH^2(P_X||Q_X)$ on discrete spaces (see Appendix \ref{app:lem:TV^2_H^2}). The result in \eqref{eq:LDPprivnhypo} therefore implies that on $\mathcal P_c$, the non-private sample complexity can be achieved with $\alpha$-LDP whenever $\alpha>-\log(c)$. Theorem \ref{thm:hypothesistestingH^2asymp} shows that the non-private complexity under $\varepsilon$-PML on $\mathcal P_c$ can be achieved for,
\begin{equation}
    \varepsilon\ \geq \varepsilon^\star = -\log\left(\frac{c|\mathcal X|}{2}\right)
     = -\log(c) - \log\left(\frac{|\mathcal X|}{2}\right).
\end{equation}
Since \eqref{eq:LDPprivnhypo} is stated for $|\mathcal X|=2$, the statement in \cite{pensia2023simple} implies ours for binary distributions. For $|\mathcal X|>2$, \cite{pensia2023simple} shows that the free-privacy threshold $\alpha^*$ in LDP scales as $\alpha^* \asymp H^{-2}(P_X||Q_X)$. In contrast, assuming regularity of the problem in form of the parameter $c$ reintroduces a \say{free-privacy} threshold that is independent of the specific distributions $P_X$, $Q_X$, while retaining full adversarial protection for the specific pair $P_X,Q_X\in\mathcal P_c$ in the adversarial model of PML in \eqref{eq:PMLdefGainFunc}, and only discarding protection for overly pessimistic data distributions (those not in $\mathcal P_c$).

\subsection{Bayesian testing with $n=1$}
In what follows, we assume that $n=1$, and that the probability of hypothesis $H_0$ being true is $\pi_0$, while the probability of $H_1$ being true is $\pi_1=1-\pi_0\leq \pi_0$. The Bayesian error probability of test $\phi_n$ is given by,
\begin{equation}
    P_e^{\text{Bayes}}(\pi_1) = (1-\pi_1) \alpha_n(\phi_n) + \pi_1\beta_n(\phi_n).
\end{equation}
It can be shown that the optimal test given sample $X=x$ is to choose $H_0$ whenever $\pi_0 p(x) > \pi_1 q(x)$ ($p$ and $q$ denoting densities w.r.t $\mu_X$), which yields the (minimal) Bayesian error probability,
\begin{equation}
    P_e^*(\pi_1) = \int_\mathcal X \min\{\pi_0 p(x),\pi_1q(x)\}d\mu_X.
\end{equation}
Using the identity $\min\{a,b\} = a-\max\{0,a-b\}$, we can rewrite this error probability as,
\begin{align}
    P_e^*(\pi_1) &= \int_\mathcal X \pi_1q(x) - \left[\pi_1q(x)-\pi_0p(x)\right]_+d\mu_X =\pi_1 - \pi_1 \int_\mathcal X \left[q(x) - \frac{\pi_0}{\pi_1}p(x)\right]_+d\mu_X \\[.5em]
    &= \pi_1 - \pi_1 E_{\frac{\pi_0}{\pi_1}}(Q_X||P_X).
\end{align}
That is, the error probability of the optimal Bayesian test is determined by the $E_\gamma$-divergence of $Q_X$ from $P_X$ with $\gamma = \nicefrac{\pi_0}{\pi_1}$.\footnote{We remark that this relation was, to the best of our knowledge, first observed in \cite{liu2016e_}.} The result in Theorem \ref{thm:E_gamma_contraction} therefore enables us to obtain a private bound for the Bayesian hypothesis testing problem.
\begin{proposition}
\label{prop:1hypotestEgamma}
    Let $P_X,Q_X\in\mathcal P_c$ and assume that $\mathsf K \in\mathcal M(\varepsilon,c)$ for some $\varepsilon \geq 0$ and $0< c < \mu_X(\mathcal X)^{-1}$. Let $\Psi_\gamma$ be defined according to Theorem \ref{thm:E_gamma_contraction}. The Bayesian error probability of any test $\phi$ operating on a privatized sample $Y\sim R_X\mathsf K$ with $R_X\in\{P_X,Q_X\}$ is bounded from below by,
    \begin{equation}
        P_e^*(\pi_1) \geq \pi_1 - \pi_1\Psi_{\frac{\pi_0}{\pi_1}}(e^\varepsilon,c) \bm 1\left\{\frac{\pi_0}{\pi_1}<e^\varepsilon d_c+1\right\} E_{\frac{\pi_0}{\pi_1}}(Q_X||P_X).
    \end{equation}
\end{proposition}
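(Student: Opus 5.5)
The plan is to repeat the non-private Bayesian computation above, but now in the output domain of $\mathsf K$, and then bound the output divergence using the $E_\gamma$ contraction results. Throughout, set $\gamma\coloneqq \pi_0/\pi_1\geq 1$; the assumption $\pi_1\leq\pi_0$ guarantees this.

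\textbf{Step 1: reduce to a divergence of the output marginals.} Any test $\phi$ acting on $Y$ receives a sample from $P_X\mathsf K$ under $H_0$ or from $Q_X\mathsf K$ under $H_1$. The optimal Bayesian test in this privatized problem is the likelihood-ratio test between these two output marginals. Repeating the derivation preceding the proposition with $P_X,Q_X$ replaced by $P_X\mathsf K,Q_X\mathsf K$ (densities with respect to $\mu_Y$) gives, for every test $\phi$,
\begin{equation}
P_e^{\text{Bayes}}(\pi_1)\;\geq\; \pi_1-\pi_1E_{\gamma}(Q_X\mathsf K\|P_X\mathsf K).
\end{equation}
This holds because the optimal error equals $\int\min\{\pi_0 p_Y,\pi_1 q_Y\}\,d\mu_Y$, which we rewrite with $\min\{a,b\}=a-(a-b)_+$ exactly as before.

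\textbf{Step 2: bound the output divergence.} By the definition of the restricted contraction coefficient, and since $P_X,Q_X\in\mathcal P_c$, we have $E_\gamma(Q_X\mathsf K\|P_X\mathsf K)\leq \eta^{\mathcal P_c}_\gamma(\mathsf K)\,E_\gamma(Q_X\|P_X)$. If $E_\gamma(Q_X\|P_X)=0$, the output divergence is also zero by the data processing inequality, so this case is trivial. Corollary~\ref{corr:restrictedEgammacontraction} then gives $\eta^{\mathcal P_c}_\gamma(\mathsf K)\leq \mathbf 1\{\gamma<e^\varepsilon d_c+1\}\,\eta_\gamma(\mathsf K)$. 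Finally, Theorem~\ref{thm:E_gamma_contraction} bounds $\eta_\gamma(\mathsf K)\leq\Psi_\gamma(e^\varepsilon,c)$. Substituting this chain into Step 1 yields the claimed inequality.

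\textbf{Obstacles.} I expect no substantial difficulty. One point to check is that the indicator is handled correctly at the boundary $\gamma=e^\varepsilon d_c+1$. The corollary's proof uses a strict inequality, so there I would invoke Lemma~\ref{lem:gammamingammamaxbound} directly: it shows the likelihood ratio is at most $\gamma$, so $E_\gamma(Q_X\mathsf K\|P_X\mathsf K)=0$ holds also at equality, because $\sup_B[Q(B)-\gamma P(B)]\leq 0$ in that case. The other case to check is $\varepsilon=0$, where Theorem~\ref{thm:E_gamma_contraction} is stated for $\varepsilon>0$. Here either continuity applies, or we note directly that $\mathcal M(0,c)$ forces $\mathsf K(\cdot|x)$ to be constant in $x$, so the output divergence vanishes and the bound is trivial.
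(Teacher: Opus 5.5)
Your proposal is correct and follows the paper's (implicit) argument: write the optimal Bayesian error for the output marginals as $\pi_1-\pi_1E_{\pi_0/\pi_1}(Q_X\mathsf K\|P_X\mathsf K)$, then bound the output divergence by combining Corollary~\ref{corr:restrictedEgammacontraction} with Theorem~\ref{thm:E_gamma_contraction}. Your extra checks are valid and close small gaps the paper leaves open: at the boundary $\gamma=e^\varepsilon d_c+1$, the bound from Lemma~\ref{lem:gammamingammamaxbound} gives $E_\gamma=0$, whereas the corollary's proof only covers the strict case, and at $\varepsilon=0$, $\mathcal M(0,c)$ forces $\mathsf K(\cdot|x)$ to be independent of $x$, whereas the theorem is stated only for $\varepsilon>0$.
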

We point at two observations relating $\varepsilon$-PML on $\mathcal P_c$ to the impossibility of distinguishing $P_X$ and $Q_X$ beyond a blind guess: First, whenever $\varepsilon \leq \log\left(\frac{1}{d_c}\left[\frac{\pi_0}{\pi_1}-1\right]\right)$, we have $P_e^*(\pi_1) = \pi_1$, that is, the privacy is so strict that the optimal error probability is obtained by blindly guessing $H_0$ and disregarding the observation $X=x$. Second, for $c\to0$, we recover the result of \citet{10206578}, which states that privately testing between $P_X$ and $Q_X$ is impossible beyond a blind guess whenever $E_{\frac{\pi_0}{\pi_1}}(Q_X||P_X) \leq \frac{(\pi_0-\pi_1)}{\pi_1(e^\varepsilon+1)}$. 

\subsection{Mean Estimation}
 In this section, we consider the problem of estimating the mean of a distribution $P_X \in\mathcal P_c$. We begin by focusing on the one-dimensional problem, for which a standard Le\,Cam argument suffices. For the multi-dimensional case, we will need to apply Assouad's method, which is why we separate the case $d=1$ from $d\geq 2$. For any $P_X \in \mathcal P(\mathcal X)$, let $\theta(P_X)$ denote a parameter of a random variable distributed according to $P_X\in \mathcal P$, where $\mathcal P$ is some pre-specified class of distribution (e.g. normal with fixed variance). Let $\hat \theta: \mathcal X^n\to \mathbb R$ be any estimator which forms an estimate of $\theta (P_X)$ from observations $X^n = \{X_i\}_{i=1}^n \sim P_X^{\otimes n}$. The non-private minimax estimation risk of the problem is defined as,
\begin{equation}
    \mathfrak R_n\left(\theta(\mathcal P),\Phi\circ \rho\right) = \inf_{\hat \theta} \sup_{P_X\in\mathcal P} \mathbb E_{X^n \sim P_X^{\otimes n}}\left[\Phi\left(\rho\left(\hat \theta(X^n),\theta(P_X)\right)\right)\right].
\end{equation}
We are interested in the private counterpart of this risk. To this end, let again $Y^n \sim (P_X \mathsf K)^{\otimes n}$, where $\mathsf K \in \mathcal M(\varepsilon,c)$ satisfies $\varepsilon$-PML on $\mathcal P_c$. We define the $(\varepsilon,c)$-private minimax estimation risk for some $\varepsilon>0$, $0 < c< \mu_X(\mathcal X)^{-1}$ as,
\begin{equation}
\label{eq:privateminimaxrisk}
    \mathfrak R_n^{\varepsilon,c}(\theta(\mathcal P),\Phi\circ \rho) = \inf_{\mathsf K \in \mathcal M(\varepsilon,c)}\inf_{\hat \theta} \sup_{P_X \in\mathcal P_c \cap \mathcal P} \mathbb E_{Y^n \sim (P_X \mathsf K)^{\otimes n}}\left[\Phi\left(\rho\left(\hat \theta(Y^n),\theta(P_X)\right)\right)\right].
\end{equation}
Note that $\mathfrak R_n^{\infty,0}(\theta(\mathcal P),\Phi\circ \rho) = \mathfrak R_n(\theta(\mathcal P),\Phi\circ \rho)$. 
In what follows, we will obtain bounds on $\mathfrak R_n^{\varepsilon,c}(\theta(\mathcal P),\Phi\circ \rho)$ for the mean estimation problem. 
\subsubsection{One dimensional mean estimation}
Consider $\mathcal P = \mathcal P_c$ and $\theta(P_X) = \mathbb E_{X\sim P_X}[X]$ with $\mathcal X =[-b,b]$ for some $b\in(0,\infty)$. We take the loss function to be the squared Euclidean distance, that is, we fix $\Phi(t) = t^2$ and $\rho(x,x') = |x-x'|$. The following result characterizes the asymptotic behavior of the minimax-risk for this problem.

\begin{theorem}
\label{thm:onedmean}
    For $b\in(0,\infty)$, let $\mathcal X=[-b,b]$. For $\varepsilon> 0$ and $0< c <\mu_X(\mathcal X)^{-1}$, we have,
    \begin{equation}
        \mathfrak R^{\varepsilon,c}_n(\theta(\mathcal P),|\cdot|^2) \asymp \frac{b^2}{n\,\eta_{\rm TV}^2(\varepsilon)}.
    \end{equation}
    where the implicit constants only depend on $c$ and $\mu_X(\mathcal X)$.
\end{theorem}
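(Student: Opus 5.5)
We prove a lower and an upper bound separately.

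\emph{Lower bound.} We use a two-point Le\,Cam argument inside $\mathcal P_c$. Write $\mu_X(\mathcal X)=2b$ and let $u=(2b)^{-1}$ be the uniform density. Fix $\delta>0$ and define the densities $p_\pm(x)=u+\nicefrac{\kappa}{2}\,(u-c)\,\delta\,\mathrm{sign}(x)/b$, where $\kappa\le 1$ is chosen so that both stay $\ge c$ for all $\delta\le 1$. These are valid densities in $\mathcal P_c$. Their means differ by $\Delta\asymp (u-c)\delta b$, while $\text{TV}(P_+||P_-)\asymp (1-2bc)\delta$.

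By Le\,Cam's method, for any $\mathsf K\in\mathcal M(\varepsilon,c)$ and any estimator,
\[
\mathfrak R_n \ge \frac{\Delta^2}{8}\left(1-\text{TV}\big((P_+\mathsf K)^{\otimes n}||(P_-\mathsf K)^{\otimes n}\big)\right).
\]
As in the lower bound of Proposition~\ref{prop:nhypothesistesting}, Pinsker's inequality, tensorization and Theorem~\ref{thm:chi^2_TV^2_nr2} give
\[
\text{TV}\big((P_+\mathsf K)^{\otimes n}||(P_-\mathsf K)^{\otimes n}\big)\le \sqrt{2n\,\Xi(\varepsilon,c)}\,\text{TV}(P_+||P_-).
\]
We choose $\delta\asymp (n\,\Xi(\varepsilon,c))^{-1/2}$, capped at a constant, so that this is at most $\nicefrac12$. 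This yields $\mathfrak R_n\gtrsim b^2/(n\,\Xi(\varepsilon,c))$. Lemma~\ref{lem:etaTVsq=Xi} converts $\Xi$ to $\eta^2_{\rm TV}(\varepsilon)$ for $\varepsilon<\varepsilon^*$, with constants depending on $c$ and $\mu_X(\mathcal X)$. For $\varepsilon\ge\varepsilon^*$ we have $\eta_{\rm TV}=1$, and the non-private rate $b^2/n$, obtained from the same two-point family, is a lower bound by set inclusion.

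\emph{Upper bound.} We use the binary mechanism $\mathsf K^\star_{\bar\varepsilon,c}$ of Theorem~\ref{thm:TVcontraction} with $\bar\varepsilon=\min\{\varepsilon,\varepsilon^*\}$. To keep the output linear in $x$, we replace the indicator of $\mathcal A$ by a function $a(x)=\tfrac12+\lambda x/b$. Following Lemma~\ref{lem:TVseparatinglemma}, we pick $\lambda\asymp 1$ so that $m_a=\int a\,d\mu_X=b$ and $\max\{m_a,2b-m_a\}\le(e^{\bar\varepsilon}c)^{-1}$; the latter holds because $\bar\varepsilon\le\varepsilon^*$. We expect the resulting kernel to satisfy $\varepsilon$-PML on $\mathcal P_c$ by the same verification as in Appendix~\ref{app:mechanismrandomization}. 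It is affine in $a$:
\[
\mathsf K(0|x)=\beta_0+\eta_{\rm TV}(\bar\varepsilon)\,a(x),
\]
with known constant $\beta_0$. Hence
\[
\mathbb P[Y=0]=\beta_0+\eta_{\rm TV}(\varepsilon)\Big(\tfrac12+\lambda\,\theta(P_X)/b\Big).
\]
The unbiased estimator
\[
\hat\theta=\frac{b}{\lambda\,\eta_{\rm TV}(\varepsilon)}\Big(\frac1n\sum_i \mathbf 1\{Y_i=0\}-\beta_0-\tfrac{\eta_{\rm TV}(\varepsilon)}{2}\Big)
\]
then has variance at most $b^2/(4\lambda^2 n\,\eta^2_{\rm TV}(\varepsilon))$, which matches the lower bound.

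\emph{Main obstacle.} The hardest step is checking that the fractional (non-indicator) version of $\mathsf K^\star$ still belongs to $\mathcal M(\varepsilon,c)$. This requires redoing the Lemma~\ref{lemma:pml_constraints}-type verification with $\mu_X(\mathcal A)$ replaced by $m_a$. In the constraint, the worst-case $P_X\in\mathcal P_c$ puts mass $c$ everywhere except at the minimizing point of $a$, so the check should reduce exactly to the mass condition above. If it does not, the fallback is to use the indicator of $\{x>0\}$ with one bit of common randomness, as in Appendix~\ref{app:mechanismrandomization}, combined with randomized rounding of $x$ to $\pm b$ before applying the mechanism. This preserves unbiasedness at the cost of a constant factor.
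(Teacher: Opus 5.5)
Your proposal is correct and is essentially the paper's proof. The lower bound has the same structure: Le~Cam, Pinsker, tensorization, Theorem~\ref{thm:chi^2_TV^2_nr2} and Lemma~\ref{lem:etaTVsq=Xi}, plus the non-private bound for $\varepsilon\geq\varepsilon^*$. Your sign-perturbed uniform family differs from the paper's endpoint bumps only in constants, because all that matters is $\Delta/\text{TV}(P_+\|P_-)=b$. For the same reason, the stray factor $1/b$ and the missing $\pm$ in your formula for $p_\pm$ are harmless.

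For the upper bound, take $\lambda=\tfrac12$. Your kernel is then $\mathsf K(0|x)=\beta_0+\eta\,(\tfrac12+\tfrac{x}{2b})$ with $\beta_0=(1-e^{\bar\varepsilon}bc)/(1+e^{\bar\varepsilon}d_c)$ and $\eta=\eta_{\rm TV}(\bar\varepsilon)$. This is exactly the paper's DJW rounding $X\mapsto Z\in\{\pm b\}$ followed by the binary $\mathsf K^\star_{\varepsilon,c}$. Since $\beta_0+\eta/2=\tfrac12$, your debiased estimator also coincides with the paper's $\bar Y/\eta_{\rm TV}(\varepsilon)$. So your ``fallback'' is the same scheme and costs no constant factor.

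The one step you leave open, privacy of the fractional kernel, does hold. For general $a:\mathcal X\to[0,1]$, set $\beta_0=(1-e^{\bar\varepsilon}c\,m_a)/(1+e^{\bar\varepsilon}d_c)$. From the proof of Lemma~\ref{lemma:pml_constraints}, $\inf_{P_X\in\mathcal P_c}(P_X\mathsf K)(0)=c(\mu_X\mathsf K)(0)+d_c\inf_x\mathsf K(0|x)=\beta_0+\eta(c\,m_a+d_c\inf a)$. One then gets
\[ (1+e^{\bar\varepsilon}d_c)\Big[e^{\bar\varepsilon}\inf_{P_X\in\mathcal P_c}(P_X\mathsf K)(0)-\sup_x\mathsf K(0|x)\Big]=(e^{\bar\varepsilon}-1)\big(1-\sup a+e^{\bar\varepsilon}d_c\inf a\big)\geq 0. \]
Output $1$ is the same computation with $a\mapsto 1-a$ and $m_a\mapsto 2b-m_a$. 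So the check reduces to $0\leq a\leq 1$; the mass conditions are needed only to make $\mathsf K(\cdot|x)$ a probability vector, not, as you guessed, to pass the PML check itself.

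The paper avoids this computation by factoring through $Z$. Every $P_X\in\mathcal P_c$ induces $\min_z P_Z(z)\geq bc$, so it suffices that the binary mechanism is $\varepsilon$-PML on the $bc$-interior of the two-point simplex, whose dual constant $1-2bc$ equals $d_c$. Your fallback would need that same factorization. The rounded input is atomic and not in $\mathcal P_c$, so the $\mathcal P_c$-guarantee of a kernel built on $[-b,b]$ cannot be invoked for it directly. Also, no common randomness is needed there, since $\mu_X(\{x>0\})=b$ exactly.
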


\begin{remark}
    Theorem \ref{thm:onedmean} reveals a similar structure as already shown in the hypothesis testing problem: Whenever $\varepsilon \geq \varepsilon^* = -\log(bc)$, the standard non-private risk can be achieved while satisfying $\varepsilon$-PML on $\mathcal P_c$. In other words, in the low-privacy regime, estimating the population mean of the data can be done privately \emph{for free}, without incurring any additional risk. Further, the estimator constructed below for the mean estimation problem is \emph{order-optimal}.
\end{remark}
We will now prove Theorem \ref{thm:onedmean}. As previously, we begin by proving a non-asymptotic bound.

\begin{proposition}
\label{prop:meanestimation}
    Let $\varepsilon> 0$ and $0< c <\mu_X(\mathcal X)^{-1}$. Then for $\Phi = (\cdot )^2$ and $\rho$ the Euclidean distance,
    \begin{equation}
        \min\left\{\frac{b^2d_c^2}{4},\,\frac{b^2}{32n\Xi(\varepsilon,c)}\right\} \leq \mathfrak R^{\varepsilon,c}_n(\theta(\mathcal P),\Phi\circ \rho) \leq  \frac{b^2}{n\,\eta_\text{TV}^2(\varepsilon)}.
    \end{equation}
\end{proposition}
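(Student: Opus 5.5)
The lower bound will come from a two-point Le\,Cam argument combined with the SDPI of Theorem~\ref{thm:chi^2_TV^2_nr2}. The upper bound will come from privatizing each sample with the optimal binary kernel of Theorem~\ref{thm:TVcontraction} and debiasing.

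\emph{Lower bound.} Here $\mu_X$ is Lebesgue measure on $[-b,b]$, so $\mu_X(\mathcal X)=2b$.
\begin{enumerate}
\item For $\delta\in(0,1]$, take the two densities
\begin{equation}
p_\pm(x) = c + \frac{d_c}{2b}\Big(1 \pm \delta\frac{x}{b}\Big).
\end{equation}
Both are at least $c$ and integrate to one, so they lie in $\mathcal P_c$.
\item Their means differ by $\Delta=\tfrac{2}{3}\delta b d_c$, and a direct computation gives $\text{TV}(P_+||P_-)=\tfrac{d_c\delta}{2}$.
\item Le\,Cam's method, with Pinsker's inequality and tensorization, gives for any $\mathsf K\in\mathcal M(\varepsilon,c)$
\begin{equation}
\mathfrak R \geq \frac{\Delta^2}{8}\Big(1-\sqrt{\tfrac n2 D(P_+\mathsf K||P_-\mathsf K)}\Big).
\end{equation}
\item Theorem~\ref{thm:chi^2_TV^2_nr2} gives $D\leq 4\Xi\,\text{TV}^2 = \Xi d_c^2\delta^2$.
\item Choose $\delta^2=\min\{1,\,(2n\Xi d_c^2)^{-1}\}$, so that the bracket in step~3 is at least $1/2$. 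If $\delta=1$, the risk is of order $b^2d_c^2$, which is the first term of the minimum. Otherwise the risk is of order $b^2/(n\Xi)$, since the $d_c^2$ factors cancel.
\end{enumerate}
The numerical constants $4$ and $32$ in the statement may require tuning the perturbation shape; for instance, a step perturbation $\pm\delta\,\mathrm{sign}(x)$ gives $\Delta=\delta b d_c/2$ with the same total variation. I will adjust the shape to match the stated constants.

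\emph{Upper bound.}
\begin{enumerate}
\item Assume first $\varepsilon\le\varepsilon^*$. Take $\mathcal A=[0,b]$, which satisfies $\mu_X(\mathcal A)=\mu_X(\mathcal A^c)=b\le (e^\varepsilon c)^{-1}$, and the kernel $\mathsf K^\star_{\varepsilon,c}$ with a randomized-response structure on this split. A plain binary output cannot reveal the mean, so I instead use a randomized-rounding variant. First map $x$ to $Z\in\{0,1\}$ with $\Pr[Z=1\mid x]=\tfrac{1}{2}+\tfrac{x}{2b}$. Then release $Y$ by the rule of $\mathsf K^\star_{\varepsilon,c}$ applied to $Z$; equivalently, let $a(x)=\tfrac12+\tfrac{x}{2b}$ play the role of the indicator, as in Lemma~\ref{lem:TVseparatinglemma}. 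This produces a kernel that is affine in $a(x)$:
\begin{equation}
\Pr[Y=1\mid x] = \beta + \eta_\text{TV}(\varepsilon)\,a(x),
\end{equation}
with $\beta$ known. The main obstacle is checking that this kernel still lies in $\mathcal M(\varepsilon,c)$. My plan is to verify the constraints of Lemma~\ref{lemma:pml_constraints} through the condition $m_a=\int a\,d\mu_X = b \le (e^\varepsilon c)^{-1}$, mirroring Appendix~\ref{app:mechanismrandomization}.
\item Define the unbiased estimator
\begin{equation}
\hat\theta=\frac{2b}{\eta_\text{TV}(\varepsilon)}\Big(\bar Y - \beta\Big) - b,
\end{equation}
where $\bar Y$ is the sample mean of the $Y_i$. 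Its variance is at most $\frac{4b^2}{4n\eta_\text{TV}^2(\varepsilon)}=\frac{b^2}{n\eta^2_\text{TV}(\varepsilon)}$, using $\mathrm{Var}(Y)\le 1/4$.
\item For $\varepsilon>\varepsilon^*$, use the $\bar\varepsilon=\min\{\varepsilon,\varepsilon^*\}$ trick from the proof of Proposition~\ref{prop:nhypothesistesting}, which gives $\eta_\text{TV}(\bar\varepsilon)=\eta_\text{TV}(\varepsilon)=1$.
\end{enumerate}
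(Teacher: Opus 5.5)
Your upper bound is correct and is essentially the paper's construction. You randomly round $x$ to two points (your $a(x)=\tfrac12+\tfrac{x}{2b}$, with $m_a=b$), apply binary randomized response with floor $bc$, and then debias. At $\bar\varepsilon=\varepsilon^*=-\log(bc)$ your $\beta$ vanishes and the kernel becomes the identity, matching the paper's second case.

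The privacy check differs slightly. The paper argues through the chain $X-Z-Y$, using $\Pr[Z=\pm b]\ge bc$ for all $P_X\in\mathcal P_c$. You instead verify the composite kernel directly via $\int a\,dP_X\ge c\,m_a$. Both work. Lemma~\ref{lemma:pml_constraints} is stated only as a necessary condition, but its converse follows at once from $P_X=c\mu_X+d_c\bar P_X$.

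The gap is in the lower bound: as written, it does not prove the inequality with the stated constants.

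With your linear perturbation ($\Delta=\tfrac23\delta bd_c$, $\text{TV}=\tfrac12\delta d_c$), the argument yields only $\min\{b^2d_c^2/36,\,b^2/(72n\Xi(\varepsilon,c))\}$. The fix you suggest goes the wrong way. For the step perturbation each mean is $\pm\tfrac12\delta bd_c$, so $\Delta=\delta bd_c$ and $\text{TV}=\delta d_c$, not $\tfrac12\delta d_c$. The ratio $\Delta/\text{TV}=b$ is worse than your $\tfrac43 b$ and gives $b^2/(128n\Xi(\varepsilon,c))$.

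The missing idea is this. The Le\,Cam bound $\frac{\Delta^2}{8}\big(1-\sqrt{2n\Xi(\varepsilon,c)}\,\text{TV}\big)$ produces the constants $4$ and $32$ exactly when $\Delta=2b\,\text{TV}$. That is the largest ratio possible on $[-b,b]$, since $|\int x\,d(P-Q)|\le 2b\,\text{TV}(P||Q)$. It is attained only by concentrating the perturbation at $\pm b$.

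Atoms are not in $\mathcal P_c$, so the paper spreads the $d_c$-part as uniform mass: $\tfrac{1\pm\delta}{2}$ on $[b-\kappa,b]$ and $\tfrac{1\mp\delta}{2}$ on $[-b,-b+\kappa]$. This gives $\Delta=2(b-\kappa/2)\delta d_c$ and $\text{TV}=\delta d_c$. Taking $\delta^2=\min\{1,(8n\Xi(\varepsilon,c)d_c^2)^{-1}\}$ and then letting $\kappa\to0$ gives exactly $\min\{b^2d_c^2/4,\,b^2/(32n\Xi(\varepsilon,c))\}$.

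Without this or an equivalent extremal construction, your argument establishes the correct order but not the proposition with its stated constants.
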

\begin{proof}
    We follow Le\,Cam's method (see, e.g., \cite[Chapter 15]{wainwright2019high}). Since $\mu_X$ denotes the Lebsegue measure on the real numbers restricted to the interval $[-b,b]$, we have $d_c = 1- c\mu_X(\mathcal X) = 1-2bc$. Fix $\delta \in (0,1)$ to be picked later, and with a smoothing parameter $\kappa\in(0,b)$, define,
    \begin{equation}
        I_+ = [b-\kappa,b],\quad I_- = [-b,-b+\kappa], \quad b_\kappa = b-\frac{\kappa}{2}.
    \end{equation}
    Let $\bar P_X^\kappa$, $\bar Q_X^\kappa$ denote distributions that admit the $\mu_X$-densities $\bar p_\kappa$, $\bar q_\kappa$, defined as,
    \begin{equation}
        \bar p_\kappa = \frac{1+\delta}{2\kappa}\mathbf 1\{x\in I_+\} +\frac{1-\delta}{2\kappa}\mathbf 1\{x\in I_-\}, \quad \bar q_\kappa = \frac{1+\delta}{2\kappa}\mathbf 1\{x\in I_-\} +\frac{1-\delta}{2\kappa}\mathbf 1\{x\in I_+\},
    \end{equation}
    that is, $\bar P_X^\kappa$ puts mass $(1+\delta)/(2\kappa)$ uniformly on the interval $I_+$ and mass $(1-\delta)/(2\kappa)$ uniformely on $I_-$, and vice-versa for $\bar Q_X^\kappa$. Define the mixtures, 
    \begin{equation}
        P^\kappa_X \coloneqq c\mu_X(\mathcal X) \cdot U_X  + (1-c\mu_X(\mathcal X))\cdot \bar P^\kappa_X,\quad Q^\kappa_X \coloneqq c\mu_X(\mathcal X) \cdot U_X  + (1-c\mu_X(\mathcal X))\cdot \bar Q^\kappa_X,
    \end{equation}
    where $U_X$ denotes the uniform distribution on $[-b,b]$.  Both distributions are absolutely continuous with respect to $\mu_X$ and admit densities lower-bounded by $c$, so $P^\kappa_X,Q^\kappa_X \in \mathcal P_c$. This construction yields,
    \begin{equation}
        \rho(\theta(P_X),\theta(Q_X)) = |\mathbb E_{X\sim P^\kappa_X}[X]-\mathbb E_{X\sim Q^\kappa_X}[X]| = 2 b_\kappa \delta d_c.
    \end{equation}
    Now, by Le\,Cam's method, a lower bound on the estimation risk from $Y^n\sim (P_X\mathsf K)$ is,
    \begin{align}
        \mathfrak R_n^{\varepsilon,c}(\theta(\mathcal P),\Phi\circ \rho) &\geq \frac{b_\kappa^2 \delta^2d_c^2}{2} \left(1-\text{TV}\left((P^\kappa_X\mathsf K)^{\otimes n}||(Q^\kappa_X\mathsf K)^{\otimes n}\right)\right) \\[.5em]
        &\geq \frac{b_\kappa^2 \delta^2d_c^2}{2} \left(1-\sqrt{\frac{1}{2}D\left((P^\kappa_X\mathsf K)^{\otimes n}||(Q^\kappa_X\mathsf K)^{\otimes n}\right)}\right) \\[.5em]
        &\geq \frac{b_\kappa^2 \delta^2d_c^2}{2} \left(1-\sqrt{\frac{n}{2} D\left(P^\kappa_X\mathsf K||Q^\kappa_X\mathsf K\right)}\right) \\[.5em]
        &\geq \frac{b_\kappa^2 \delta^2d_c^2}{2} \left(1-\sqrt{2n\Xi(\varepsilon,c)\text{TV}^2\left(P^\kappa_X||Q^\kappa_X\right)}\right).
    \end{align}
    Observe that from the definition of $P_X$ and $Q_X$, we have $P^\kappa_X-Q^\kappa_X = d_c(\bar P^\kappa_X-Q_X^\kappa)$ since the uniform terms cancel, and further $\bar p_\kappa - \bar q_\kappa = \nicefrac{\delta}{\kappa}\mathbf 1_{\{I_{+}\cup I_-\}}$. Therefore,
    \begin{equation}
        \text{TV}(P_X||Q_X) = d_c \text{TV}(\bar P_X||\bar Q_X) = \frac{d_c}{2} \int |\bar p_\kappa - \bar q_\kappa|d\mu_X = d_c\delta.
    \end{equation}
    Hence, we have, 
    \begin{equation}
        \mathfrak R_n^{\varepsilon,c}(\theta(\mathcal P),\Phi\circ \rho) \geq \frac{b_\kappa^2 \delta^2d_c^2}{2} \left(1-\sqrt{2n\Xi(\varepsilon,c)d_c^2\delta^2}\right)
    \end{equation}
    Finally, we choose $\delta$ such that the divergence term is constant equal to $\nicefrac{1}{2}$ as, 
    \begin{equation}
        \delta^2 = \min\left\{1,\,\frac{1}{8n\Xi(\varepsilon,c)d_c^2}\right\}. 
    \end{equation}
    With this choice, we have for any $\kappa \in (0,b)$,
    \begin{equation}
        \mathfrak R_n^{\varepsilon,c}(\theta(\mathcal P),\Phi\circ \rho) \geq \min\left\{\frac{b_\kappa^2d_c^2}{4},\,\frac{b_\kappa^2}{32n \Xi(\varepsilon,c)}\right\} = \left(1-\frac{\kappa}{2b}\right)^2\min\left\{\frac{b^2d_c^2}{4},\,\frac{b^2}{32n \Xi(\varepsilon,c)}\right\}.
    \end{equation}
    Taking the limit $\kappa \to 0$ shows that,
    \begin{equation}
        \mathfrak R_n^{\varepsilon,c}(\theta(\mathcal P),\Phi\circ \rho) \geq \min\left\{\frac{b^2d_c^2}{4},\,\frac{b^2}{32n \Xi(\varepsilon,c)}\right\}.
    \end{equation}
    This proves the lower bound. We show the upper bound by constructing a variant of the \emph{Duchi-Jordan-Wainwright (DJW) estimator} \cite{duchi2013local}. For each private sample $X_i \sim P_X$, let $Z_i$ be a binary random variable taking values $\{-b,b\}$ with,
    \begin{equation}
        \mathbb P[Z_i=b\mid X_i] = \frac{1}{2} + \frac{X_i}{2b}, \quad \mathbb P[Z_i=-b\mid X_i] = \frac{1}{2}-\frac{X_i}{2b}.
    \end{equation}
    Note that this ensures that $\mathbb E[Z_i\mid X_i] = X_i$. Further, note that by this definition, we have,
    \begin{align}
        \inf_{P_X\in\mathcal P_c}\mathbb P[Z=b] &= \inf_{P_X\in\mathcal P_c}\int_{[-b,b]}\mathbb P[Z=b|X=x]dP_X(x) \\[.5em]
        &= \frac{1}{2} - \sup_{P_X\in\mathcal P_c}\frac{|\mathbb E_{P_X}[X]|}{2b} \\[.5em] 
        &\geq \frac{1}{2}- \frac{b(1-2bc)}{2b} = bc,
    \end{align}
    and since the argument is symmteric, the same holds for $\mathbb P[Z=-b]$. Hence, if we privatize $Z_i$ to outputs $Y_i$, then the Markov chain $X_i-Z_i-Y_i$ holds, and if the mechanism mapping each $Z_i$ to $Y_i$ satisfies $\varepsilon$-PML for any binary distribution $P_Z$ on $\{-b,b\}$ with $\min_z P_Z(z) \geq bc = \nicefrac{c\mu_X(\mathcal X)}{2}$, then the overall procedure mapping from $X_i$ to $Y_i$ satisfies $\varepsilon$-PML for any distribution in $\mathcal P_c$. We differentiate between the following two cases.   
    \begin{enumerate}
        \item $\varepsilon < -\log(bc)$: For each $Z_i$, let the binary random variable $Y_i$ be defined by $Z_i \circ \mathsf K_{\varepsilon,c}^\star$, where according to Theorem \ref{thm:TVcontraction}, 
    \begin{equation}
        \mathsf K_{\varepsilon,c}^\star =\frac{1}{e^\varepsilon (1-2bc)+1}\begin{bmatrix}
            e^\varepsilon (1-bc) & 1-e^\varepsilon bc \\
            1-e^\varepsilon bc & e^\varepsilon (1-bc)
        \end{bmatrix}
    \end{equation}
    From $Y_i$, let the estimator $\hat \theta:\{-b,b\}^n \to \mathbb R$ be defined as,
    \begin{equation}
        \hat \theta(Y^n) = \frac{1}{n} \left(\frac{e^\varepsilon(1-2bc)+1}{e^\varepsilon-1}\right)\sum_{i=1}^n Y_i.
    \end{equation}
    We have $\mathbb E[\hat \theta(Y^n)] = \mathbb E[X_i] = \theta$, hence the squared error of the estimator is it's variance, given by,
    \begin{equation}
        \mathsf{Var}(\hat \theta(Y^n))= \frac{b^2(e^\varepsilon(1-2bc)+1)^2}{n(e^\varepsilon-1)^2}.
    \end{equation}

    \item $\varepsilon \geq -\log(bc)$: Follows by the same steps as above, only now with the deterministic kernel,
    \begin{equation}
        \mathsf K_{\varepsilon,c}^\star = \begin{bmatrix}
            1 & 0 \\
            0 & 1
            \end{bmatrix},
    \end{equation}
    and consequently the estimator,
    \begin{equation}
        \hat \theta(Y^n) = \frac{1}{n}\sum_{i=1}^n Y_i.
    \end{equation}
    This estimator is clearly unbiased, and we have,
    \begin{equation}
        \mathsf{Var}(\hat \theta(Y^n)) = \frac{b^2}{n}.
    \end{equation}
    \end{enumerate}
    Combining both of the above cases, we obtain,
    \begin{equation}
        \mathbb E[(\hat \theta - \theta )^2] = \frac{b^2}{n} \max\left\{1,\left(\frac{e^\varepsilon(1-2bc)+1}{e^\varepsilon-1}\right)^2\right\} = \frac{b^2}{n\,\eta_\text{TV}^2(\varepsilon)}.
    \end{equation}
    This proves the statement of Proposition \ref{prop:meanestimation}.
\end{proof}

To finish the proof of Theorem \ref{thm:onedmean}, we show that the upper and lower bounds in Proposition \ref{prop:meanestimation} coincide asymptotically. 

\begin{proof}[Proof of Theorem \ref{thm:onedmean}]
    For $\varepsilon<\varepsilon^*$ we have $\eta_\text{TV}^2(\varepsilon) \asymp \Xi(\varepsilon,c)$ according to Lemma \ref{lem:etaTVsq=Xi}, hence the bounds in Proposition \ref{prop:meanestimation} coincide here. Whenever $\varepsilon\geq \varepsilon^*$ we have $\eta_\text{TV}(\varepsilon)=1$, and hence the constructed estimator achieves the non-private sample complexity, which is also a lower bound to the private problem, showing order-optimality.
\end{proof}

\subsubsection{Multi-dimensional mean estimation}
We now turn to the case of $d\geq 2$, that is, we assume that $\mathcal X = \mathcal B_d^p$ is the $p$-norm unit ball $\mathcal B_d^2 = \{x\in\mathbb R^d: ||x||_p\leq 1\}$ and examine the problem of finding a lower bound on \eqref{eq:privateminimaxrisk} with $\Phi(t)=t^2$ and $\rho(x,x')=||x-x'||_2$. In contrast to the one dimensional case, in this setting, the reduction to simple binary hypothesis testing results in vacuous bounds. Instead, we employ Assouad's method \cite{Polyanskiy_Wu_2025}, alongside a packing construction based on a hypercube packing inspired by \cite{duchi2013local} and the SDPI in Theorem \ref{thm:chi^2_TV^2_nr2}.

\begin{proposition}
\label{prop:highdmeanestimation}
    Let $p\in[1,2]$ and assume that $d\geq 2$. Let $\mathcal X=\mathcal B_d^p$. For $\Phi(t)=t^2$ and $\rho(x,x')=||x-x'||_2$, $n\geq d^2/(2\Xi(\varepsilon,c)d_c^2)$, we have,
    \begin{equation}
        \mathfrak R^{\varepsilon,c}_n(\theta(\mathcal P),||\cdot ||_2^2) \gtrsim  \max\left\{\frac{1}{n},\frac{d}{n\Xi(\varepsilon,c)}\right\}.
    \end{equation}
\end{proposition}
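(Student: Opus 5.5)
The lower bound has two parts. The term $1/n$ is the non-private lower bound, which is inherited because $\mathfrak R_n^{\varepsilon,c}\ge$ the non-private risk over $\mathcal P_c\cap\mathcal P$. To get it, I will run the one-dimensional Le\,Cam construction of Proposition~\ref{prop:meanestimation} along a single coordinate direction of $\mathcal B_d^p$. Concretely, I take two small caps near $\pm e_1$, mixed with the uniform distribution $U_X$ with weight $c\mu_X(\mathcal X)$, and use the non-private bound $D(P^{\otimes n}\|Q^{\otimes n})=nD(P\|Q)\lesssim n\,\text{TV}^2(P\|Q)/c'$. The reverse-Pinsker constant here depends only on $c$, because densities are bounded below by $c$. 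The main work is the term $d/(n\Xi)$, which I prove by Assouad's method.

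For Assouad, I index distributions by $v\in\{-1,1\}^d$. Following \cite{duchi2013local}, I define the base distribution $\bar P_v$ as a mixture over coordinate directions: choose $j$ uniformly in $[d]$, then place $X$ in a small neighborhood of $\pm e_j$ (smoothed, with width $\kappa\to0$), choosing sign $+v_j$ with probability $(1+\delta)/2$. Since $\pm e_j\in\mathcal B_d^p$ for every $p$, this is admissible. Then I set $P_v=c\mu_X(\mathcal X)U_X+d_c\bar P_v\in\mathcal P_c$. Its mean is $\theta(P_v)=d_c\delta v/d$ up to $O(\kappa)$, so $\|\theta(P_v)-\theta(P_{v'})\|_2^2\ge 4(d_c\delta/d)^2\,d_H(v,v')$. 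The separation per coordinate is therefore $\tau=(d_c\delta/d)^2$. Assouad's lemma gives
\begin{equation}
\mathfrak R_n^{\varepsilon,c}\gtrsim d\,\tau\Big(1-\max_{v\sim v'}\text{TV}\big((P_v\mathsf K)^{\otimes n}\|(P_{v'}\mathsf K)^{\otimes n}\big)\Big),
\end{equation}
where the maximum is over neighbours differing in one coordinate. For such neighbours, $P_v-P_{v'}=d_c(\bar P_v-\bar P_{v'})$ is supported on the two caps at $\pm e_j$, which carry total mass $1/d$ under $\bar P_v$. This gives $\text{TV}(P_v\|P_{v'})=d_c\delta/d$.

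Next I apply Pinsker, tensorization and Theorem~\ref{thm:chi^2_TV^2_nr2}, exactly as in Proposition~\ref{prop:meanestimation}. This yields
\begin{equation}
\text{TV}\big((P_v\mathsf K)^{\otimes n}\|(P_{v'}\mathsf K)^{\otimes n}\big)\le\sqrt{2n\Xi(\varepsilon,c)}\,\frac{d_c\delta}{d}.
\end{equation}
I choose $\delta^2=d^2/(8n\Xi d_c^2)$, which makes this at most $1/2$. The hypothesis $n\ge d^2/(2\Xi d_c^2)$ guarantees $\delta\le1$, so the construction is valid. Substituting gives $\mathfrak R\gtrsim d\cdot d_c^2\delta^2/d^2=d/(8n\Xi)\cdot(1/d)\cdot d$. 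I need to check this bookkeeping carefully, since $d\tau=d_c^2\delta^2/d=d/(8n\Xi)$, which is exactly the claimed order.

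The main obstacle is this scaling bookkeeping. It depends on two facts: $\text{TV}$ between neighbours must be $\asymp\delta/d$ (not $\delta/\sqrt d$), and the squared separation must be $\asymp\delta^2/d^2$ per coordinate. If the one-sparse construction turns out to lose a factor, the first alternative I would try is a dense hypercube $\theta_v\propto\delta v/d^{1/p}$. This is admissible in $\mathcal B_d^p$ for $p\le2$ and leans on the restriction $p\in[1,2]$. The second thing to verify is that Theorem~\ref{thm:chi^2_TV^2_nr2} is applied only to pairs in $\mathcal P_c$, which holds by construction. Finally, I take the maximum of the two lower bounds and let $\kappa\to0$.
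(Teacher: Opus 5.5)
Your proposal matches the paper's proof. It uses the same one-sparse hypercube packing $P_v=c\mu_X+d_c\bar P_v$ (mass $\frac{1}{d}\cdot\frac{1\pm\delta v_j}{2}$ near $\pm e_j$), Assouad's lemma with Pinsker, tensorization and Theorem~\ref{thm:chi^2_TV^2_nr2}, the same choice $\delta^2=d^2/(8n\Xi(\varepsilon,c)d_c^2)$ (the hypothesis on $n$ in fact forces $\delta\le\nicefrac12$), and the non-private $1/n$ rate, which the paper simply cites. One correction to your justification of the $1/n$ term: on a continuous space a density floor $c$ alone does not give $D(P\|Q)\lesssim\text{TV}^2(P\|Q)$, since mass concentrated on shrinking caps breaks it. It does hold for your specific two-cap pair, because both distributions put mass of order $d_c$ on each cap, so $\chi^2(P\|Q)\le 4\,\text{TV}^2(P\|Q)/(d_c(1-\delta))$.
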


\begin{proof}
    We employ Assouad's Lemma as formulated in \cite[Theorem 31.2]{Polyanskiy_Wu_2025}. Note that this version of Assouad's Lemma deviates from the standard formulation by introducing a $\alpha$-triangle inequality for the loss that enables us to restrict the maximization over Hamming-neighbors $v,v': d_H(v,v')=1$. The squared Eucledean distance satisfies this $\alpha$-triangle inequality with $\alpha =2$. We will need the following Lemma, which follows from standard packing constructions and is proved in Appendix \ref{app:packingproof}.
\begin{lemma}
\label{lem:multidimpacking}
    For any $p\in[1,2]$, $d\geq 2$ and $\delta\leq \nicefrac{1}{2}$, let $\mathcal X=\mathcal B_d^p$. There exists a set of distributions indexed by the hypercube $\{\pm1\}^d$, such that $\{P_\nu\}_{\nu\in\{\pm 1\}^d} \subset \mathcal P_c(\mathcal X)$ and for any $v,v'\in\{\pm 1\}^d$,
    \begin{equation}
        ||\theta(P_\nu)-\theta(P_{\nu'})||_2^2 \geq \frac{4\delta^2d_c^2}{d^2}d_H(v,v'),\quad \text{and} \quad \text{TV}(P_\nu||P_{\nu'}) = \frac{\delta d_c}{d}d_H(v,v'),
    \end{equation}
    where $d_H(\cdot,\cdot)$ denotes the Hamming distance function.
\end{lemma}
Let $\mathsf K\in\mathcal M(\varepsilon,c)$. Given the packing distributions $\{P_v\}_v$ for which $||\theta(P_v)-\theta(P_{v'})||_2^2\geq \beta d_H(v,v')$ for some $\beta>0$, Assouad's Lemma in \cite[Theorem 31.2]{Polyanskiy_Wu_2025} for our private setting and the squared loss yields,
\begin{align}
    \mathfrak R_n^{\varepsilon,c}(\theta(\mathcal P),||\cdot||^2_2) &\geq \frac{\beta d}{8}\left(1-\max_{v,v'\in\mathcal V:d_H(v,v')=1}\text{TV}((P_v\mathsf K)^{\otimes n}||(P_{v'}\mathsf K)^{\otimes n})\right) \\[.5em]
    &\stackrel{(a)}{\geq} \frac{\beta d}{8}\left(1-\max_{d_H(v,v')=1}\sqrt{\frac{n}{2}D(P_v\mathsf K||P_{v'}\mathsf K)}\right) \\[.5em]
    &\stackrel{(\text{Thm }\ref{thm:chi^2_TV^2_nr2})}{\geq} \frac{\beta d}{8}\left(1-\sqrt{2n\Xi(\varepsilon,c)\text{TV}^2(P_v||P_{v'})}\right),
\end{align}
where $(a)$ follows from Pinsker's inequality and the tensorization property of relative entropy. With the packing in Lemma \ref{lem:multidimpacking} for some $\delta \leq \nicefrac{1}{2}$,
\begin{equation}
     \mathfrak R_n^{\varepsilon,c}(\theta(\mathcal P),||\cdot||^2_2) \geq \frac{\delta^2d_c^2}{2d}\left(1-\sqrt{\frac{2n\Xi(\varepsilon,c)d_c^2 \delta^2}{d^2}}\right).
\end{equation}
By choosing,
\begin{equation}
     \delta^2 = \min\left\{\frac{1}{4},\,\frac{d^2}{8n\Xi(\varepsilon,c)d_c^2}\right\},
\end{equation}
the term inside the bracket is bounded by  $\nicefrac{1}{2}$ from below, and therefore we have for $n \geq d^2/(2\Xi(\varepsilon,c)d_c^2)$,
\begin{equation}
    \mathfrak R_n^{\varepsilon,c}(\theta(\mathcal P),||\cdot||^2_2) \geq \frac{d}{32n\Xi(\varepsilon,c)} \asymp \frac{d}{n\Xi(\varepsilon,c)},
\end{equation}
which shows the first part of the bound. To show the remaining part of the rate, note that, necessarily by set inclusion, the private risk is lower-bounded by the non-private risk, so we have
\begin{equation}
    \mathfrak R_n^{\varepsilon,c}(\theta(\mathcal P),||\cdot||^2_2) \geq \mathfrak R_n(\theta(\mathcal P),||\cdot||^2_2) \asymp\frac{1}{n}.
\end{equation}
Putting the two bounds together, we obtain the desired result.
\end{proof}

We provide a partially matching upper bound for the lower bound in Proposition \ref{prop:highdmeanestimation} in the following proposition. The statement is based on the construction of a variant of the multi-dimensional DJW estimator \cite{duchi2013local} on the Euclidean unit-ball. It is proved in Appendix \ref{app:highd-estimator-proof}.

\begin{proposition}
\label{prop:highd-estimator}
    Let $d\geq 2$ and $p\in[1,2]$. For any $\varepsilon> 0$, we have,
    \begin{equation}
        \mathfrak R^{\varepsilon,c}_n(\theta(\mathcal P),||\cdot||_2^2) \lesssim \frac{d}{n\,\eta^2_\text{TV}(\varepsilon)},
    \end{equation}
    where $\eta_\text{TV}(\varepsilon)$ is defined with $d_c = 1-c\mathbb V(\mathcal B_d^p)$.
\end{proposition}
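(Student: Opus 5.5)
The plan is to build a multi-dimensional version of the DJW-type construction from the proof of Proposition \ref{prop:meanestimation}. Each sample $X_i\in\mathcal B_d^p$ is first mapped to a bounded, unbiased, discrete proxy $Z_i$. The proxy is then passed through a binary (or finite-output) kernel from Theorem \ref{thm:TVcontraction}, and the result is debiased by the factor $1/\eta_\text{TV}(\varepsilon)$. Since $\mathcal B_d^p\subseteq\mathcal B_d^2$ for $p\in[1,2]$, it suffices to handle $\|X_i\|_2\leq 1$.

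\textbf{Step 1 (unbiased coordinate proxy).} Draw a coordinate $J_i$ uniformly from $[d]$, independent of $X_i$, and publish $J_i$ together with a sign $Z_i\in\{-1,1\}$ satisfying $\mathbb P[Z_i=1\mid X_i,J_i]=\tfrac12+\tfrac{X_{i,J_i}}{2}$. Then $\mathbb E[d\,Z_i e_{J_i}\mid X_i]=X_i$, and $\|dZ_ie_{J_i}\|_2^2=d^2$. This already gives the benchmark variance; a sharper proxy (e.g.\ the sphere-cap sampling of DJW) would only improve constants, and for the claimed rate $d/(n\eta^2)$ the cheapest route is to accept the factor coming from a coordinate proxy. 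A quick check: the per-coordinate variance is $\mathbb E[d^2 Z^2\mathbf 1\{J=j\}]=d$, so the total is $d^2$ per sample. That is one factor of $d$ too many.

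To reduce this, use instead the full sign vector $Z_i\in\{\pm1\}^d$ with independent coordinates, $\mathbb P[Z_{i,j}=1\mid X_i]=\tfrac12+\tfrac{X_{i,j}}{2}$. This gives $\mathbb E[Z_i\mid X_i]=X_i$ and $\|Z_i\|_2^2=d$. We then privatize a single randomly chosen coordinate $Z_{i,J_i}$ with the binary kernel and rescale by $d$. The resulting variance is $\mathbb E\|d\,Y e_J\|^2/\eta^2=d/\eta^2\cdot(\text{per-coordinate})$, which should be checked carefully.

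\textbf{Step 2 (privacy of the composed mechanism).} This is the main obstacle: the binary kernel $\mathsf K^\star_{\varepsilon',c'}$ must satisfy PML on the induced distributions of the proxy bit. As in the 1-D case, the key is a density floor on the bit. For $P_X\in\mathcal P_c$, the mean satisfies $|\mathbb E X_{j}|\leq d_c$ because $P_X=c\mathbb V\,U+d_c\bar P$ and the uniform part has zero mean. Hence $\mathbb P[Z_{i,J}=\pm1\mid J]\geq (1-d_c)/2=c\mathbb V(\mathcal B_d^p)/2$, which is exactly the floor $c'=c\mu_X(\mathcal X)/2$ used in the 1-D proof. The post-processing chain $X-(J,Z_J)-Y$ with $J$ public then inherits $\varepsilon$-PML on $\mathcal P_c$. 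Two facts need verification: that the ratio $\mathsf K(B|x)/(P_X\mathsf K)(B)$ is preserved by mixing over $J$ (it holds per $J$ because $J$ is independent of $X$), and that PML for the bit at floor $c'$ implies PML for $X$ (the same argument as in Proposition \ref{prop:meanestimation}). The contraction of the chosen binary kernel equals $\eta_\text{TV}(\varepsilon)$ with $d_c=1-c\mathbb V(\mathcal B_d^p)$, since $1-2c'=d_c$.

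\textbf{Step 3 (risk).} The estimator is $\hat\theta=\frac{1}{n}\sum_i \frac{d}{\eta_\text{TV}(\varepsilon)}\tilde Y_i e_{J_i}$, where $\tilde Y_i\in\{\pm1\}$ is the centered output. It is unbiased by Steps 1 and 2. Its risk is at most $\frac{1}{n}\mathbb E\|\cdot\|_2^2=\frac{d^2}{n\eta^2}\cdot\frac{1}{d}\cdot d$. The crude bound therefore gives $d^2$, and the plan is to recover $d/(n\eta^2)$ by bounding the variance as $\sum_j\mathrm{Var}$ using independence of $J$. If that fails, the fallback is a sphere-sampling privatizer: $Z_i$ uniform on a scaled $\{\pm1\}^d$-type set with a binary accept/flip step, which yields the DJW variance $O(d/\eta^2)$ per sample because the rescaling constant is $\asymp\sqrt d/\eta$. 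Privacy is again reduced to a binary PML statement with floor $c'$. In the low-privacy regime ($\eta=1$), a deterministic kernel is used and the bound follows identically.
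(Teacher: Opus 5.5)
\textbf{There is a gap: your primary construction gives the wrong rate.} It achieves only $d^2/(n\eta^2)$, and the repair you propose in Step 3 cannot change that. The released vector $W_i=\frac{d}{\eta}\tilde Y_i e_{J_i}$ has $\|W_i\|_2^2=d^2/\eta^2$ deterministically, while $\mathbb E W_i=\theta$ with $\|\theta\|_2\leq 1$. Hence $\mathrm{tr}\,\mathrm{Cov}(W_i)=d^2/\eta^2-\|\theta\|_2^2$, and the risk of the empirical mean is exactly $\frac{1}{n}(d^2/\eta^2-\|\theta\|_2^2)$. Writing this as $\sum_j\mathrm{Var}$ changes nothing, since each coordinate contributes $d/\eta^2-\theta_j^2$.

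The ``full sign vector'' variant is the same mechanism in distribution, because only $Z_{i,J_i}$ is ever released. Coordinate sampling with $\pm1$ proxies is the right tool for $\ell_\infty$-bounded data, where $d^2$ is the correct rate. For data in $\mathcal B_d^p\subseteq\mathcal B_d^2$, however, a typical coordinate has size $O(d^{-1/2})$, and replacing it by a unit-magnitude bit wastes exactly a factor $d$. Your privacy argument for this mechanism (a floor of $c\mathbb V(\mathcal B_d^p)/2$ on the bit, with $J$ independent of $X$) is correct; it simply certifies an estimator with the wrong rate.

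\textbf{The missing ingredient is a proxy adapted to the $\ell_2$ geometry, as in the paper.}
\begin{enumerate}
\item Round $x$ to $Z=SU$, where $U=x/\|x\|_2$ and $\mathbb P[S=\pm1\mid x]=\frac12\pm\frac{\|x\|_2}{2}$.
\item Privatize only $S$, using the symmetric binary kernel with $p_{\min}=c\mathbb V(\mathcal B_d^p)/2$.
\item Output $Y$ uniform on the hemisphere $\{\|y\|_2=B,\ \langle y,\bar S U\rangle>0\}$ with $B\asymp\sqrt d/\eta$, so that $\|Y\|_2^2=B^2\asymp d/\eta^2$.
\end{enumerate}

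Your fallback gestures at this but is not a proof, for two reasons. First, a ``scaled $\{\pm1\}^d$-type set'' describes DJW's hypercube mechanism. That mechanism requires first rounding $x$ to a cube vertex and again costs $\|Z\|_2^2\asymp d\cdot d/\eta^2$. Second, ``privacy is again reduced to a binary PML statement'' is not a post-processing argument. The output $Y$ depends on the unprivatized direction $U=X/\|X\|_2$, so $X-\bar S-Y$ is not a Markov chain.

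The paper instead verifies PML directly. Let $A$ be the surface area of $\{\|y\|_2=B\}$. Then $p(y\mid x)=\frac{2}{A}(\frac12+\frac{\eta}{2}\|x\|_2\,\mathrm{sgn}\langle y,x\rangle)\leq\frac{2}{A}\cdot\frac{1+\eta}{2}$. Writing $P_X=c\mu_X+d_c\bar P_X$ and using the symmetry of the ball gives $p_Y(y)\geq\frac{2}{A}\cdot\frac{1-\eta d_c}{2}$. Therefore $e^{\ell(X\to y)}\leq\frac{1+\eta}{1-\eta d_c}$, which is at most $e^\varepsilon$ exactly when $\eta\leq\eta_{\rm TV}(\varepsilon)$.

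Your binary reduction can be made rigorous for each fixed $y$, but you have to supply the argument. Set $T_y\coloneqq S\,\mathrm{sgn}\langle y,U\rangle$, and let $\bar T_y$ be the output of the binary kernel on input $T_y$. Because the kernel commutes with sign flips, $p(y\mid x)=\frac{2}{A}\mathbb P[\bar T_y=1\mid X=x]$. Moreover, under any $P_X\in\mathcal P_c$ the law of $T_y$ has floor $c\mathbb V(\mathcal B_d^p)/2$, and $X-T_y-\bar T_y$ is a Markov chain, so the binary PML guarantee applies.
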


Interestingly, the bound in Proposition \ref{prop:highdmeanestimation} shows that under $c$-interior $\varepsilon$-PML, even high-dimensional mean estimation may be non-restrictive when the uniform floor $c$ is relatively large and the privacy requirement $\varepsilon$ is not to strict. Specifically, the presented bound, if tight, would imply that whenever $\Xi(\varepsilon,c)\geq d$, then the asymptotic behavior of high dimensional mean estimation with $\varepsilon$-PML on $\mathcal P_c$ is equivalent to the non-private setting. To confirm this insight, we now present a \emph{private} estimator for large values of $\varepsilon$ that is asymptotically optimal up to universal constants. The \say{privacy-part} of this estimator is achieved via a discretization of the unit sphere, together with the fact that for discrete alphabets, the identity mapping satisfies a finite $\varepsilon$-PML guarantee on $\mathcal P_c$ for any $c>0$. The proof of the privacy guarantee of this construction relies on the existence of a vertex-transitive covering of the unit-sphere. This is a strongly simplifying assumption. However, it is important to point out that the presented construction is in no way meant to constitute a practical scheme. The main point of stating it is to show that the lower bound in Proposition \ref{prop:highdmeanestimation} is not vacuous.

\begin{remark}
\label{rem:vertextransEst}
    Assume that $\rho<\nicefrac{\pi}{2}$ and $d\geq 2$ are such that there exists a vertex-transitive $\rho$-covering of the unit-sphere $\mathbb S^{d-1}$ with cardinality $M$.\footnote{A discrete set $\mathcal Z= \{z_1,\dots,z_M\}$ is called vertex-transitive if there exists a finite orthogonal group $G$ that acts transitively on $\mathcal Z$, i.e., $\mathcal Z= Gz=\{gz\mid g\in G\}$ for any $z\in\mathcal Z$. This in particular implies that $\exists g:z_i=gz_k\,\forall i,k$ and that any $g\in G$ is a bijection.} Then there exists a mean estimator with rate equal to the non-private rate up to universal constants that satisfies $\varepsilon^*$-PML on $\mathcal P_c$ with,
    \begin{equation}
        \varepsilon^* = \log(M)-\log(c\mathbb V(\mathcal B_d^2)).
    \end{equation}
    The proof of this existence result relies on the symmetry of the covering and an application of Strassen's theorem of martingale couplings \cite{strassen1965existence}. We present it in detail in Appendix \ref{app:proofVertextTransEst}. Note that the requirement of a symmetric (vertex-transitive) codebook is akin to the construction of optimal estimation strategies for $\varepsilon$-LDP in \cite{isik2023exact}. Even further, we are tempted to conjecture that there exist more explicit estimator constructions that match the lower bound in Proposition~\ref{prop:highdmeanestimation} via, e.g., Kashin's representations \cite{chen2020breaking,lyubarskii2010uncertainty} or similar representations of points on a sphere. However, we remark that there are two significant challenges in designing such estimators: First, proving PML privacy for any such estimator requires proving a lower bound on the marginal probabilities $\mathbb P[Z=z]$ of the discrete symbols, which requires significant structural assumptions on the representations. Second, the privatization of the discrete symbols with mechanisms without full output support for every input significantly complicates the debiasing procedure. Due to these difficulties, we leave the exploration of concrete constructions for future work.
\end{remark}

%% file: sections/otherapplications.tex
\subsubsection{Conversion to Approximate LDP Guarantees}
 It is shown in \cite{9517999} that a mechanism $\mathsf K$ satisfies $(\alpha,\delta)$-LDP if and only if $\eta_{e^\alpha}(\mathsf K)\leq \delta$. This directly leads to the following result relating $\varepsilon$-PML on $\mathcal P_c$ to $(\alpha,\delta)$-LDP.
\begin{corollary}
\label{cor:PMLtoALDP}
    If $\mathsf K$ satisfies $\varepsilon$-PML on $\mathcal P_c$, then it also satisfies $(\alpha, \delta(\alpha))$-LDP for any $\alpha \geq 0$, where $\gamma^*=e^\varepsilon d_c/(1-e^\varepsilon c\mu_X(\mathcal X))$, and,
    \begin{equation}
        \delta(\alpha) = \eta_\text{TV}(\varepsilon) \cdot \begin{cases}
            \left(\frac{\gamma^*-e^\alpha}{\gamma^*-1}\right)_+ &\text{if }\varepsilon <-\log(c\mu_X(\mathcal X)) \\
            1, &\text{otherwise}.
        \end{cases}
    \end{equation}
\end{corollary}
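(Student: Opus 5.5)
The plan is to chain two earlier results: the characterization of approximate LDP through Hockeystick contraction from \cite{9517999}, and the bound on $\eta_\gamma(\mathsf K)$ in Theorem \ref{thm:E_gamma_contraction}. The characterization says that $\mathsf K$ satisfies $(\alpha,\delta)$-LDP if and only if $\eta_{e^\alpha}(\mathsf K)\leq\delta$. It therefore suffices to show, for every $\alpha\geq 0$, that $\eta_{e^\alpha}(\mathsf K)\leq\delta(\alpha)$ with $\delta(\alpha)$ as in the statement.

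\textbf{Step 1: the bound on $\eta_{e^\alpha}(\mathsf K)$.} Fix $\alpha\geq 0$ and set $\gamma=e^\alpha\geq 1$. Since Theorem \ref{thm:E_gamma_contraction} holds for every $\gamma\geq1$, we get $\eta_{e^\alpha}(\mathsf K)\leq\Psi_{e^\alpha}(e^\varepsilon,c)$. In the high-privacy regime $\varepsilon<-\log(c\mu_X(\mathcal X))$, this reads $\Psi_{e^\alpha}=\big(\frac{\gamma^*-e^\alpha}{\gamma^*-1}\big)_+\eta_\text{TV}(\varepsilon)$. Otherwise $\Psi_{e^\alpha}=\eta_\text{TV}(\varepsilon)$. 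Both cases agree exactly with $\delta(\alpha)$.

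\textbf{Step 2: the definition of $\gamma^*$.} We must check that $\gamma^*$ is well defined and that $\gamma^*>1$ in the first case, so that the fraction makes sense. In the high-privacy regime, $1-e^\varepsilon c\mu_X(\mathcal X)>0$. Moreover,
\[
\gamma^*-1=\frac{e^\varepsilon d_c-1+e^\varepsilon c\mu_X(\mathcal X)}{1-e^\varepsilon c\mu_X(\mathcal X)}=\frac{e^\varepsilon-1}{1-e^\varepsilon c\mu_X(\mathcal X)}>0
\]
for $\varepsilon>0$. The boundary case $\varepsilon=0$ forces $\mathsf K(B|\cdot)$ to be constant, so $\eta_\text{TV}(0)=0$ and $\delta(\alpha)=0$ holds trivially.

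\textbf{Step 3: conclusion.} Applying the characterization of \cite{9517999} with $\delta=\delta(\alpha)$ yields $(\alpha,\delta(\alpha))$-LDP for every $\alpha\geq 0$.

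There is no real obstacle, since all the work sits in Theorem \ref{thm:E_gamma_contraction}. The one point to confirm is that the equivalence in \cite{9517999} is stated for general kernels on standard Borel spaces, matching the paper's setting and the two-point characterization of $\eta_\gamma$ from \cite{asoodeh2020contraction}. If it were only available for discrete spaces, I would instead argue directly. The two-point formula gives
\[
\sup_{x,x',B}\big(\mathsf K(B|x)-e^\alpha\mathsf K(B|x')\big)=\eta_{e^\alpha}(\mathsf K)\leq\delta(\alpha),
\]
which is literally the $(\alpha,\delta)$-LDP inequality.
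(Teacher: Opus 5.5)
Your proposal is correct and matches the paper's argument: plug $\gamma=e^\alpha\geq 1$ into Theorem~\ref{thm:E_gamma_contraction} and invoke the characterization of \cite{9517999} that $(\alpha,\delta)$-LDP holds if and only if $\eta_{e^\alpha}(\mathsf K)\leq\delta$. Your additional checks are correct details the paper leaves implicit: that $\gamma^*-1=(e^\varepsilon-1)/(1-e^\varepsilon c\mu_X(\mathcal X))>0$ in the high-privacy regime, and that $\varepsilon=0$ forces a constant kernel, where the fraction degenerates to $0/0$ but $\eta_\text{TV}(0)=0$.
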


A few interesting insights can follow: Firstly, since $\varepsilon$-PML on $\mathcal P_0=\mathcal P(\mathcal X)$ is equivalent to $\varepsilon$-LDP, Corollary~\ref{cor:PMLtoALDP} can be used obtain a conversion from pure LDP to approximate LDP. To this end, notice that if $c=0$, then $\gamma^* = e^\varepsilon$. Therefore, if $\mathsf K$ satisfies $\varepsilon$-LDP, then it also satisfies $(\alpha,\delta(\alpha ))$-LDP for any $\alpha \geq 0$ and,
    \begin{equation}
        \delta(\alpha) = \left(\frac{e^\varepsilon-e^\alpha}{e^\varepsilon+1}\right)_+.
    \end{equation}
Second, and in line with many of the illustrative remarks above, we highlight the behavior of $(\alpha,\delta)$-LDP for discrete kernels with zero entries. The bound in Corollary \ref{cor:PMLtoALDP} states that whenever a kernel does not satisfy any pure LDP guarantee (i.e., $\varepsilon \geq -\log(c\mu_X(\mathcal X))$, then $\varepsilon$-PML on $\mathcal P_c$ merely implies $(\alpha,\eta_\text{TV}(\varepsilon))$-LDP for any $\alpha \geq 0$, in particular, the kernel satisfies $(0,\eta_\text{TV}(\varepsilon))$-LDP. This bound turns out to be tight for a large class of kernels: Consider any discrete kernel for which $\forall y \in\mathcal Y, \exists x\in\mathcal X: \mathsf K(y|x)=0$ (each column of the representing matrix has at least one zero entry). It becomes clear by inspection that each such kernel can only satisfy $(\alpha,\delta)$-LDP whenever,
\begin{equation}
    \delta \geq \max_{y\in\mathcal Y} \max_{x\in\mathcal X} \mathsf K(y|x),
\end{equation}
and this bound is often independent of $\alpha$. This highlights a core \emph{inflexibility} of $(\alpha,\delta)$-LDP as privacy measure beyond pure LDP: In regimes outside of pure LDP, the slack parameter $\delta$ often fails to characterize properties of kernels that quantify meaningful privacy properties. As an example of one such shortcoming, consider the kernel $\mathsf K_4$ as given in Example \ref{ex:RR}. This kernel satisfies $(\alpha,\delta)$-LDP for $\delta = \nicefrac{2}{3}$ for any $\alpha \geq 0$. The tightest characterization of $\mathsf K_2$ with approximate LDP is therefore $(0,\nicefrac{2}{3})$-LDP (and any other valid guarantee is slack). This aligns with the observations in \cite{bun2019heavy}, where it is shown that $(\alpha,\delta)$-LDP does not offer any structural benefit over pure $\alpha$-LDP, and that any approximate LDP protocol can be converted into a pure LDP protocol with the same accuracy. On the other hand, the adversarial assumptions in $\varepsilon$-PML on $\mathcal P_c$ allow us to precisely quantify the type of privacy protection such a kernel offers, where $\varepsilon$ quantifies a notion of disclosed information \cite[Theorem 3.5]{saeidian2023inferential} and $c$ encodes assumptions about the distribution of the secret data. We therefore argue that for quantifying privacy of systems without any finite LDP guarantee, $c$-interior PML offers a significantly more flexible meaningful toolset than approximate LDP. To further illustrate this point, we return to a setup similar to that given in the introduction: Consider,
\begin{equation}
   \mathsf K_5= \begin{bmatrix}
        \frac{1}{4} & \frac{1}{4} & \frac{1}{4} & \frac{1}{4} \\
        \frac{1}{4}& \frac{1}{4} &\frac{1}{4} & \frac{1}{4} \\
        \frac{1}{3} & 0 & \frac{1}{3} & \frac{1}{3} \\
        0 & 1 & 0 & 0
    \end{bmatrix},
    \quad \mathsf K_6 = \begin{bmatrix}
        1 & 0 & 0 & 0 \\
        0 & 1 & 0 & 0 \\
        0 & 0 & 1 & 0 \\
        0 & 0 & 0 & 1
    \end{bmatrix}.
\end{equation}
Since $\mathsf K_5$ has a column $[\nicefrac{1}{4},\nicefrac{1}{4},0,1]^\top$, it satisfies at best $(0,1)$-LDP. In terms of approximate LDP it is therefore exactly equivalent to the identity mapping $\mathsf K_6$. The quantification with $c$-interior PML, however, yields for any $0<c<0.25$,
\begin{equation}
    \sup_y\sup_{P_X\in\mathcal P_c}\ell_{P_X\mathsf K_1}(X\to y) = -\log\frac{3c}2 < -\log c = \sup_y\sup_{P_X\in\mathcal P_c} \ell_{P_X\mathsf K_2}(X\to y).
\end{equation}

\subsubsection{Contraction coefficient of Rényi-divergence of order $\infty$ and LDP Amplification}
It is shown in \cite[Theorem 4]{vandenbroucque2026contraction} that the distribution-dependent contraction coefficient of the Rényi-divergence of order $\infty$ is bounded as,
\begin{equation}
    \min_{x\in\mathcal X} P_X(x) \eta_\text{TV}(P_X,\mathsf K) \leq \eta_{D_\infty}(P_X,\mathsf K) \leq \frac{\eta_\text{TV}(P_X,\mathsf K)}{\min_{y\in\supp(P_X\mathsf K)}(P_X\mathsf K)(y)}.
\end{equation}
The bound in Theorem \ref{thm:TVcontraction} enables us to bound the set-restricted contraction coefficient of the $\infty$-Rényi-divergence as follows.
\begin{corollary}
\label{cor:Renyicontraction}
    Let $\mathcal X$ and $\mathcal Y$ be discrete sets and assume that $\varepsilon\geq 0$ and $0 < c <|\mathcal X|^{-1}$. If a kernel $\mathsf K$ satisfies $\varepsilon$-PML on $\mathcal P_c$ and    $D_\mathsf K\coloneqq  \min_{y\in \mathcal Y} \sum_{x\in \mathcal X}\mathsf K(y|x) >0$, then,
    \begin{equation}
         \eta^{\mathcal P_c}_{D_\infty}(\mathsf K) \leq \frac{\eta_\text{TV}(\varepsilon)}{c\cdot D_\mathsf K} = \min\left\{1\,,\frac{e^\varepsilon-1}{(e^\varepsilon d_c+1)c D_\mathsf K}\right\}. 
    \end{equation}
\end{corollary}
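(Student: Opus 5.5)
Write $\eta^{\mathcal P_c}_{D_\infty}(\mathsf K)\le\sup_{P_X\in\mathcal P_c}\eta_{D_\infty}(P_X,\mathsf K)$, since restricting both arguments of the divergence to $\mathcal P_c$ means the reference measure $P_X$ also lies in $\mathcal P_c$. For each fixed $P_X\in\mathcal P_c$ I would then apply the upper bound of \cite[Theorem 4]{vandenbroucque2026contraction}, $\eta_{D_\infty}(P_X,\mathsf K)\le \eta_\text{TV}(P_X,\mathsf K)/\min_{y\in\supp(P_X\mathsf K)}(P_X\mathsf K)(y)$. It then remains to bound the numerator uniformly by $\eta_\text{TV}(\varepsilon)$ and the denominator uniformly below by $cD_\mathsf K$.

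For the numerator, the distribution-dependent Dobrushin-type coefficient is at most the input-independent one, $\eta_\text{TV}(P_X,\mathsf K)\le\eta_\text{TV}(\mathsf K)$. Theorem~\ref{thm:TVcontraction} (equivalently Lemma~\ref{lem:QcdontchangeTVcontr}) gives $\eta_\text{TV}(\mathsf K)\le\eta_\text{TV}(\varepsilon)$ for every $\mathsf K\in\mathcal M(\varepsilon,c)$. Here I must check that the precise definition of $\eta_\text{TV}(P_X,\mathsf K)$ in \cite{vandenbroucque2026contraction} is a supremum over $Q_X$ of $\text{TV}(Q_X\mathsf K\|P_X\mathsf K)/\text{TV}(Q_X\|P_X)$, which is dominated by the two-point Dobrushin coefficient.

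For the denominator, since $P_X(x)\ge c$ for every $x$ and $\mu_X$ is counting measure, $(P_X\mathsf K)(y)=\sum_x P_X(x)\mathsf K(y|x)\ge c\sum_x\mathsf K(y|x)\ge cD_\mathsf K>0$. In particular $\supp(P_X\mathsf K)=\mathcal Y$, so the minimum over the support is a minimum over all $y$ and is at least $cD_\mathsf K$. Combining the two bounds yields $\eta^{\mathcal P_c}_{D_\infty}(\mathsf K)\le\eta_\text{TV}(\varepsilon)/(cD_\mathsf K)$.

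The stated equality with $\min\{1,\cdot\}$ then comes from inserting the definition $\eta_\text{TV}(\varepsilon)=\min\{1,(e^\varepsilon-1)/(e^\varepsilon d_c+1)\}$. For the first branch of the minimum, $1/(cD_\mathsf K)$ should be replaced by $1$ using the trivial data-processing bound $\eta_{D_\infty}\le 1$. I expect this step to be the main obstacle, since the literal identity $\eta_\text{TV}(\varepsilon)/(cD_\mathsf K)=\min\{1,\cdot\}$ does not hold in general. The right statement is likely that the bound is the minimum of $1$ (from the DPI for Rényi divergence) and $(e^\varepsilon-1)/((e^\varepsilon d_c+1)cD_\mathsf K)$. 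I would present it that way, justifying the $1$ by the data-processing inequality for $D_\infty$, and otherwise regard the argument as a direct assembly of the earlier results.
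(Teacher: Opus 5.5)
Your proposal is correct and is the argument the paper intends (the same plug-in it uses in Proposition~\ref{prop:ldp_amplification}): apply the upper bound of \cite[Theorem 4]{vandenbroucque2026contraction} with $\eta_\text{TV}(P_X,\mathsf K)\le\eta_\text{TV}(\mathsf K)\le\eta_\text{TV}(\varepsilon)$ from Theorem~\ref{thm:TVcontraction}, and use $(P_X\mathsf K)(y)\ge c\sum_x\mathsf K(y|x)\ge cD_\mathsf K$ for $P_X\in\mathcal P_c$. You are also right that the middle equality in the statement is not literal: since $cD_\mathsf K\le c|\mathcal X|/|\mathcal Y|<1$, the first branch gives $1/(cD_\mathsf K)>1$, and capping with the data-processing bound $\eta_{D_\infty}\le 1$ recovers exactly the stated right-hand side $\min\{1,\cdot\}$.
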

Under some regularity conditions, this result enables us to provide convergence rates for LDP privacy amplification with channels that satisfy no LDP guarantee, but some $\varepsilon$-PML guarantee. To see this, note that a kernel satisfies $\alpha$-LDP if and only if $D_\infty(\mathsf K(\cdot | x)||\mathsf K(\cdot | x'))\leq \alpha$ for all $x\neq x'$. This leads to the following proposition.

\begin{proposition}[LDP amplification via $\varepsilon$-PML post-processing]
\label{prop:ldp_amplification}
    Let $\mathsf K_0: \mathcal{X} \to \mathcal{Z}$ be a discrete kernel satisfying $\alpha$-LDP. For some $n\in\mathbb N$, let $\{\mathsf K_i\}_{i=1}^n$ be a sequence of discrete kernels mapping $\mathcal Z \to \mathcal Y$ with $|\mathcal Z|=|\mathcal Y|$, each satisfying $\varepsilon$-PML on $\mathcal{P}_c$ for some $0<c<|\mathcal Z|^{-1}$. Define $\mathsf K^{(n)} \coloneqq \mathsf K_0 \mathsf K_1\dots  \mathsf K_n$, and let $Q^{(n)} = U_X \mathsf K^{(n)}$, where $U_X$ denotes the uniform distribution on $\mathcal X$. Let $\mathcal Y^* \coloneqq \supp(Q^{(n)})$. Then $\mathsf K^{(n)}$ satisfies $\alpha_n$-LDP, where
    \begin{equation}
        \alpha_n \leq \alpha \left(\frac{e^\alpha}{\min_{y\in\mathcal Y^*} Q^{(n)}(y)}\right) \, \eta_{\rm TV}(\varepsilon)^n.
    \end{equation}
\end{proposition}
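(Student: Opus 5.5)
The plan is to control the LDP parameter of $\mathsf K^{(n)}$ through the order-$\infty$ Rényi divergence between output distributions of neighbouring inputs. By the characterization recalled just before the proposition, it suffices to bound
\begin{equation}
    \alpha_n = \max_{x\neq x'} D_\infty\big(\mathsf K^{(n)}(\cdot|x)\,||\,\mathsf K^{(n)}(\cdot|x')\big).
\end{equation}
Write $P_x^{(j)} \coloneqq \delta_x \mathsf K_0\mathsf K_1\cdots\mathsf K_j$, a distribution on $\mathcal Z$ (identified with $\mathcal Y$). Then $D_\infty(P_x^{(0)}||P_{x'}^{(0)})\leq \alpha$ by the $\alpha$-LDP assumption on $\mathsf K_0$. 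I will compare each $P^{(n)}_x$ with the common reference $Q^{(n)} = U_X\mathsf K^{(n)}$ rather than directly with $P^{(n)}_{x'}$. The reason is that the floor $\min_y Q^{(n)}(y)$ is exactly what appears in the bound, and comparing against a mixture lets us bring in TV contraction.

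First, pass from $D_\infty$ to TV relative to $Q^{(n)}$. Since $P^{(n)}_x \ll Q^{(n)}$ (up to a factor $|\mathcal X|$), and $\log t\leq t-1$, we get
\begin{equation}
    D_\infty(P^{(n)}_x||Q^{(n)}) \leq \max_y \frac{P^{(n)}_x(y)-Q^{(n)}(y)}{Q^{(n)}(y)} \leq \frac{2\,\text{TV}(P^{(n)}_x||Q^{(n)})}{\min_{y\in\mathcal Y^*}Q^{(n)}(y)}.
\end{equation}
For the reverse direction $D_\infty(Q^{(n)}||P^{(n)}_{x'})$, use the LDP of the whole chain: post-processing preserves $\alpha$-LDP, so $P^{(n)}_{x'}(y)\geq e^{-\alpha}Q^{(n)}(y)$. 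This is where the factor $e^\alpha$ enters, namely by converting a relative deviation with respect to $Q^{(n)}$ into one with respect to $P^{(n)}_{x'}$. Combining the two through $\frac{P_x}{P_{x'}} = \frac{P_x}{Q}\frac{Q}{P_{x'}}$, or more carefully by bounding $\frac{P_x(y)-P_{x'}(y)}{P_{x'}(y)}\leq e^\alpha\frac{|P_x(y)-P_{x'}(y)|}{Q(y)}$, gives
\begin{equation}
    \alpha_n \lesssim \frac{e^\alpha}{\min_{y\in\mathcal Y^*}Q^{(n)}(y)}\,\max_{x,x'}\text{TV}(P^{(n)}_x||P^{(n)}_{x'}).
\end{equation}

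Second, contract the TV term. Theorem~\ref{thm:TVcontraction} applies to each $\mathsf K_i$ as an algebraic statement valid for arbitrary inputs, so $n$ applications yield $\text{TV}(P^{(n)}_x||P^{(n)}_{x'})\leq \eta_\text{TV}(\varepsilon)^n\,\text{TV}(P^{(0)}_x||P^{(0)}_{x'})$. By Theorem~\ref{thm:TVcontraction} with $c=0$, or by the elementary bound $\text{TV}\leq 1-e^{-\alpha}\leq\alpha$ for $\alpha$-LDP, we have $\text{TV}(P^{(0)}_x||P^{(0)}_{x'})\leq \alpha$. Substituting gives the claimed $\alpha\,e^\alpha\,\eta_\text{TV}(\varepsilon)^n/\min_{y\in\mathcal Y^*}Q^{(n)}(y)$.

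The main obstacle is the first step, specifically getting the constant exactly as stated rather than with extra factors of $2$. The delicate point is the choice between one-sided sums over $\{P_x>P_{x'}\}$, which give TV without a factor $2$ when the sup is attained at a single $y$ and a pointwise bound $|P_x(y)-P_{x'}(y)|\leq\text{TV}$ is used, and the two-sided $\ell_1$ bound. I would use the pointwise bound $|P_x(y)-P_{x'}(y)|\leq \text{TV}(P_x||P_{x'})$, valid for every single $y$. I also need to handle supports: outside $\mathcal Y^*$ all $P_x$ vanish, so the supremum is effectively over $\mathcal Y^*$.
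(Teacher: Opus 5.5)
Your argument is correct, but it takes a different route from the paper. The paper derives the floor $\mathsf K^{(n)}(y|x)\geq e^{-\alpha}Q^{(n)}(y)$ just as you do. It then invokes an external result, the distribution-dependent contraction bound $\eta_{D_\infty}(P,\mathsf K)\leq \eta_{\rm TV}(\mathsf K)/\min_y (P\mathsf K)(y)$ of \cite{vandenbroucque2026contraction}, applied with $P=\mathsf K_0(\cdot|x)$ and $\mathsf K=\mathsf K_1\cdots\mathsf K_n$. In the paper the factor $\alpha$ therefore enters through $D_\infty(\mathsf K_0(\cdot|x)\|\mathsf K_0(\cdot|x'))\leq\alpha$.

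You prove the needed inequality directly, using:
\begin{itemize}
\item $\log t\leq t-1$;
\item the LDP floor in the denominator;
\item the single-atom bound $P_x(y)-P_{x'}(y)\leq \text{TV}(P_x\|P_{x'})$;
\item Dobrushin submultiplicativity together with Theorem~\ref{thm:TVcontraction};
\item a bound on the input total variation.
\end{itemize}
In your version $\alpha$ enters through $\text{TV}(\mathsf K_0(\cdot|x)\|\mathsf K_0(\cdot|x'))$ rather than through $D_\infty$.

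Your route is self-contained and avoids the argument-ordering and support conventions of the cited distribution-dependent coefficient. It is also slightly sharper. The total variation of an $\alpha$-LDP pair is at most $\tanh(\alpha/2)\leq\alpha/2$, so you get the stated bound with $\alpha$ replaced by $\tanh(\alpha/2)$. The paper's route is shorter given the citation, and it frames the amplification as a genuine strong data processing inequality for $D_\infty$.

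One caution: of the two ways you mention to combine comparisons with $Q^{(n)}$, only the direct one works. If you split $P_x/P_{x'}=(P_x/Q^{(n)})(Q^{(n)}/P_{x'})$ and use $D_\infty(Q^{(n)}\|P^{(n)}_{x'})\leq\alpha$, an additive $\alpha$ remains that never decays in $n$. The LDP floor must be used only to replace $P_{x'}(y)$ by $e^{-\alpha}Q^{(n)}(y)$ in the denominator, as in your final paragraph. With that route, the factor $2$ in your first display is also unnecessary.
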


\begin{proof}
    By the definition of $\alpha$-LDP, $\mathsf K_0$ satisfies $D_\infty(\mathsf K_0(\cdot|x) || \mathsf K_0(\cdot|x')) \leq \alpha$ for all $x,x'\in\mathcal X$. We bound the divergence after post-processing by $\mathsf K_{1:n} \coloneqq \mathsf K_1 \dots \mathsf K_n$. First, let $Q_0(z) \coloneqq \frac{1}{|\mathcal X|}\sum_{x'\in\mathcal X}\mathsf K_0(z|x')$ denote the marginal distribution under uniform input. Since $\mathsf K_0$ satisfies $\alpha$-LDP, we have that for all $x, x'\in\mathcal X$ and $z\in\mathcal Z$, $\mathsf K_0(z|x) \geq e^{-\alpha} \mathsf K_0(z|x')$. Fix $x\neq x'$ arbitrary. Averaging over all $x'\in\mathcal X$ yields,
    \begin{equation}
        \mathsf K_0(z|x) \geq e^{-\alpha} Q_0(z). \label{eq:lower_bound_K0}
    \end{equation}
    With this, the conditional probability of the final output $y$ given input $x$ is lower-bounded by
    \begin{align}
        \mathsf K^{(n)}(y|x) &= \sum_{z\in\mathcal Z} \mathsf K_0(z|x) \mathsf K_{1:n}(y|z) \\
        &\geq e^{-\alpha} \sum_{z\in\mathcal Z} Q_0(z)\mathsf K_{1:n}(y|z) \\
        &= e^{-\alpha} Q^{(n)}(y).
    \end{align}
    Because we chose $x\in\mathcal X$ arbitrary, this shows that $\min_{x\in\mathcal X} \mathsf K^{(n)}(y|x) \geq e^{-\alpha} Q^{(n)}(y)$. Next, we plug into \cite[Theorem 4]{vandenbroucque2026contraction}. Notice that $\supp(\mathsf K^{(n)}(\cdot|x)) = \supp(Q^{(n)}) \eqqcolon \mathcal Y^*$. We have,
    \begin{equation}
        \eta_{D_\infty}(\mathsf K_0(\cdot|x), \mathsf K_{1:n}) \leq \frac{\eta_{\rm TV}(\mathsf K_{1:n})}{\min_{y\in\mathcal Y^*} \mathsf K^{(n)}(y|x)} \leq \frac{\eta_{\rm TV}(\mathsf K_{1:n})}{e^{-\alpha}\min_{y\in\mathcal Y^*} Q^{(n)}(y)}.
    \end{equation}
   Further, it is known \cite{cohen1998comparisons} that $\eta_{\rm TV}(\mathsf K_1\mathsf K_2) \leq \eta_{\rm TV}(\mathsf K_1) \eta_{\rm TV}(\mathsf K_2)$ and hence (since all $\mathsf K_i$ satisfy $\varepsilon$-PML on $\mathcal P_c$), we have $\eta_{\rm TV}(\mathsf K_{1:n}) \leq \eta_{\rm TV}(\varepsilon)^n$ according to Theorem \ref{thm:TVcontraction}. This yields,
    \begin{equation}
        D_\infty\left(\mathsf K^{(n)}(\cdot|x) \big|\big| \mathsf K^{(n)}(\cdot|x')\right) \leq \eta_\infty(\mathsf K_0(\cdot|x), \mathsf K_{1:n}) D_\infty(\mathsf K_0(\cdot|x) || \mathsf K_0(\cdot|x')) \leq \frac{e^\alpha \eta_{\rm TV}(\varepsilon)^n D_\infty(\mathsf K_0(\cdot|x) || \mathsf K_0(\cdot|x'))}{\min_{y\in\mathcal Y^*} Q^{(n)}(y)}.
    \end{equation}
    Since $D_\infty(\mathsf K_0(\cdot|x) || \mathsf K_0(\cdot|x')) \leq \alpha$, we obtain,
    \begin{equation}
        \alpha_n \leq \alpha \left(\frac{e^\alpha}{\min_{y\in\mathcal Y^*} Q^{(n)}(y)}\right)  \,\eta_{\rm TV}(\varepsilon)^n,
    \end{equation}
    which is what we wanted to show.
\end{proof}

Proposition \ref{prop:ldp_amplification} shows that if the post-processing kernels are non-absorbing (such that $\min_{y\in\mathcal Y^*} Q^{(n)}(y)$ does not decay to zero as $n\to \infty$), the privacy guarantee $\alpha_n$ decays to zero exponentially fast for all $\varepsilon$ where $\eta_{\rm TV}(\varepsilon) < 1$, that is, whenever $\varepsilon < -\log((c\mu_X{\mathcal X})/2)$. Notably, such a class of non-absorbing kernels is the set of all doubly-stochastic matrices. For this class of kernels, we can further specialize the result.

\begin{corollary}
\label{cor:doubly_stochastic_amplification}
Assume that in the setting of Proposition \ref{prop:ldp_amplification}, all kernels $\{\mathsf K_i\}_{i=0}^n$ are doubly stochastic. Then $\mathsf K^{(n)} = \mathsf K_0\dots \mathsf K_n$ satisfies $\alpha_n^{\text{d-s}}$-LDP, where,
\begin{equation}
    \alpha_n^\text{d-s} \leq \alpha e^\alpha|\mathcal X|\eta_{\rm TV}(\varepsilon)^n.
\end{equation}
\end{corollary}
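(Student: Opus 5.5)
The plan is to specialize Proposition \ref{prop:ldp_amplification}. That proposition gives
\begin{equation}
\alpha_n \leq \alpha\,\frac{e^\alpha}{\min_{y\in\mathcal Y^*} Q^{(n)}(y)}\,\eta_{\rm TV}(\varepsilon)^n,
\end{equation}
with $Q^{(n)} = U_X\mathsf K^{(n)}$. It therefore suffices to show two things under double stochasticity: that $\mathcal Y^* = \mathcal Y$, and that $\min_y Q^{(n)}(y) \geq |\mathcal X|^{-1}$. Plugging these into the display gives exactly $\alpha e^\alpha|\mathcal X|\eta_{\rm TV}(\varepsilon)^n$.

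The key step is to compute $Q^{(n)}$ explicitly. Doubly stochastic kernels are square, so $|\mathcal X| = |\mathcal Z| = |\mathcal Y|$. Their defining property is that each column also sums to one, which means they map the uniform distribution to the uniform distribution. Hence $U_X\mathsf K_0 = U_{\mathcal Z}$, and inductively $U_{\mathcal Z}\mathsf K_1\cdots\mathsf K_n = U_{\mathcal Y}$. I would check this once for a single factor, $\sum_x |\mathcal X|^{-1}\mathsf K(y|x) = |\mathcal X|^{-1}$, and then induct along the product. Consequently $Q^{(n)}(y) = |\mathcal Y|^{-1} = |\mathcal X|^{-1}$ for every $y$. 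This gives full support, so $\mathcal Y^* = \mathcal Y$, and it gives $\min_{y} Q^{(n)}(y) = |\mathcal X|^{-1}$.

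The remaining hypotheses of Proposition \ref{prop:ldp_amplification} hold by assumption. These are the $\alpha$-LDP property of $\mathsf K_0$, the condition that each $\mathsf K_i$ lies in $\mathcal M(\varepsilon,c)$, and the equal-cardinality requirement. I expect no real obstacle. The only point needing care is the bookkeeping of alphabet sizes: double stochasticity forces all alphabets to have a common size, and this is what lets $|\mathcal Y|$ be replaced by $|\mathcal X|$ in the final bound. Since $U_X$ is well defined on $\mathcal X$ and the kernels are square, this replacement is immediate.
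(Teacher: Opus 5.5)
Your proposal is correct and essentially matches the paper's proof: both plug $\min_{y} Q^{(n)}(y) = |\mathcal X|^{-1}$ into Proposition \ref{prop:ldp_amplification}. The paper obtains this from the closure of doubly stochastic matrices under products, while you propagate the uniform distribution through one factor at a time; these are the same fact, and your explicit check that $\mathcal Y^* = \mathcal Y$ makes precise a step the paper leaves implicit.
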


\begin{proof}
  We find $\min_{y}Q^{(n)}(y)$ in the statement of Proposition \ref{prop:ldp_amplification}. Note that products of doubly stochastic matrices are themselves doubly stochastic \cite[8.7.P1]{horn2012matrix}, so if each $\mathsf K_i$ is doubly stochastic, so is $\mathsf K^{(n)}$. This means that,
  \begin{equation}
      \min_{y\in\mathcal Y} Q^{(n)}(y) =\min_{y\in\mathcal Y} \frac{1}{|\mathcal X|}\sum_{x\in\mathcal X}\mathsf K^{(n)}(y|x) = |\mathcal X|^{-1},
  \end{equation}
  which is what we wanted to show.
\end{proof}

It is worth pointing out that the restriction to doubly stochastic matrices as in Corollary \ref{cor:doubly_stochastic_amplification} is of practical relevance: Due to the symmetry of a local differential privacy constraint, many important discrete mechanisms for $\varepsilon$-LDP satisfy double stochasticity \cite{kairouz2016extremal}. Further, there is a large class of doubly stochastic mechanisms that do not satisfy any finite $\varepsilon$-LDP guarantee, but for which on the other hand $\eta_{\rm TV}(\varepsilon) < 1$. Specifically, any mechanism with zero entries, but for which $\supp(\mathsf K(\cdot | x))\cap \supp(\mathsf K(\cdot |x'))$ is nonempty for any pair $x,x'$ will satisfy $\eta_{\rm TV}(\varepsilon)<1$. $n$-fold post processing with such a doubly stochastic mechanism will hence lead to an exponential decrease in the local differential privacy parameter of the system. 

\subsubsection{Non-linear SDPIs via Integral Representations}
\label{subsec:otherapp:nonlinearSPDI}
Corollary \ref{corr:restrictedEgammacontraction} shows that for any $c>0$, a finite $\varepsilon$-PML guarantee on $\mathcal P_c$ implies that for large enough $\gamma$, we have $\eta_\gamma^{\mathcal P_c}(\mathsf K)=0$. In light of the integral representation of $f$-divergences in \eqref{eq:fdivRepWithE_gamma}, this implies that we can use Theorem \ref{thm:E_gamma_contraction} to obtain non-linear strong data processing inequalities restricted to input distributions from $\mathcal P_c$.

\begin{corollary}
\label{cor:SDPIEgamma_restricted}
    Let $f:(0,\infty)\to \mathbb R$ be a convex twice-differentiable function such that $f(1)=0$. If a kernel $\mathsf K$ satisfies $\varepsilon$-PML on $\mathcal P_c$, then for any $P_X,Q_X \in\mathcal P_c$,
    \begin{equation}
    \label{eq:SDPIInt}
        D_f(P_X\mathsf K||Q_X\mathsf K) \leq \int_1^{e^\varepsilon d_c +1} \Psi_\gamma(e^\varepsilon,c)\Big[E_\gamma(P_X||Q_X)f''(\gamma)+\gamma^{-3}E_\gamma(Q_X||P_X)f''(\gamma^{-1})\Big]d\gamma. 
    \end{equation}
\end{corollary}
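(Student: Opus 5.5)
The plan is to combine the integral representation \eqref{eq:fdivRepWithE_gamma} with the contraction bounds of Section~\ref{sec:E_gamma}. First apply \eqref{eq:fdivRepWithE_gamma} to the output pair $P_X\mathsf K$, $Q_X\mathsf K$. This gives
\begin{equation}
    D_f(P_X\mathsf K||Q_X\mathsf K) = \int_1^\infty \Big(E_\gamma(P_X\mathsf K||Q_X\mathsf K)f''(\gamma)+\gamma^{-3}E_\gamma(Q_X\mathsf K||P_X\mathsf K)f''(\gamma^{-1})\Big)d\gamma.
\end{equation}
Both terms in the integrand are nonnegative, because $f$ is convex (so $f''\geq 0$) and $E_\gamma\geq 0$ for $\gamma\geq1$. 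We may therefore bound the integrand pointwise in $\gamma$ without worrying about cancellations.

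Second, we truncate the domain of integration. Since $P_X,Q_X\in\mathcal P_c$, the bounds $\Gamma_{\max}(\varepsilon,c)\leq e^\varepsilon d_c+1$ of Lemma~\ref{lem:gammamingammamaxbound} apply in both directions, because the lemma is symmetric in the roles of $P_X$ and $Q_X$. Hence for every $\gamma> e^\varepsilon d_c+1$ both $E_\gamma(P_X\mathsf K||Q_X\mathsf K)$ and $E_\gamma(Q_X\mathsf K||P_X\mathsf K)$ vanish. This is exactly the argument in the proof of Corollary~\ref{corr:restrictedEgammacontraction}. It leaves an integral over $[1,e^\varepsilon d_c+1]$; the single endpoint has measure zero and is irrelevant.

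Third, on the remaining interval we bound each output divergence by the unrestricted contraction coefficient. By the definition of $\eta_\gamma(\mathsf K)$ we have $E_\gamma(P_X\mathsf K||Q_X\mathsf K)\leq \eta_\gamma(\mathsf K)E_\gamma(P_X||Q_X)$, and likewise with $P_X$ and $Q_X$ swapped. Theorem~\ref{thm:E_gamma_contraction} then gives $\eta_\gamma(\mathsf K)\leq\Psi_\gamma(e^\varepsilon,c)$ for all $\gamma\geq1$. Substituting these bounds into the truncated integral yields \eqref{eq:SDPIInt}.

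I expect the main obstacle to be technical rather than conceptual: justifying the use of \eqref{eq:fdivRepWithE_gamma}. That representation needs $f$ twice differentiable, which is assumed, and it needs $P_X\mathsf K$ and $Q_X\mathsf K$ to be mutually absolutely continuous, so that the $\gamma^{-1}$ term correctly accounts for the part of $D_f$ coming from likelihood ratios below one. Mutual absolute continuity follows from Lemma~\ref{lem:gammamingammamaxbound}, since the likelihood ratio lies in $[(e^\varepsilon d_c+1)^{-1},e^\varepsilon d_c+1]$. If the representation only holds up to an affine correction in $f$, I would normalize $f$ so that $f'(1)=0$; this leaves $D_f$ unchanged and makes the representation exact. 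Exchanging the pointwise bounds with the integral is immediate by monotonicity of the integral, since all integrands are nonnegative.
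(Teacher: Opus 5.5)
Your proposal is correct and is essentially the paper's argument: apply \eqref{eq:fdivRepWithE_gamma} to $P_X\mathsf K$ and $Q_X\mathsf K$, use Lemma~\ref{lem:gammamingammamaxbound} (as in Corollary~\ref{corr:restrictedEgammacontraction}) to discard $\gamma > e^\varepsilon d_c+1$, and bound the remaining output $E_\gamma$ terms by $\Psi_\gamma(e^\varepsilon,c)$ times the input ones via Theorem~\ref{thm:E_gamma_contraction}. Your ordering is actually the sound one: you truncate while the integrand still involves the output divergences, and only then contract. The paper's display contracts first and truncates afterwards, which does not follow as written, because the input divergences $E_\gamma(P_X||Q_X)$ need not vanish beyond $e^\varepsilon d_c+1$.
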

\begin{proof}
    Applying the elementary divergence property \eqref{eq:fdivRepWithE_gamma} of $E_\gamma$-divergence to $P_X\mathsf K$ and $Q_X\mathsf K$, we obtain,
    \begin{align}
        D_f(P_X\mathsf K||Q_X\mathsf K) &= \int_1^\infty [E_\gamma(P_X\mathsf K||Q_X\mathsf K)f''(\gamma)+\gamma^{-3}E_\gamma(Q_X\mathsf K||P_X\mathsf K)f''(\gamma^{-1})]\,d\gamma \\[.5em]
        &\stackrel{(a)}{\leq} \int_1^\infty \Psi_\gamma(e^\varepsilon,c)\Big[E_\gamma(P_X||Q_X)f''(\gamma)+\gamma^{-3}E_\gamma(Q_X||P_X)f''(\gamma^{-1})\Big]d\gamma \\[.5em]
        &\stackrel{(b)}{\leq} \int_1^{e^\varepsilon d_c +1} \Psi_\gamma(e^\varepsilon,c)\Big[E_\gamma(P_X||Q_X)f''(\gamma)+\gamma^{-3}E_\gamma(Q_X||P_X)f''(\gamma^{-1})\Big]d\gamma,
    \end{align}
    where $(a)$ follows because the bound on $\eta_\gamma(\mathsf K)$ in  Theorem \ref{thm:E_gamma_contraction} implies that, $E_\gamma(P_X\mathsf K||Q_X\mathsf K) \leq \Psi_\gamma(e^\varepsilon,c)E_\gamma(P_X||Q_X)$ and $E_\gamma(Q_X\mathsf K||P_X\mathsf K) \leq \Psi_\gamma(e^\varepsilon,c)E_\gamma(Q_X||P_X)$ for any $P_X,Q_X$, and $(b)$ follows from Corollary \ref{corr:restrictedEgammacontraction}. 
\end{proof}

This bound is numerically computable. Further, it provides a non-trivial non-linear SDPI on any $f$-divergence if we relax $\Psi_\gamma(e^\varepsilon,c)$ to $\eta_\text{TV}(\varepsilon)$ by noticing that,
\begin{align}
    D_f(P_X\mathsf K||Q_X\mathsf K) &\leq \eta_\text{TV}(\varepsilon) \int_1^{e^\varepsilon d_c+1} E_\gamma(P_X||Q_X)f''(\gamma)+\gamma^{-3}E_\gamma(Q_X||P_X)f''(\gamma^{-1}) \,d\gamma \\[.5em] 
     = \eta_\text{TV}(\varepsilon)&\left[D_f(P_X||Q_X) - \int_{e^\varepsilon d_c+1}^\infty E_\gamma(P_X||Q_X)f''(\gamma)+\gamma^{-3}E_\gamma(Q_X||P_X)f''(\gamma^{-1}) \,d\gamma\right]. \label{eq:SDPIrestrictedDiffInt}
\end{align}
The formulation \eqref{eq:SDPIrestrictedDiffInt} shows an interesting behavior of the nonlinear SDPI in Corollary \ref{cor:SDPIEgamma_restricted}: The \say{improvement} over the trivial bound $D_f(P_X\mathsf K||Q_X\mathsf K)\leq \eta_\text{TV}(\varepsilon) D_f(P_X||Q_X)$ is determined by the value integral term in \eqref{eq:SDPIrestrictedDiffInt}. Problem specific bounds on $E_\gamma(P_X||Q_X)$ and $E_\gamma(Q_X||P_X)$ can be used to obtain sharper non-linear bounds on arbitrary $f$-divergences. We illustrate the bounds that can be numerically obtained by this technique in the following example. 
\begin{example}
\label{ex:IntSDPIrr}
Let $\mathsf K_3$ and $\mathsf K_4$ be defined as in Example \ref{ex:RR}.
Figure \ref{fig:intbounds} shows the SDPI in \eqref{eq:SDPIInt}, with the integral term evaluated numerically for the two kernels defined above. It becomes clear that the bounds with this technique can yield significant improvements over the standard data processing inequality (shown as the dashed line in Figure \ref{fig:intbounds}). 
\end{example}

\begin{figure}
    \centering
    \begin{subfigure}[b]{.49\linewidth}
    \centering
    \label{fig:intbounds:subfig:rr}
        \includegraphics[scale=0.55]{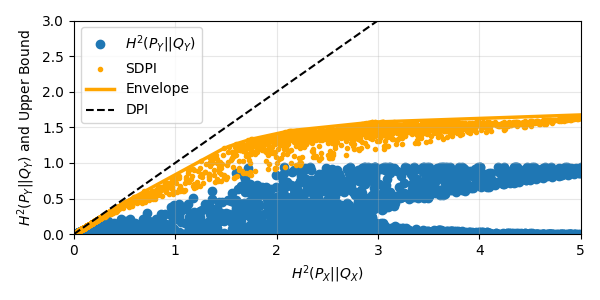}
        \caption{$\mathsf K_3$, $|\mathcal X|=10$, $c=0.05$.}
    \end{subfigure}
    \begin{subfigure}[b]{.49\linewidth}
    \centering
    \label{fig:intbounds:subfig:singular}
        \includegraphics[scale=0.55]{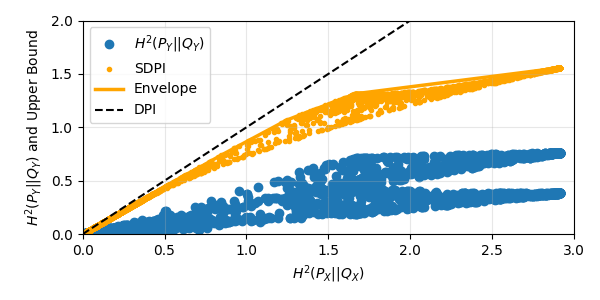}
        \caption{$\mathsf K_4$, $|\mathcal X|=5$, $c=0.1$.}
    \end{subfigure}
    \caption{Numerical evaluation of the bounds presented in Corollary \ref{cor:SDPIEgamma_restricted} for randomly generated distributions in the set $\mathcal P_c$ for the squared Hellinger divergence. $\mathsf K_3$ and $\mathsf K_4$ are defined in Example \ref{ex:RR}. The solid orange line shows the concave upper envelope of the points generated by evaluating the integral \eqref{eq:SDPIInt} for a specific pair of distributions $P_X$ and $Q_X$. The dashed line denotes the standard data processing inequality, that is, $H^2(P_X\mathsf K||Q_X\mathsf K)= H^2(P_X||Q_X)$.}
    \label{fig:intbounds}
\end{figure}

%% file: sections/conclusions.tex
\label{sec:conclusion}
We have shown that $c$-interior PML poses a versatile tool for contraction analyses and disclosure control. The strong data processing inequalities provided in this work can be used to reason about the contraction behavior of classes of kernels beyond those covered by pure LDP. Further, they are often tight and improve over existing results derived in the LDP framework. By devising a theory of contraction on the $c$-interior, we are able to reason about the disclosure risks of a broad class of private systems, including deterministic but reductive mappings. The disclosure risk evaluated in this way is operationally meaningful, and provides the analyst with \emph{concrete} knowledge about the assumptions on considered adversaries. Further, it enables system designers to flexibly incorporate assumptions about the data-generating distributions, and control the inference risk associated with system outputs. The application of the presented theory to hypothesis testing and mean estimation shows that this increased flexibility enables a better and more realistic evaluation of disclosure, and how it can be linked to assumptions about the data-generating distributions. Overall, the results show that PML as an analysis and design tool can be advantageous whenever LDP becomes too restrictive or too rigid.

%% file: sections/appendix.tex
\section{Proof of Lemma \ref{lemma:pml_constraints}}
\label{app:proof_pml_contraints}
First, observe that each $P_X \in \mathcal P_c$ can be expressed as
\begin{equation}
    P_X (dx) = c \mu_X(dx) + d_c \cdot Q_X (dx),
\end{equation}
where $Q_X \in \mathcal P(\mathcal X)$ and $d_c \coloneqq 1-c\mu_X(\mathcal X)$. Hence, \eqref{eq:c_setPML} can be equivalently written as 
\begin{equation}
    \mathsf K(B| x) \leq e^\varepsilon \Big(c\,(\mu_X \mathsf K)(B)+d_c \, (Q_X \mathsf K)(B)\Big),
\end{equation}
for all $x \in \cX$, $B \in \Sigma_\cY$, and $Q_X \in \mathcal P(\mathcal X)$. To obtain the tightest possible bound, we take the infimum of the right-hand side over all admissible $Q_X$. We argue that
\begin{equation}
    \inf_{Q_X : Q_X \ll \mu_X} (Q_X \mathsf K)(B) = \inf_{x \in \cX} \,  \mathsf K(B|x), \quad B \in \Sigma_\cY.
\end{equation}
To see this, note that the inequality 
\begin{equation*}
    \inf_{Q_X : Q_X \ll \mu_X} (Q_X\mathsf K)(B) \geq \inf_{x \in \cX} \,  \mathsf K(B|x), 
\end{equation*}
holds trivially for all $B \in \Sigma_\cY$. To show the opposite direction, fix a set $B \in \Sigma_\cY$ and a small $\zeta >0$, and let 
\begin{equation*}
    A_\zeta = \{x \in \cX : \mathsf K(B|x)\leq \inf_{x'} \,\mathsf K(B|x') + \zeta \}.
\end{equation*}
By the assumption about non-existing maximizing null-sets (see Section \ref{ssec:notation}), $\mu_X(A_\zeta)>0$, so  
\begin{equation*}
    Q_{\zeta} (dx) = \frac{\ind_{A_\zeta} (x) \cdot Q_X(dx)}{Q_X (A_\zeta)}, 
\end{equation*}
is well defined and satisfies $Q_\zeta \ll \mu_X$. ($Q_{\zeta}$ is the conditional distribution of $Q_X$ on the set $A_\zeta$.) We write 
\begin{align*}
     \inf_{Q_X : Q_X \ll \mu_X} (Q_X\mathsf K)(B) &\leq (Q_{\zeta}\mathsf K)(B)\\
     &= \frac{1}{Q_X (A_\zeta)} \int_{A_\zeta} \mathsf K(B|x) \, Q_X(dx)\\
     &\leq \frac{1}{Q_X (A_\zeta)} \int_{A_\zeta} \Big( \inf_{x'}\mathsf K(B|x') + \zeta \Big) \, Q_X(dx)\\
     &= \inf_{x'}\mathsf K(B|x') + \zeta. 
\end{align*}
Then, taking $\zeta \to 0$ yields 
\begin{equation*}
    \inf_{Q_X : Q_X \ll \mu_X} (Q_X \mathsf K)(B) \leq \inf_{x' \in \cX} \,  \mathsf K(B|x'). 
\end{equation*}
Thus, we have established that 
\begin{equation}
    \mathsf K(B|x) \leq e^\varepsilon \Big(c(\mu_X \mathsf K)(B)+ d_c \cdot \inf_{x'} \,\mathsf K(B|x') \Big), \quad x \in \cX, B \in \Sigma_{\cY},
\end{equation}
as desired. \qed

\section{Proof of Lemma \ref{lem:PMLimpliesLDP}}
\label{app:proofLDPlemma}
    Fix $B\in\Sigma_\mathcal Y$ and $x,x'\in\mathcal X$ arbitrary. From Lemma \ref{lemma:pml_constraints}, we can directly see that,
    \begin{equation}
        \mathsf K(B|x) \leq e^\varepsilon \mathsf M(B|x'),
    \end{equation}
    which implies that,
    \begin{equation}
    \label{eq:Mdef1}
        \mathsf M(B|x) \leq c(\mu_X \mathsf K)(B) + e^\varepsilon d_c \mathsf M(B|x').
    \end{equation}
    Now, by definition of $\mathsf M$,
    \begin{equation}
    \label{eq:Mdef2}
        c(\mu_X\mathsf K)(B) = \mathsf M(B|x') - d_c \mathsf K(B|x') \leq \mathsf M(B|x'),
    \end{equation}
    where the inequality follows since $d_c\mathsf K(B|x')\geq 0$. Using \eqref{eq:Mdef2} to bound $c(\mu_X\mathsf K)(B)$ in \eqref{eq:Mdef1}, we obtain,
    \begin{equation}
        \mathsf M(B|x) \leq \mathsf M(B|x') + e^\varepsilon d_c \mathsf M(B|x') = (e^\varepsilon d_c + 1)\mathsf M(B|x').
    \end{equation}
    Since the choice of $x$ and $x'$ was arbitrary, this shows that $\mathsf M$ constructed in this way satisfies $(e^\varepsilon d_c +1)$-LDP, given that $\mathsf K$ satisfies $\varepsilon$-PML on $\mathcal P_c$. In other words, $\mathsf K \in\mathcal M(\varepsilon,c) \implies \mathsf M \in \mathcal M(e^\varepsilon d_c +1,0)$. 

    To proof the second part of the statement, notice that for any $B\in\Sigma_\mathcal Y$,
    \begin{equation}
    \label{eq:Kintbound}
        (\mu_X \mathsf K)(B) = \int_{\mathcal X} \mathsf K(B|x)\mu_X(dx) \leq \sup_x \mathsf K(B|x)\int_\mathcal X \mu_X(dx) = \sup_x\mathsf K(B|x) \cdot \mu_X(\mathcal X). 
    \end{equation}
    Plugging this into Lemma \ref{lemma:pml_constraints}, we get
    \begin{align}
        \sup_x\mathsf K(B|x) &\stackrel{\text{(a)}}{\leq} e^\varepsilon c(\mu_X\mathsf K)(B)+ e^\varepsilon d_c \inf_{x'}\mathsf K(B|x') \\
        &\stackrel{(b)}{\leq} e^\varepsilon c \mu_X(\mathcal X) \sup_x\mathsf K(B|x) + e^\varepsilon d_c \inf_{x'} \mathsf K(B|x'),\label{eq:Kconst1}
    \end{align}
    where $(a)$ is the statement of Lemma \ref{lemma:pml_constraints} and $(b)$ follows from \eqref{eq:Kintbound}. We can rearrange \eqref{eq:Kconst1} to see that, whenever $1-e^\varepsilon c \mu_X(\mathcal X)>0 \iff \varepsilon < -\log(c\mu_X(\mathcal X))$, we have,
    \begin{equation}
    \label{eq:Kisleqzeroatgamma*}
        \sup_x \mathsf K(B|x) - \frac{e^\varepsilon d_c}{1-e^\varepsilon c \mu_X(\mathcal X)}\inf_{x'}\mathsf K(B|x') \leq 0 \quad \forall B\in\Sigma_\mathcal Y.
    \end{equation}
    Let $\gamma^* \coloneqq (e^\varepsilon d_c) / (1-e^\varepsilon c\mu_X(\mathcal X))$. We have,
    \begin{equation}
    \label{eq:gamma*contraction}
        \eta_{\gamma^*}(\mathsf K) \stackrel{(a)}{\leq} \sup_{x\neq x'} \sup_{B\in\Sigma_\mathcal Y} \left\{\mathsf K(B|x)-\gamma^* \mathsf K(B|x')\right\}  \stackrel{(b)}{\leq} \sup_{B\in\Sigma_\mathcal Y} \left\{\sup_{x}\mathsf K(B|x)-\gamma^* \inf_{x'} \mathsf K(B|x')\right\} \leq 0. 
    \end{equation}
    Here $(a)$ follows form the characterization of $\eta_\gamma$ in \cite[Theorem 2]{asoodeh2020contraction}, $(b)$ follows by splitting the joind maximum over $x\neq x'$ into independent maximizations, and $c$ follows from \eqref{eq:Kisleqzeroatgamma*}. Due to the data processing inequality, we also have $\eta_{\gamma^*}(\mathsf K)\geq 0$. Hence, \eqref{eq:gamma*contraction} shows that $\eta_{\gamma^*}(\mathsf K)=0$. This is shown in  \cite{9517999} to be equivalent to $\mathsf K$ satisfying $\log \gamma^*$-LDP, which is what we wanted to show.

\section{Omitted Derivations in Theorem \ref{thm:TVcontraction}}
\label{app:mechanismrandomization}
\subsection*{Privacy Guarantee of the Construction}
To simplify notation, let $c\mu_X(\mathcal A) \eqqcolon a$ and as usual $d_c = (1-c\mu_X(\mathcal X))$. We have,
\begin{equation}
    \ell_{P_X\mathsf K^\star_{\varepsilon,c}}(X\to 0) = \log \frac{\max_{x\in\mathcal X}\mathsf K^\star_{\varepsilon,c}(0|x)}{\int_\mathcal X \mathsf K^\star_{\varepsilon,c}(0|dx)P_X(dx)}.
\end{equation}
We begin by bounding the denominator from below. Recall that, using Lemma \ref{lemma:pml_constraints}, we have, 
\begin{align}
    (1+e^\varepsilon d_c)\int_\mathcal X \mathsf K^\star_{\varepsilon,c}(0|dx)P_X(dx) & \geq (1+e^\varepsilon d_c)\left[c\int_\mathcal X \mathsf K^\star_{\varepsilon,c}(0|dx)\mu_X(dx) + d_c \min_{x\in\mathcal X}\mathsf K^\star_{\varepsilon,c}(0|x)\right] \\[.5em]
    &= c(\mu_X(\mathcal A)e^\varepsilon(1-a)+(\mu_X(\mathcal X)-\mu_X(\mathcal A))(1-e^\varepsilon a))+d_c(1-e^\varepsilon a) \\[.5em]
    &= a(e^\varepsilon-1)+(1-e^\varepsilon a) = 1-c\mu_X(\mathcal A).
\end{align}
The numerator is clearly bounded by $e^\varepsilon(1-c\mu_X(\mathcal A))/(1+e^\varepsilon d_c)$, so we have for any $P_X \in\mathcal P_c$,
\begin{equation}
     \ell_{P_X\mathsf K^\star_{\varepsilon,c}}(X\to 0) \leq \log \frac{e^\varepsilon(1-c\mu_X(\mathcal A))}{1-c\mu_X(\mathcal A)} = \varepsilon.
\end{equation}
To show that $\ell(X\to 1)$ satisfies the same upper bound, note that we may choose $\tilde {\mathcal A}=\mathcal A^c$ to define a new mechanism $\bar {\mathsf K}^\star_{\varepsilon,c}$, which yields,
\begin{equation}
    \bar{\mathsf K}^\star_{\varepsilon,c}(0|x) = \mathsf K^\star_{\varepsilon,c}(1|x).
\end{equation}
Hence, for any $P_X\in\mathcal P_c$,
\begin{equation}
    \ell_{P_X\mathsf K^\star_{\varepsilon,c}}(X\to1) = \ell_{P_X\bar{\mathsf K}^\star_{\varepsilon,c}}(X\to 0) \leq \varepsilon.
\end{equation}

\subsection*{Optimal Mechanism for Odd Atomic Alphabets}
In a setting where there is no $\mathcal A\in\Sigma_\mathcal X$ such that $\mu_X(\mathcal A)=\nicefrac{1}{2}$, we may proceed as follows. First, note that such a case only occurs when the measure has atoms, and we can always decompose $\mu_X$ into nonatomic part $\mu_X^{\text{n-a}}$ and atomic part $\mu_X^{\text{a}}$ \cite{johnson1970atomic}. Let $\mathcal T$ be the largest set of atoms together with the non-atomic part of $\mathcal X$ for which,
\begin{equation}
    \mu_X(\mathcal T) < \frac{\mu_X(\mathcal X)}{2}.  
\end{equation}
Let $x^* \in\Sigma_\mathcal X$ be an atom with $\mu_X(\{x^*\})=p^*$ such that,
\begin{equation}
    \mu_X(\mathcal T \cup \{x^*\}) > \frac{\mu_X(\mathcal X)}{2}.
\end{equation}
Define $\tau = \nicefrac{\mu_X(\mathcal X)}{2}$ and,
\begin{equation}
    \lambda = \frac{\tau-\mu_X(\mathcal T)}{p^*} \in [0,1].
\end{equation}
Define the mechanism $\mathsf K^\star$ whenever $\varepsilon<\log(2/(c\mu_X(\mathcal X)))$ as,
\begin{equation}
        \mathsf K_{\varepsilon,c}^\star(0|x) = \frac{1}{1+e^\varepsilon d_c}\begin{cases}
            e^\varepsilon(1-c\tau), &\text{if }x\in \mathcal T,\\ 
            1+\lambda(e^\varepsilon-1)-e^\varepsilon c\tau, &\text{if }x=x^*\\
            1-e^\varepsilon c\tau, &\text{if }x\in\mathcal T^c\setminus \{x^*\}, 
        \end{cases}\
    \end{equation}
and $\mathsf K^\star_{\varepsilon,c}(1|x) = 1- \mathsf K_{\varepsilon,c}^\star(0|x)$. And similarly, whenever $\varepsilon>\log(2/(c\mu_X(\mathcal X)))$, 
\begin{equation}
    \mathsf K^\star_{\varepsilon,c}(0|x) = \begin{cases}
        1,&\text{if }x\in \mathcal T,\\
        \lambda, &\text{if }x=x^*,\\
        0, &\text{otherwise,}
    \end{cases}
\end{equation}
and $\mathsf K^\star_{\varepsilon,c}(1|x) = 1- \mathsf K_{\varepsilon,c}^\star(0|x)$.  This can be seen as a probabilistic assignment of $x^*$ to either $\mathcal T$ or $\mathcal T^c$, which yields the required condition.

\section{Proof of Lemma \ref{lem:QcdontchangeTVcontr}}
\label{app:QdontchangeTVcontrproof}
    From the extreme-point representation of the set $\mathcal P_c$,
        \begin{equation}
            \mathcal P_c =\{c + (1-c\mu_X(\mathcal X))V_X \mid V_X \in \mathcal P(\mathcal X)\},
        \end{equation}
        where $c + V_X$ is understood as adding a constant function $f(x) = c$ to the density admitted by $V_X$ with respect to $\mu_X$. We have that 
        \begin{align}
            \eta^{\mathcal P_c}_{\text{TV}}(\mathsf K) &= \sup_{P_X,Q_X \in\mathcal P_c} \frac{|| P_X \mathsf K - Q_X \mathsf K ||_1}{||P_X-Q_X||_1} \\[.7em]
            &= \sup_{V_X,W_X \in \mathcal P(\mathcal X)}\frac{||(c + (1-c\mu_X(\mathcal X))V_X \mathsf K) - (c + (1-c\mu_X(\mathcal X))W_X\mathsf K) ||_1}{||c + (1-c\mu_X(\mathcal X))V_X-c - (1-c\mu_X(\mathcal X))W_X||_1}.
        \end{align}
        Now, note that, 
        \begin{equation}
            ||c + (1-c\mu_X(\mathcal X))V_X-c - (1-c\mu_X(\mathcal X))W_X||_1 = (1-c\mu_X(\mathcal X))||V_X-W_X||_1, 
        \end{equation}
        and,
        \begin{equation}
            || (c + (1-c\mu_X(\mathcal X))V_X\mathsf K) - (c + (1-c\mu_X(\mathcal X))W_X\mathsf K) ||_1 = (1-c\mu_X(\mathcal X))||V_X\mathsf K - W_X\mathsf K||_1.
        \end{equation}
        Therefore, we may write,
        \begin{align}
            \eta^{ \mathcal P_c}_{\text{TV}}(K) &= \sup_{V_X,W_X \in \mathcal P(\mathcal X)} \frac{(1-c\mu_X(\mathcal X))||V_X\mathsf K- W_X\mathsf K||_1}{(1-c\mu_X(\mathcal X))||V_X - W_X||_1} \\
            &= \sup_{V_X,W_X \in \mathcal P(\mathcal X)} \frac{||V_X\mathsf K- W_X\mathsf K||_1}{||V_X - W_X||_1} = \eta_{\text{TV}}(\mathsf K).
        \end{align}
        Hence, the restriction from $\mathcal P(\mathcal X)$ to $\mathcal P_c(\mathcal X)$ does not affect $\eta_\text{TV}$. 

\section{Proof of Theorem \ref{thm:E_gamma_contraction}}
\label{app:Egammaproof}
 We exclude the case $\varepsilon=0$, as this implies that $P_X\mathsf K$ and $Q_X\mathsf K$ are independent, which implies that $\eta_\gamma(\mathsf K)=0$. So, suppose $\varepsilon>0$ below. It is shown in \cite[Theorem 2]{asoodeh2020contraction} that, 
    \begin{equation}
        \eta_\gamma(\mathsf K) \leq \sup_{x,x'} E_\gamma(\mathsf K(\cdot|x)||\mathsf K(\cdot |x')) =\sup_{x,x'} \sup_{B\in\Sigma_\mathcal Y} \Big(\mathsf K(B|x) - \gamma \, \mathsf K(B|x')\Big).
    \end{equation}
    We proceed along the same lines as in the proof of Theorem \ref{thm:TVcontraction}, that is, we fix any arbitrary $B \in \Sigma_\mathcal Y$ and bound $\sup_{x\in\mathcal X}\mathsf K(B|x) - \inf_{x\in\mathcal X}\gamma\mathsf K(B|x')$. To this end, fix any $B \in \Sigma_\mathcal Y$ and let $p \coloneqq \sup_x\mathsf K(B|x)$, $q\coloneqq \inf_x\mathsf K(B|x')$, $I =(\mu_X \mathsf K)(B)$. Using Lemma \ref{lemma:pml_constraints} and the the resulting bounds on the kernel in \eqref{eq:Zpmlconstraintz0} \& \eqref{eq:Zpmlconstraintz1}, finding an upper bound on $\mathsf K(B|x)-\gamma \mathsf K(B|x')$ amounts to finding an upper bound on the optimal value of the linear program,
    \begin{align}
    \max_{p,q,I} &\big\{p-\gamma q\big\}, \\
    \text{s.t. } &p \leq e^\varepsilon c I + e^\varepsilon d_c q, \\
    & 1-q \leq e^\varepsilon c(\mu_X(\mathcal X)-I) + e^\varepsilon d_c(1-p), \\
    & 0\leq p,q \leq 1, \quad \mu_X(\mathcal X)q \leq I \leq \mu_X(\mathcal X)p.
\end{align}
In standard form, with $x = [p,q,I]^\top$, $h = [1, -\gamma,0]^\top$, this can be written as
\begin{align}
    \max & \;h^\top x \\
    \text{s.t. }&Ax\leq b,
\end{align}
with the constraint parameters defined by,
\begin{equation}
    A = \begin{bmatrix}
    1 & e^\varepsilon d_c & 1 & -1 & 0 & 0 & 0 & 0 & -\mu_X(\mathcal X)& 0\\
    -e^\varepsilon d_c & -1 & 0 & 0 & 1 & -1 & 0 & 0 & 0 & \mu_X(\mathcal X)\\
    -e^\varepsilon c & e^\varepsilon c & 0 & 0 & 0 & 0 & 1 & -1 & 1 & -1
\end{bmatrix}^\top, 
\end{equation}
and,
\begin{equation}
    b = \begin{bmatrix}
        0, & e^\varepsilon-1, & 1, &0, & 1, & 0, & \mu_X(\mathcal X), & 0, & 0, & 0
    \end{bmatrix}^\top.
\end{equation}
The dual of this LP is given by $\min_\lambda b^\top \lambda \text{ s.t. }A^T\lambda = h$ \cite{boyd2004convex}. By weak duality, any feasible solution $\lambda \geq 0$ provides an upper bound on the primal value, that is, 
\begin{equation}
    \max_{p,q,I} p-\gamma q \leq b^\top \lambda.
\end{equation}
Some algebra reveals that the dual problem is,
\begin{align}
    \min_{\lambda_i \geq 0\,\forall i\in[10]} &\Big\{\lambda_2(e^\varepsilon-1)+\lambda_3+\lambda_5+\lambda_7 \mu_X(\mathcal X)\Big\}. \\[.5em]
    \text{s.t. }\; &\lambda_1 + \lambda_2 e^\varepsilon (1-c\mu_X(\mathcal X))+\lambda_3-\lambda_4-\lambda_9\mu_X(\mathcal X)=1\\
    & \lambda_1 e^\varepsilon(1-c\mu_X(\mathcal X)) + \lambda_2 - \lambda_5 + \lambda_6 -\lambda_{10}\mu_X(\mathcal X) = \gamma \\
    &(\lambda_2-\lambda_1)e^\varepsilon c -\lambda_7 + \lambda_8 +\lambda_9 - \lambda_{10}= 0.
\end{align}
The following choices of $\lambda$ give us the desired upper bound: For simpler notation, define,
\begin{equation}
    \gamma^* = \frac{e^\varepsilon d_c}{1-e^\varepsilon c \mu_X(\mathcal X)}.
\end{equation}
\begin{enumerate}
    \item $\gamma < \gamma^*$: Pick $\lambda_3^\star=\lambda_4^\star=\lambda_5^\star=\lambda_6^\star =\lambda_7^\star=\lambda_8^\star=\lambda_{10}^\star=0$. The three equality constraints in the dual linear program uniquely determine the remaining three dual variables as
\begin{equation}
    \lambda_1^\star(\gamma) = \frac{e^\varepsilon\gamma-1}{(e^\varepsilon-1)(e^\varepsilon d_c+1)},\quad \lambda_2^\star(\gamma) = \frac{e^\varepsilon d_c -\gamma(1-e^\varepsilon c \mu_X(\mathcal X))}{(e^\varepsilon-1)(e^\varepsilon d_c +1)}, \quad \lambda_9^\star(\gamma) = \frac{e^\varepsilon c (\gamma-1)}{e^\varepsilon-1}.
\end{equation}
Clearly, $\lambda_1^\star\geq 0$ and $\lambda_9^\star\geq 0$ for all $\varepsilon,c\geq 0$ and $\gamma \geq 1$. Further, we have $\lambda_2^\star(\gamma) \geq 0$ as long as $\gamma \leq e^\varepsilon d_c/(1-e^\varepsilon c \mu_X(\mathcal X))$. Hence, the choice yields a feasible solution to the dual for this range of values. The dual objective for this choice of dual variables $\lambda$ will hence be an upper bound to the primal value. The dual objective at these values is 
\begin{equation}
\label{eq:optgammadep}
    \lambda_2^\star(\gamma)(e^\varepsilon-1) = \frac{e^\varepsilon d_c - \gamma(1-e^\varepsilon c \mu_X(\mathcal X))}{1+e^\varepsilon d_c}.
\end{equation}

\item $\gamma\geq\gamma^*$: Pick $\lambda_1^\star,\lambda_6^\star,\lambda_9^\star \neq 0$, and set all other dual variables to zero. We obtain the feasible solution,
\begin{equation}
    \lambda_1^\star(\gamma) = \lambda_1^\star(\gamma^*), \quad \lambda_9^\star(\gamma) = \lambda_9^\star(\gamma^*),  \quad \lambda_6^\star(\gamma) = \gamma - \gamma^*.
\end{equation}
Using the fact that $\gamma^*\leq \gamma$ in the regime of interest, we see that $\lambda_6^\star(\gamma)\geq 0$. The dual objective for this choice is zero,
which continuously connects to the solution above for $\gamma \leq\gamma^*$. 
The above choices provide a feasible solution to the dual problem and hence an upper bound to the primal problem whenever $\gamma^*>0$. This is the case as long as $\varepsilon <-\log(c\mu_X(\mathcal X))$.
\end{enumerate}
For larger values of $\varepsilon$, note that we have $\eta_\gamma(\mathsf K)\leq \eta_\text{TV}(\mathsf K)$ whenever $\gamma \geq 1$, so we can use the bound in Theorem \ref{thm:TVcontraction} to find,
\begin{equation}
    \eta_\gamma(\mathsf K) \leq \min\left\{1,\, \frac{e^\varepsilon-1}{e^\varepsilon d_c+1}\right\}.
\end{equation}
Putting all the above cases together, we find that,
\begin{equation}
    \eta_\gamma(\mathsf K) \leq \left(\frac{e^\varepsilon d_c - \gamma(1-e^\varepsilon c \mu_X(\mathcal X))}{1+e^\varepsilon d_c}\right)_+, \quad \text{if }0\leq \varepsilon <-\log(c\mu_X(\mathcal X)),
\end{equation}
\begin{equation}
    \eta_\gamma(\mathsf K) \leq \eta_\text{TV}(\varepsilon), \quad \text{otherwise,}
\end{equation}
which is what we wanted to show.

\section{Proof of Lemma \ref{lem:gammamingammamaxbound}}
\label{app:lemgammamingammamaxbound}
To show the statement, note that for any $\mathsf K$,
\begin{equation}
\label{eq:binetteproof:Gammarelax}
     \sup_{P_X,Q_X\in\mathcal P_c} \sup_{B\in\Sigma_\mathcal Y} \frac{(P_X\mathsf K)(B)}{(Q_X\mathsf K)(B)}\leq \sup_{B\in\Sigma_\mathcal Y}\frac{\sup_{P_X \in\mathcal P_c} (P_X \mathsf K)(B)}{\inf_{Q_X\in\mathcal P_c}(Q_X\mathsf K)(B)}.
\end{equation}
The proof of Lemma \ref{lemma:pml_constraints} shows that,
\begin{equation}
\label{eq:binetteproof:supset}
    \sup_{P_X\in\mathcal P_c} (P_X\mathsf K)(B) = c(\mu_X \mathsf K)(B) + d_c \sup_{x\in\mathcal X}\mathsf K(B|x),
\end{equation}
\begin{equation}
\label{eq:binetteproof:infset}
    \inf_{Q_X\in\mathcal P_c} (Q_X\mathsf K)(B) = c(\mu_X \mathsf K)(B) + d_c \inf_{x\in\mathcal X}\mathsf K(B|x).
\end{equation}
Further, we can use Lemma \ref{lemma:pml_constraints} to get,
\begin{equation}
\label{eq:binetteproof:pmlsupbound}
    \sup_{x\in\mathcal X}\mathsf K(B|x) \leq e^\varepsilon c (\mu_X\mathsf K)(B) + e^\varepsilon d_c \inf_{x\in\mathcal X}\mathsf K(B|x).
\end{equation}
Using \eqref{eq:binetteproof:supset} and \eqref{eq:binetteproof:infset} to rewrite \eqref{eq:binetteproof:Gammarelax} and then bounding according to \eqref{eq:binetteproof:pmlsupbound} yields,
\begin{align}
    \Gamma_{\max}(\varepsilon,c) &\leq \frac{d_ce^\varepsilon\big(c(\mu_X\mathsf K)(B)+d_c \inf_{x\in\mathcal X}\mathsf K(B|x)\big) +c(\mu_X\mathsf K)(B)}{c(\mu_X\mathsf K)(B)+d_c \inf_{x\in\mathcal X}\mathsf K(B|x)} \\[.5em]
    &= e^\varepsilon d_c + \frac{c(\mu_X\mathsf K)(B)}{c(\mu_X\mathsf K)(B)+d_c \inf_{x\in\mathcal X}\mathsf K(B|x)} 
    \leq e^\varepsilon d_c +1,
\end{align}
where the last inequality follows from $\inf_x\mathsf K(B|x)\geq 0$. The bound on $\Gamma_{\min}(\varepsilon,c)$ follows by the same steps.

\section{Proof of Lemma \ref{lem:TVseparatinglemma}}
\label{app:proofTVseparatinglemma}
 Let $f\coloneqq\frac{\mathrm dP_X}{\mathrm d\mu_X}-\frac{\mathrm dQ_X}{\mathrm d\mu_X}$ and $S\coloneqq\{f\geq0\}$. This implies that,
 \begin{equation}
     \int_S f\, d\mu_X=\int_{S^c}(-f)\, d\mu_X=\text{TV}(P_X\|Q_X).
 \end{equation}
We will need the following claim, which we proof below.
\begin{claim}
\label{claim:toplevelsets}
     Let $T\in\Sigma_\mathcal X$ with $\mu_X(T)>0$, let $g\geq0$ be integrable on $T$, and let $t\in[0,\mu_X(T)]$. There is a measurable function $u:T\to[0,1]$ with $\int_T u\, d\mu_X=t$ and
    \begin{equation}
    \label{eq:toplevel}
        \int_T ug\,d\mu_X\ \geq\ \frac{t}{\mu_X(T)}\int_T g\, d\mu_X.
    \end{equation}
\end{claim}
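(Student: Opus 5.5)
The plan is to treat the two extreme values of $t$ separately and then interpolate with a threshold (level-set) construction. If $\int_T g\,d\mu_X=0$, any $u$ with the right mass works, for instance $u\equiv t/\mu_X(T)$, since then both sides of \eqref{eq:toplevel} vanish. In general, the constant choice $u\equiv t/\mu_X(T)$ already attains equality in \eqref{eq:toplevel}: it takes values in $[0,1]$ because $t\le\mu_X(T)$, it has $\int_T u\,d\mu_X=t$, and it gives $\int_T ug\,d\mu_X=\frac{t}{\mu_X(T)}\int_T g\,d\mu_X$. So the claim as stated follows immediately from the constant function. I would record this as the proof, since the inequality is not strict and nothing beyond this is required.

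For the application in Lemma~\ref{lem:TVseparatinglemma}, however, a top-level-set version is more useful, so I would also prove the stronger statement that a \say{water-filling} $u$ does at least as well. Define $\phi(s)\coloneqq\mu_X(\{x\in T: g(x)>s\})$. This function is nonincreasing and right-continuous, and $\phi(s)\to0$ as $s\to\infty$ because $g$ is integrable. Choose $s^*=\inf\{s:\phi(s)\le t\}$. Set $u=1$ on $\{g>s^*\}$, set $u=\theta$ on $\{g=s^*\}$ with $\theta\in[0,1]$ chosen so that $\int_T u\,d\mu_X=t$, and set $u=0$ elsewhere. This randomized boundary handles atoms exactly as the mechanism construction in Appendix~\ref{app:mechanismrandomization} does.

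The comparison $\int_T ug\,d\mu_X\ge \frac{t}{\mu_X(T)}\int_T g\,d\mu_X$ then follows from the pointwise inequality
\[
\Big(u-\tfrac{t}{\mu_X(T)}\Big)(g-s^*)\ge0.
\]
It holds because $u\ge t/\mu_X(T)$ wherever $g>s^*$ (there $u=1$) and $u\le t/\mu_X(T)$ wherever $g<s^*$ (there $u=0$). Integrating this inequality and using $\int_T\big(u-\tfrac{t}{\mu_X(T)}\big)\,d\mu_X=0$ gives the result.

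The only delicate step I expect is the existence of $\theta$. This requires $\phi(s^*)\le t\le\mu_X(\{g\ge s^*\})$. The first inequality comes from right-continuity of $\phi$. The second comes from $\mu_X(\{g\ge s^*\})=\lim_{s\uparrow s^*}\phi(s)\ge t$. When $s^*=0$, the same bound holds because $\mu_X(\{g\ge0\})=\mu_X(T)\ge t$.
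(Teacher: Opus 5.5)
Your proposal is correct, and its first paragraph takes a genuinely different and much shorter route than the paper.

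The paper proves the claim with the threshold construction $u=\mathbf 1_{\{g>\tau\}}+\lambda\mathbf 1_{\{g=\tau\}}$, where $\tau=\inf\{s\ge 0:\mu_X(T\cap\{g>s\})\le t\}$. It then obtains \eqref{eq:toplevel} by integrating $(u-t/\mu_X(T))(g-\tau)\ge 0$. That is exactly your second, water-filling part.

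You instead note that the statement as written holds with equality for the constant $u\equiv t/\mu_X(T)$. This function is $[0,1]$-valued with mass $t$, since $0\le t\le\mu_X(T)\le\mu_X(\mathcal X)<\infty$. That is a complete proof, and it avoids all the level-set and right-continuity bookkeeping.

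What the threshold function buys is optimality. The same pointwise argument gives $\int_T (u-v)(g-\tau)\,d\mu_X\ge 0$ for every $v:T\to[0,1]$ with $\int_T v\,d\mu_X=t$. So it maximizes $\int_T vg\,d\mu_X$ (the bathtub/Neyman--Pearson principle), and it is an indicator away from the single level set $\{g=\tau\}$.

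However, your remark that this version is ``more useful for the application'' is not borne out by the paper. The proof of Lemma~\ref{lem:TVseparatinglemma} only uses the factor $t/\mu_X(T)=M/\mu_X(T)\ge M/\mu_X(\mathcal X)\ge\tfrac12$, with $T\in\{S,S^c\}$, and the constant attains this factor exactly. Moreover, the binary kernel of Theorem~\ref{thm:TVcontraction} built from a $[0,1]$-valued $a$ is affine in $a(x)$, so $a$ need not be an indicator.

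Two cosmetic points on your threshold part. First, restrict the infimum to $s\ge 0$ as the paper does; otherwise $t=\mu_X(T)$ gives $s^*=-\infty$. Second, for $t=0$ with essentially unbounded $g$, the defining set is empty. Both endpoints are most simply handled by $u\equiv 1$ and $u\equiv 0$.
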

Given this claim, we proceed as follwos: Since $\mu_X(S)+\mu_X(S^c)=\mu_X(\mathcal X)\leq2M$, at most one of $\mu_X(S),\mu_X(S^c)$ exceeds $M$. We therefore need to treat the following three cases.
    \begin{enumerate}
        \item If both are at most $M$, take $a\coloneqq\mathbf 1_S$ (likelihood-ratio-test). Then, $\int a\,d(P_X-Q_X)=\text{TV}(P_X||Q_X)$.
        \item If $\mu_X(S)>M$, apply Claim \ref{claim:toplevelsets} with $T=S$, $g=f$, $t=M$ and let $a\coloneqq u$ on $S$, $a\coloneqq0$ on $S^c$. Then $\int a\,d\mu_X=M$, $\mu_X(\mathcal X)-\int a\,\mathrm d\mu_X\leq M$, and by \eqref{eq:toplevel},
        \begin{equation}
            \int a\,d(P_X-Q_X)=\int_S uf\,d\mu_X\geq\frac{M}{\mu_X(S)}\,\text{TV}(P_X||Q_X)\geq\frac{M}{\mu_X(\mathcal X)}\,\text{TV}(P_X||Q_X).
        \end{equation}
        \item If $\mu_X(S^c)>M$, apply Claim \ref{claim:toplevelsets} with $T=S^c$, $g=-f$, $t=M$, and set $a\coloneqq1-u$ on $S^c$, $a\coloneqq1$ on $S$. The bounds follow as above, since $\int a\, d(P_X-Q_X)=-\int_{S^c}u\,(-f)\, d\mu_X$.
    \end{enumerate}
    In all cases the statement follows, since we assumed $M/\mu_X(\mathcal X)\geq\tfrac12$.

\subsection*{Proof of Claim \ref{claim:toplevelsets}}
To see this, note that $s\mapsto\mu_X(T\cap\{g>s\})$ is non-increasing and right-continuous, so for $\tau\coloneqq\inf\{s\geq0:\mu_X(T\cap\{g>s\})\leq t\}$ we get,
    \begin{equation}
        \mu_X(T\cap\{g>\tau\}) \leq t \leq \mu_X(T\cap\{g\geq\tau\}).
    \end{equation}
    Let, 
    \begin{equation}
        \lambda\coloneqq\frac{t-\mu_X(T\cap\{g>\tau\})}{\mu_X(T\cap\{g=\tau\})}\in[0,1],
    \end{equation}
    if the denominator is positive, and $\lambda\coloneqq0$ otherwise, and set $u\coloneqq\mathbf 1_{\{g>\tau\}}+\lambda\mathbf 1_{\{g=\tau\}}$ on $T$. Then $\int_T u\,\mathrm d\mu_X=t$. With $\lambda_0\coloneqq t/\mu_X(T)\in[0,1]$ and $h\coloneqq u-\lambda_0$ we have $h\geq0$ on $\{g>\tau\}$, $h\leq0$ on $\{g<\tau\}$, and $g-\tau=0$ on $\{g=\tau\}$, so $h\cdot(g-\tau)\geq0$ pointwise on $T$. Since $\int_T h\,\mathrm d\mu_X=0$, we get \eqref{eq:toplevel}, because,
    \begin{equation}
        \int_T h\,g\,\mathrm d\mu_X\geq\tau\int_T h\,\mathrm d\mu_X=0.
    \end{equation}

\section{Proof of Lemma \ref{lem:etaTVsq=Xi}}
\label{app:etaTVsq=Xi}
    Both $\eta_\text{TV}^2(\varepsilon)$ and $\Xi(\varepsilon,c)$ are positive and bounded on $\varepsilon\in (0, \varepsilon^*)$. To see this, note that for any $c$ in the prescribed range, $\varepsilon$ there is a finite value of $\varepsilon<-\log(c\mu_X(\mathcal X))$ such that $\Xi(\varepsilon,c) = e^{2\varepsilon}/(e^\varepsilon d_c +1) <\infty$, so $\Xi(\varepsilon,c)$ will always be finite. For very small $\varepsilon$, we have $\eta_\text{TV}(\varepsilon) = (e^\varepsilon-1)/(e^\varepsilon d_c+1)$ and $\Xi(\varepsilon,c) = (e^\varepsilon-1)^2/(e^\varepsilon d_c(1-e^\varepsilon c\mu_X(\mathcal X)))$. Hence,
    \begin{align}
        \lim_{\varepsilon\to 0^+}\frac{\Xi(\varepsilon,c)}{\eta_\text{TV}^2(\varepsilon)} &= \lim_{\varepsilon\to 0^+}\left\{\frac{(e^\varepsilon-1)^2}{e^\varepsilon d_c(1-e^\varepsilon c\mu_X(\mathcal X))}\cdot \frac{(e^\varepsilon d_c+1)^2}{(e^\varepsilon-1)^2}\right\} \\[.5em]
        &= \lim_{\varepsilon\to0^+} \frac{(e^\varepsilon d_c+1)^2}{e^\varepsilon d_c(1-e^\varepsilon c\mu_X(\mathcal X))} \\[.5em]
        &= \lim_{\varepsilon\to 0^+} \frac{(d_c+1)^2}{d_c^2} <\infty,
    \end{align}
    since $d_c\in(0,1)$ is a strictly positive constant bounded from above by one.
    Define the ratio function,
    \begin{equation}
         R(\varepsilon) \coloneqq \frac{\Xi(\varepsilon,c)}{\eta_\text{TV}^2(\varepsilon)}.
    \end{equation}
    By the above, we see that $R(\varepsilon)$ is positive bounded and continuous on $(0,\varepsilon^*)$. Define the function $R: [0,\varepsilon^*] \to \mathbb R$ by continous extension. This function is a positive continuous function on a compact interval, and hence attains a finite maximum and minimum. Let therefore,
    \begin{equation}
        a(c) \coloneqq \min_{\varepsilon\in[0,\varepsilon^*]}R(\varepsilon) \leq \max_{\varepsilon'\in[0,\varepsilon^*]}R(\varepsilon) \eqqcolon A(c).
    \end{equation}
    Since by we have $0<a(c)\leq A(c) <\infty$, this finishes the proof.

\section{Equivalence of the Hellinger and total variation metrics on discrete spaces}
\label{app:lem:TV^2_H^2}
We show that on $\mathcal P_c$, the total variation distance and the Hellinger distance are equivalent metrics. Consider the following lemma. 

\begin{lemma}
\label{lem:TV^2_H^2}
     For any discrete distributions $P_X, Q_X \in\mathcal P_c$, with slight abuse of notation let $P_X(x),Q_X(x)$ denote the $P_X(\{x\}), Q_X(\{x\})$. We have,
    \begin{equation}
        \text{TV}^2(P_X||Q_X) \geq c \, H^2(P_X||Q_X).
    \end{equation}
\end{lemma}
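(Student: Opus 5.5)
We compare the two divergences term by term over the alphabet, using the floor $P_X(x),Q_X(x)\ge c$ and one application of Cauchy--Schwarz. Write $p_x=P_X(x)$ and $q_x=Q_X(x)$. With the normalization $H^2(P_X||Q_X)=\sum_x(\sqrt{p_x}-\sqrt{q_x})^2$ (the $f$-divergence with $f(t)=(\sqrt t-1)^2$ used in the paper), the key factorization is
\begin{equation}
    (\sqrt{p_x}-\sqrt{q_x})^2=\frac{(p_x-q_x)^2}{(\sqrt{p_x}+\sqrt{q_x})^2}.
\end{equation}

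First, bound each denominator from below using only the interior constraint. Since $p_x,q_x\ge c$, we have $(\sqrt{p_x}+\sqrt{q_x})^2\ge 4c$. This gives
\begin{equation}
    H^2(P_X||Q_X)\le \frac{1}{4c}\sum_x (p_x-q_x)^2 .
\end{equation}

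Second, relate the squared $\ell_2$ norm to the squared $\ell_1$ norm. The inequality $\sum_x (p_x-q_x)^2\le \big(\sum_x|p_x-q_x|\big)^2$ holds because $\|v\|_2\le\|v\|_1$. The right-hand side equals $4\,\text{TV}^2(P_X||Q_X)$. Combining the two steps gives $H^2\le \text{TV}^2/c$, which is exactly the claim $\text{TV}^2(P_X||Q_X)\ge c\,H^2(P_X||Q_X)$.

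Third, handle the normalization convention. If $H^2$ carries a factor $\tfrac12$, the resulting constant only improves, so the stated inequality still holds. For the converse direction needed in Theorem~\ref{thm:hypothesistestingH^2asymp}, the standard bound $\text{TV}^2\le H^2$ (already invoked in the paper) suffices. No step is a genuine obstacle; the only care needed is matching the Hellinger normalization.
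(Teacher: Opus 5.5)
Your proof is correct and follows essentially the same route as the paper: the factorization $(\sqrt{p_x}-\sqrt{q_x})^2=(p_x-q_x)^2/(\sqrt{p_x}+\sqrt{q_x})^2$, the floor bound $(\sqrt{p_x}+\sqrt{q_x})^2\ge 4c$, and $\|\cdot\|_2\le\|\cdot\|_1$ together give $H^2\le \text{TV}^2/c$. The Cauchy--Schwarz step you announce at the start is never actually needed, since $\|v\|_2\le\|v\|_1$ follows simply by expanding the square of the sum.
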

\begin{proof}
    To see this, observe the following relation. 
    \begin{align}
        H^2(P_X||Q_X) &= \sum_{x\in \mathcal X}\left(\sqrt{P_X(x)}-\sqrt{Q_X(x)}\right)^2 
        = \sum_{x\in \mathcal X} \frac{(P_X(x)-Q_X(x))^2}{(\sqrt{P_X(x)}+\sqrt{Q_X(x)})^2} \\[.5em]
        &\leq \sum_{x\in\mathcal X} \frac{(P_X(dx)-Q_X(dx))^2}{4c} = \frac{||P_X-Q_X||_2^2}{4c}\\[.5em] &\leq \frac{||P_X-Q_X||_1^2}{4c} = \frac{1}{c}\text{TV}^2(P_X||Q_X),
    \end{align}
    where we used the fact that $||\cdot||_2 \leq ||\cdot||_1$ in the second to last inequality.
    \end{proof}
    With the standard inequality $\text{TV}(P_X||Q_X)\leq H(P_X||Q_X)$ \cite[(7.22)]{Polyanskiy_Wu_2025} this implies that, whenever the space $\mathcal X$ is discrete,
    \begin{equation}
        \sqrt{c}H(P_X||Q_X) \leq \text{TV}(P_X||Q_X)\leq H(P_X||Q_X) \quad \text{on } \mathcal P_c(\mathcal X),
    \end{equation}
    or equivalently, $H(P_X||Q_X) \asymp \text{TV}(P_X||Q_X)$ on $\mathcal P_c$. In other words, if a sequence of distributions $\{P^n_X\}_n \subset \mathcal P_c$ converges to a distribution $\bar P_X \in\mathcal P_c$ in terms of total variation at a certain rate, it converges in terms of Hellinger distance at the same rate (up to multiplicative constants).

\section{Proof of Lemma \ref{lem:multidimpacking}}
\label{app:packingproof}
We construct a packing analogous to the one-dimensional packing used in the proof of Proposition \ref{prop:meanestimation}. Note that we will glance over the measure-theoretic details of absolute continuity here, and the reader should take the distributions defined on atoms simply as a shorthand for absolutely continuous distributions with mass uniformely spread over a small interval, analogous to the the constructions in the proof of Proposition \ref{prop:meanestimation}. For each element in $v\in\mathcal \{\pm 1\}^d$, define, $P_v = c\mu_X +  d_c \cdot \bar P_v$, where $\mu_X$ is the Lebesgue measure restricted to the unit ball, and $X\sim\bar P_v$ is constructed in the following way: Let $J\sim \text{Unif}(\{1,\dots,d\})$. Given a realization $J=j$, $X$ is chosen as,
\begin{equation}
    X= \begin{cases}
        e_j &\text{ w.p. }\frac{1+\delta  v_j}{2},\\
        -e_j &\text{ w.p. }\frac{1-\delta v_j}{2}.
    \end{cases}
\end{equation}
$P_v$ is in $\mathcal P_c(\mathcal X)$ for any $v$ by construction. For each $P_v$, we find the mean of $X$ as,
\begin{equation}
   \theta(P_v) = \mathbb E_{X\sim P_v}[X] = c\int_{\mathcal B_d^p}x\, d\mu_X(x) + (1-c\mu_X(\mathcal X))\mathbb E_{X\sim \bar P_v}[X] = 0 + \frac{\delta d_c}{d}v.
\end{equation}
Hence, the squared Euclidean distance between two parameter instances $v\neq v'$ from the packing set is,
\begin{equation}
\label{eq:paramdiff_meanpacking}
    ||\theta(P_v)- \theta(P_{v'})||_2^2 = \frac{\delta^2d_c^2}{d^2}||v-v'||_2^2.
\end{equation}
Since $v,v' \in \{-1,1\}^d$, we have,
\begin{equation}
    ||v-v'||_2^2 = 4\sum_{i=1}^d \mathbf 1\{v\neq v'\} = 4\,d_H(v,v'),
\end{equation}
where $d_H(\cdot,\cdot)$ denotes the Hamming-distance between the two points. 
Plugging this bound into \eqref{eq:paramdiff_meanpacking}, we see that,
\begin{equation}
     ||\theta(P_v)- \theta(P_{v'})||_2 \geq \frac{4\delta^2d_c^2}{d^2}d_H(v,v').
\end{equation}
What remains is to bound the total variation distance between two instances $P_v,P_{v'}$. To this end, notice that we have $\text{TV}(P_v||P_{v'}) = d_c \text{TV}(\bar P_v||\bar P_{v'})$. For each $v\in\mathcal \{\pm 1\}^d$, $\bar P_v$ is a distribution supported on the set $\{\pm e_j\}_{=1}^d$, and since $j$ is sampled uniformly, we have,
\begin{equation}
    \bar P_v(\pm e_j) = \frac{1}{d}\frac{1\pm\delta v_j}{2}.
\end{equation}
Since the densities only differ when $v_j\neq v'_j$, and each differing coordinate contributes $\delta /d$, we get that,
\begin{equation}
    \text{TV}(P_v||P_{v'}) = \frac{d_c}{d}\sum_{j=1}^d \delta\mathbf 1\{v_j\neq v'_j\} = \frac{\delta d_c}{d} d_H(v,v'),
\end{equation}
which is what we wanted to show. \qed

\section{Proof of Proposition \ref{prop:highd-estimator}}
\label{app:highd-estimator-proof}
    The statement follows by appliyng the DJW construction in \cite[(26a)]{duchi2013local} analogous to Proposition~\ref{prop:meanestimation}: given $X=x$, quantize to the vector $Z=z$ with,
    \begin{equation}
        Z = \begin{cases}
            \frac{-x}{||x||_2}, &\text{with probability }\frac{1}{2} - \frac{||x||_2}{2}, \\
            \frac{+x}{||x||_2}, & \text{with probability}\frac{1}{2} + \frac{||x||_2}{2}.
        \end{cases}
    \end{equation}
    Decompose $Z = S U$, where $U = X/||X||_2$ is the direction of $X$ and $S =\pm 1$ is the sign of the quantizer. We privatize $S$ to $\bar S = \mathsf K\circ S$ with,
    \begin{equation}
        \mathsf K = \frac{1}{e^\varepsilon (1-2p_{\min})+1} \begin{bmatrix}
            e^\varepsilon(1-p_{\min}) & 1-e^\varepsilon p_{\min} \\
            1-e^\varepsilon p_{\min} & e^\varepsilon(1-p_{\min}),
        \end{bmatrix}
    \end{equation}
    if $0\leq p_{\min} < e^{-\varepsilon}$, and with identity otherwise, where $p_{\min}$ is to be picked later. Let $\eta = \eta_\text{TV}(\mathsf K) = \min\{1,\,(e^\varepsilon-1)/(e^\varepsilon(1-2p_{\min})+1)\}$. Let the output of the privacy mechanism be used to compute $\bar Z = \bar S U$. For $B>0$, we sample $Y$ as,
    \begin{equation}
        Y \sim \text{Uniform}(y\in\mathbb R^d: \langle \bar z,y\rangle >0, ||y||_2=B).
    \end{equation}
    To achieve $\mathbb E[Y|X=x]=x$ with this estimator, it is shown in \cite{duchi2013local} that we need to pick,
    \begin{equation}
        B= \eta^{-1} \cdot \frac{d}{2}\cdot \frac{\sqrt{\pi}\Gamma(\frac{d-1}{2}+1)}{\Gamma(\frac{d}{2}+1)}.
    \end{equation}
    What remains is to choose $p_{\min}$ such that the whole procedure is $\varepsilon$-PML on $\mathcal P_c$. The conditional density of $Y=y$ given $X=x$ is given by,
    \begin{align}
        &p(y\mid x) = \\&p(y\mid U=u, \bar S=+1)\mathbb P[\bar S=+1\mid X=x] + p(y\mid U=u, \bar S=-1)\mathbb P[\bar S=-1\mid X=x],
    \end{align}
    and given $U=u$, and $\bar S = \bar s$, $Y$ is distributed uniformly on the half-sphere determined by $\text{sgn}\langle y,u\rangle > \bar s$. Let $A$ denote the surface of the sphere $\{y:||y||_2=B\}$. The probability of the signs is,
    \begin{equation}
        \mathbb P[\bar S=\bar s\mid X=x] = \frac{1}{2}+\frac{||x||_2}{2} \eta\,\text{sgn}(\bar s). 
    \end{equation}
    Therefore, the conditional density of $Y=y$ given $X=x$ is,
    \begin{equation}
        p(y\mid x) = \frac{2}{A} \left(\frac{1}{2}+\frac{\eta}{2}||x||_2\,\text{sgn}\langle y,x\rangle \right) \leq \frac{2}{A} \frac{1+\eta}{2},
    \end{equation}
    where the upper bound follows from $||x||_2\leq ||x||_p\leq 1$.
    The marginal $p_Y$ of $Y$ with input distribution $P_X$ with density $p_X$ is given by,
    \begin{equation}
       p_Y(y) = \int_{\mathcal B_d^p} \frac{2}{A} \left(\frac{1}{2}+\frac{\eta}{2}||x||_2\,\text{sgn}\langle y,x\rangle \right)p_X(x) \,d\mu_X = \frac{2}{A} \left(\frac{1}{2}+\frac{\eta}{2}\int_{\mathcal B_d^p}||x||_2\,\text{sgn}\langle y,x\rangle p_X(x) d\mu_X \right).
    \end{equation}
    We aim to lower-bound this term for any $P_X\in\mathcal P_c$. Note that we can decompose $P_X = c \mu_X + (1-c\mathbb V(\mathcal B_d^p))\bar P_X$, where $\bar P_X$ some distribution on $\mathcal B_d^p$ and $\mu_X$ is the Lebesgue measure restricted to the ball. Therefore, we can write,
    \begin{equation}
        \int_{\mathcal B_d^p}||x||_2\,\text{sgn}\langle y,x\rangle p_X(x) d\mu_X= c\int_{\mathcal B_d^p} ||x||_2\text{sgn}\langle y,x\rangle d\mu_X + (1-c\mathbb V(\mathcal B_d^p))\int_{\mathcal B_d^p}||x||_2\text{sgn}\langle y,x\rangle \bar p_X(x) d\mu_X. 
    \end{equation}
    The first term evaluates to zero by the symmetry of the unit ball. For the second term, note again that $0\leq ||x||_2\leq ||x||_p\leq 1$, $-1\leq\text{sgn}(\langle y,x\rangle)\leq1$, and $\bar p(x)$ is an arbitrary density, so,
    \begin{equation}
        \int_{\mathcal B_d^p}||x||_2\text{sgn}\langle y,x\rangle \bar p_X(x) d\mu_X \geq -1.
    \end{equation}
    We get that,
    \begin{equation}
        p_Y(y) \geq \frac{2}{A} \left(\frac{1}{2}-\frac{\eta}{2}(1-c\mathbb V(\mathcal B_d^p))\right).
    \end{equation}
    Hence, the pointwise maximal leakage of the procedure is bounded by,
    \begin{equation}
        e^{\ell(X\to y)} \leq  \frac{1+\eta}{1-\eta(1-c\mathbb V(\mathcal B_d^p))}.
    \end{equation}
    We require this term to be bounded by $e^\varepsilon$. In the case $\eta < 1$, rearranging yields,
    \begin{equation}
        \frac{1+\eta}{1-\eta(1-c\mathbb V(\mathcal B_d^p))} \leq e^\varepsilon \iff \eta \leq \frac{e^\varepsilon-1}{e^\varepsilon(1-c\mathbb V(\mathcal B_d^p))+1}.
    \end{equation}
    If $\eta =1$, then $\mathsf K$ is identity, and this aligns with the condition of identity in the definition of $\mathsf K$ if $p_{\min} = c\mathbb V(\mathcal B_d^p)/2$. 
    Hence, if we choose $p_{\min} = (c\mathbb V(\mathcal B_d^p))/2$ in the construction of $\mathsf K$, then the estimation procedure satisfies $\varepsilon$-PML on $\mathcal P_c$ for any $\varepsilon\geq 0$. With this choice, the procedure achieves, 
    \begin{equation}
        \mathfrak R^{\varepsilon,c}_n(\theta(\mathcal P),||\cdot||_2^2) \lesssim \frac{d}{n}\max\left\{1,\,\left(\frac{e^\varepsilon(1-c\mathbb V(\mathcal B_d^p))+1}{e^\varepsilon-1}\right)^2\right\} = \frac{d}{n\,\eta_\text{TV}^2(\varepsilon)},
    \end{equation}
    which is what we wanted to show.\qed

\section{Optimal High-Dimensional Mean Estimator}
\label{app:proofVertextTransEst}
The construction follows from the following lemma.
\begin{lemma}
\label{lem:couplinglemma}
    Let $\mathcal Z_0$ be a vertex-transitive $\rho$-covering of the unit sphere and define $\mathcal Z \coloneqq  \sec(\rho)\mathcal Z_0\implies \text{conv}(\mathcal Z) \supseteq \mathcal B_d^2$. Then there exists a coupling $\pi(x,y)$ between $X \sim \text{Unif}(\mathcal B_d^2)$  $Z\sim\text{Unif}(\mathcal Z)$ such that  and $\mathbb E[Z\mid X] = X$. 
\end{lemma}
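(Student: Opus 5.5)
The coupling will come from Strassen's theorem on martingale couplings, as the paper suggests. Strassen's theorem says that for probability measures $\mu,\nu$ on $\mathbb R^d$ with finite first moments, a coupling $(X,Z)$ with $X\sim\mu$, $Z\sim\nu$ and $\mathbb E[Z\mid X]=X$ exists if and only if $\mu\preceq_{\mathrm{cx}}\nu$. Here $\preceq_{\mathrm{cx}}$ is convex order: $\int\phi\,d\mu\le\int\phi\,d\nu$ for every convex $\phi$. So the whole task reduces to showing
\begin{equation}
\frac{1}{\mathbb V(\mathcal B_d^2)}\int_{\mathcal B_d^2}\phi(x)\,dx\;\leq\;\frac{1}{|\mathcal Z|}\sum_{z\in\mathcal Z}\phi(z)
\end{equation}
for all convex $\phi$.

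I would prove this in three steps.

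\emph{Step 1: domination by the group average.} Let $G$ be the finite orthogonal group acting transitively on $\mathcal Z_0$, and hence on $\mathcal Z$. For convex $\phi$, define the symmetrized function
\begin{equation}
\bar\phi(x)\coloneqq\frac{1}{|G|}\sum_{g\in G}\phi(gx).
\end{equation}
Then $\bar\phi$ is convex and $G$-invariant. Because the uniform distribution on the ball is invariant under orthogonal maps, $\int\phi\,d\mathrm{Unif}(\mathcal B_d^2)=\int\bar\phi\,d\mathrm{Unif}(\mathcal B_d^2)$. Likewise, since $G$ acts transitively on $\mathcal Z$ and each orbit map $g\mapsto gz$ has fibers of equal size, the uniform measure on $\mathcal Z$ is $G$-invariant, so $\frac{1}{|\mathcal Z|}\sum_z\phi(z)=\frac{1}{|\mathcal Z|}\sum_z\bar\phi(z)$. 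Finally, $\bar\phi$ is constant on the single orbit $\mathcal Z$; call this constant $\bar\phi(z_1)$. It therefore suffices to show $\bar\phi(x)\le\bar\phi(z_1)$ for every $x\in\mathcal B_d^2$.

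\emph{Step 2: the convex hull contains the ball.} We need $\mathcal B_d^2\subseteq\mathrm{conv}(\mathcal Z)$, which is the implication the lemma asserts. I would verify it with a support-function argument. For a unit vector $u$, choose $z_0\in\mathcal Z_0$ with angle at most $\rho<\pi/2$ to $u$. Then $\langle u,\sec(\rho)z_0\rangle\ge\sec(\rho)\cos\rho=1$. So the support function of $\mathrm{conv}(\mathcal Z)$ is at least $1$ in every direction, which gives containment of the unit ball.

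\emph{Step 3: conclude.} Any $x\in\mathcal B_d^2$ can now be written as $x=\sum_i\lambda_i z_i$, a convex combination of points of $\mathcal Z$. By convexity of $\bar\phi$ and its constancy on $\mathcal Z$,
\begin{equation}
\bar\phi(x)\le\sum_i\lambda_i\bar\phi(z_i)=\bar\phi(z_1).
\end{equation}
Integrating over $x$ gives the convex order. Strassen's theorem then yields a coupling $\pi$ with $\mathbb E[Z\mid X]=X$, where $X\sim\mathrm{Unif}(\mathcal B_d^2)$ and $Z\sim\mathrm{Unif}(\mathcal Z)$.

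The main obstacle is Step 1, specifically checking that $\mathrm{Unif}(\mathcal Z)$ is $G$-invariant and that $\bar\phi$ is constant on $\mathcal Z$. Both rest on transitivity together with the fact that each $g$ is a bijection of $\mathcal Z$, as the paper's footnote notes. The remaining ingredients are routine. Strassen's theorem applies because both measures have compact support. Existence of the martingale coupling could alternatively be obtained constructively: decompose each $x$ into a convex combination of points of $\mathcal Z$ and then correct the resulting marginal by $G$-averaging. I would present the Strassen route, since it is cleaner.
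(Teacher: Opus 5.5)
Your proposal is correct and follows essentially the same argument as the paper. In both, Strassen's theorem reduces the claim to convex order, the $G$-average $\bar\phi$ is constant on the single orbit $\mathcal Z$, and convexity bounds $\bar\phi$ on $\mathcal B_d^2\subseteq\mathrm{conv}(\mathcal Z)$ by that constant. Two small improvements on the paper: you verify the containment $\mathcal B_d^2\subseteq\mathrm{conv}(\mathcal Z)$ by a support-function argument, which the paper only asserts, and you use $G$-invariance of $\mathrm{Unif}(\mathcal Z)$ in place of the orbit--stabilizer count.
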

To construct the estimator, let $P_{Z|X=x}$ be the conditional distribution induced by the coupling in Lemma \ref{lem:couplinglemma}. For each $X=x_i$, sample $Z_i\sim P_{Z|X=x_i}$ and compute $\hat \theta(Z^n) = \frac{1}{n}\sum_{i=1}^n Z_i$. This is an unbiased estimator by the definition of $P_{Z|X}$. Further, it achieves $\mathsf{Var}(\hat \theta(Z^n))\leq \sec^2(\rho)/n$. What remains is to show the privacy guarantee. To this end, let $f_X$ denote the density associated with some $P_X\in \mathcal P_c$. We see that,
\begin{align}
    \mathbb P[Z=z] &= \int_{\mathcal B_d^2} \mathbb P[Z=z\mid X=x]dP_X(x)
     \geq c\mathbb V(\mathcal B_d^2)\int_{\mathcal B_d^2}\mathbb P[Z=z|X=x]dU(x) \\[.5em]
    &= c\mathbb V(\mathcal B_d^2)\int_{\mathcal B_d^2}\frac{\pi(x,z)}{dU_X(x)}dU_X(x) = \frac{c\mathbb V(\mathcal B_d^2)}{M},
\end{align}
where the last step follows from the definition of the coupling $\pi(x,z)$, which has uniform marginals. Since $Z$ is discrete with minimum probability $\min_{z}\mathbb P[Z=z] = c\mathbb V(\mathcal B_d^2)/M$, an estimator constructed in this way satisfies $(\log M-\log c\mathbb V(\mathcal B_d^2))$-PML. Note that this only proves the \emph{existence} of such an estimator. For implementations, the mapping resulting from the coupling in Lemma \ref{lem:couplinglemma} needs to be made explicit, or approximated.

\subsection*{Proof of Lemma \ref{lem:couplinglemma}}
According to Strassen's Theorem \cite{strassen1965existence}, the suggested coupling exists if and only if $\mathbb E[u(X)]\leq \mathbb E[u(Z)]$ for all convex functions $u$ \cite{armbruster2016short}. Fix any such convex $u$ and let $G$ denote a transitive group on $\mathcal Z$ (at least one such group exists by the assumptions made in Remark \ref{rem:vertextransEst}). Define,
\begin{equation}
   \tilde u(x) \coloneqq \frac{1}{|G|}\sum_{g\in G}u(gx).
\end{equation}
Since $\mathcal B_d^2$ is invariant to rotation and reflection about the origin, and each $g\in G$ is a bijection, we have that,
\begin{equation}
    \int_{\mathcal B_d^2} u(x) dP_X(x) = \int\tilde u(x)dP_X(x).
\end{equation}
By definition, each $z_k$ is an extreme point of $\text{conv}(\mathcal Z)$. Fix any $z_k\in\mathcal Z$. We have,
\begin{equation}
    \tilde u(z_k) = \frac{1}{|G|}\sum_{g\in G}u(gz_k).
\end{equation}
We have that, $|\{g\mid gz_i = z_k\}| = |G_{z_k}|$ for any $i\in[M]$, where $G_{z_k}$ denotes the stabilizer of $z_k$, and by the Orbit-Stabilizer Theorem \cite[Proposition 6.9.2]{artin2010algebra}, we have,
\begin{equation}
    |G| = |G_{z_k}|\cdot |G z_k| \implies |\{g\mid gz_k=z_i\}| = \frac{|G|}{M}.
\end{equation}
Hence $gz_k=z_i$ for exactly $|G|/M$ elements of $g$, hence we get,
\begin{equation}
    \tilde u(z_k) = \frac{1}{|G|}\sum_{g\in G}u(gz_k) =  \frac{1}{|G|}\sum_{j=1}^M\frac{|G|}{M}u(z_j) = \frac{1}{M}\sum_{j=1}^M u(gz_j) \eqqcolon C = \text{const.}
\end{equation}
Because $\tilde u$ is convex, the maximum value of $\tilde u$ on $\text{conv}(\mathcal Z)$ is attained on an extreme points. This yields,
\begin{equation}
    \mathbb E[u(X)] = \int_{\mathcal B_d^2} \tilde u(x)dP_X(x) \leq \max_{z\in\mathcal Z}\tilde u(z) = C = \frac{1}{M}\sum_{j=1}^Mu(z_j) = \mathbb E[u(Z)]. 
\end{equation}
Since $u$ was picked arbitrarily, Strassen's theorem applies, and we have shown the statement. \qed